\preto{\@verbatim}{\topsep=0pt \partopsep=0pt }
\newtheorem{theorem}{Theorem}[section]
\crefname{theorem}{theorem}{theorems}
\newtheorem{prop}[theorem]{Proposition}
\crefname{prop}{proposition}{propositions}
\newtheorem{lemma}[theorem]{Lemma}
\crefname{lemma}{lemma}{lemmas}
\numberwithin{equation}{section}
\numberwithin{algorithm}{section}
\theoremstyle{remark}
\newtheorem*{remark}{Remark}
\setlist{itemsep=0ex, topsep=1ex, parsep=0ex}
\def\laz{\lambda_z}
\def\mean{\operatorname{\mathbb E}}
\def\pr{\operatorname{\mathbb P}}
\def\rx{\epsilon}
\def\Vrx{\Cal E}
\def\const{\operatorname{const.}}
\def\sinc{\operatorname{sinc}}
\def\sign{\operatorname{sign}}
\def\XI{\Cal K}
\def\cdf{c.d.f\@cleandot}
\def\pdf{p.d.f\@cleandot}
\def\pmf{p.m.f\@cleandot}
\def\lhs{l.h.s\@cleandot}
\def\rhs{r.h.s\@cleandot}
\newcommandx\hAT[2][1=\theta,2=\alpha]{h_{#2}({#1})}
\newcommandx\HAT[2][1=\theta,2=\alpha]{H_{#2}({#1})}
\newcommandx\dHAT[2][1=\theta,2=\alpha]{H'_{#2}({#1})}
\newcommandx\JAT[2][1=\theta,2=\alpha]{J_{#2}({#1})}
\newcommandx\IJAT[2][1=\theta,2=\alpha]{J^{-1}_{#2}({#1})}
\newcommandx\dIJAT[2][1=\theta,2=\alpha]{(J^{-1}_{#2})'({#1})}
\newcommandx\XiAT[2][1=\theta,2=\alpha]{\Xi_{#2}({#1})}
\newcommandx\XIAT[2][1=\theta,2=\alpha]{\XI_{#2}({#1})}
\def\alz{_{\alpha,z}}
\def\Id{\operatorname{Id}}
\def\Dexp{\operatorname{Exp}}
\def\Dunif{\operatorname{Unif}}
\newlist{subenum}{enumerate}{3}
\setlist[subenum,1]{label={\alph*}),leftmargin=5ex}
\def\phlabel#1{\phantomsection\label{#1}}
\def\phref#1#2{\ifthenelse{\boolean{prooflink}}{\hfill\hyperref[#1]{#2}}{}}
\begin{document}
\begin{center}
  \bf\large Complexity of exact sampling of the first passage
  of a stable subordinator
  \\[1ex]
  \rm\normalsize Zhiyi Chi\footnote{
    Department of Statistics, University of Connecticut, U.S.A.
    Email: zhiyi.chi@uconn.edu
  }
\end{center}
\begin{abstract}
  We consider the exact sampling of the first passage of a stable
  subordinator across a non-increasing regular barrier.  First, the
  sampling is reduced to one from a bivariate distribution
  parameterized by the index $\alpha$ of the subordinator and a scalar
  $z$ independent of the barrier.  Then three algorithms are devised
  for different regions of $(\alpha, z)$, using the
  acceptance-rejection method without numerical inversion or
  integration.  When combined, the algorithms allow the exact sampling
  of the first passage to be done with complexity $O(1+
  |\ln(1-\alpha)|)$.
  

\end{abstract}

\section{Introduction} \label{s:intro}
The first passage event of a stochastic process is an important topic
in probability (cf.\ \cite{asmussen:10:wsp,
  kyprianou:14:sh} and the numerous references therein).  Due to
potential applications, the exact sampling of the first passage of a
stable or tempered stable subordinator was investigate in depth
recently \cite{chi:16:spa, dassios:20:acm, gonzalez:23:arxiva}.  In
\cite{gonzalez:23:arxiva}, a fast algorithm was developed to sample
the first passage of a stable 
subordinator of index $\alpha\in (0,1)$.  When the barrier is
non-increasing and regular enough, e.g.\ differentiable, the
complexity of the algorithm is $O(\delta^{-3} + |\ln \alpha| + \ln
N)$, where $\delta$ denotes $1-\alpha$ from now on and $N$ is the
required number of precision bits of the output.  Although $N$ is
exogenous to the first passage, it is an integral part of the
bound as the algorithm requires numerical inversions and
integrations.  This paper shows that exact sampling of the first
passage can be done at complexity $O(1+|\ln \delta|)$ without
numerical inversions or integrations.

In exact sampling, it is generally accepted that numerical inversion or
integration cannot be done exactly in a finite amount of time, while
closed form functions are idealized to allow exact evaluation in a
fixed amount of time.  By convention, power function, exponential
function, logarithm, and trigonometric functions are closed form
functions.  In this paper, the Gamma function $\Gamma(x)$ is also
regarded as closed form.  However, in reality, no real-valued
functions can be evaluated exactly by a digital computer, and
non-exact operations such as numerical inversion or integration can be
highly precise and fast, while exact methods such as the series method
\cite {devroye:86:sv-ny} can have large numerical errors (\cite
{hormann:04:sv-b}, p.~11).  Thus, the notion of exact sampling 
is not entirely aligned with numerical precision.  In view of
this, to make sense of the exact sampling that we will search for,
we conceptualize it as encoding a distribution in terms of a
relatively small (random) number of \iid $\Dunif(0,1)$ variables and
calls to evaluate closed form functions, where $\Dunif(a,b)$ denotes
the uniform distribution on $(a,b)$.  This interpretation turns the
issue of exact sampling into one of representation, hence bypassing
the practical issue of numerical precision.  It underlies almost every
acceptance-rejection (AR) algorithm in practice.

Let $S=(S_t)_{t\ge0}$ be a stable subordinator with $\mean(e^{-rS_t})
= \exp(-t r^\alpha)$, $r\ge0$.  The first passage of $S$ across a
barrier is characterized by the time $\tau$ when it occurs as well as
the pre-passage ``undershoot'' $S_{\tau-}$ and the jump $\Delta S_\tau
= S_\tau - S_{\tau-}$ at the moment of passage.  By definition, the
barrier is a function $b(t)$ and $\tau = \inf\{t>0: S_t > b(t)\}$.
For a non-increasing regular barrier, $\tau$ is easy to sample
\cite{chi:16:spa}, so it is enough to consider how to sample
$(S_{\tau-}, \Delta S_\tau)$  conditioning on $\tau$.  The first
passage occurs either by creeping, i.e., continuously moving across
the barrier, or by jumping across the barrier.  These two cases are
equivalent to $\Delta S_\tau=0$ and $\Delta S_\tau>0$, respectively.
The conditional probability of creeping given $\tau$ is found in
\cite{gonzalez:23:arxiva}, so the problem is reduced to the sampling
of $(S_{\tau-}, \Delta S_\tau)$ in the event of passage by jumping,
which can be further reduced to that of $S_{\tau-}$ alone; see
Algorithm SFP-Alg in \cite {gonzalez:23:arxiva} .  Thus, the
complexity of the sampling of the first passage is essentially
determined by how to sample $S_{\tau-}$ conditioning on $\tau$ and
$\Delta S_\tau$ being positive.

As will be seen shortly, the distribution of $S_{\tau-}$ conditioning
on $\tau=t$ and $\Delta S_\tau$ being positive can be reduced to one
parameterized by $\alpha$ and
\begin{align} \label{e:z-def}
  z = \delta [\alpha/b(t)]^{\alpha /\delta} t^{1/\delta}. 
\end{align}
Furthermore, $z$ has a representation independent of the barrier.  The
parameters play an important role in the paper.  In \cref{s:analysis},
the conditional sampling of $S_{\tau-}$ is formulated as the sampling
from a bivariate probability density function (\pdf) which is
parameterized by $\alpha$ and $z$.  The \pdf is the normalized
version of a certain function $\chi\alz$.  In \cref{s:general}, two
simple algorithms are given, one having complexity $O(1 + z^{-1} (1+
\delta/z)^2)$ and the other $O(\delta^{-2} / C_z)$, where $C_z>0$ is a
decreasing function of $z$.  Given any $\alpha_0\in (0,1)$ and
$z_0>0$, a combination of the algorithms can sample from the
normalized $\chi\alz$ with complexity uniformly bounded for all
$(\alpha,z)$ with $\alpha\le \alpha_0$ or $z\ge z_0$.  However, for
$(\alpha,z)$ close to $(1, 0)$, the combination has complexity
$O(\delta^{-2})$.  To reduce the complexity, most of the paper is
dedicated to the sampling for such $(\alpha,z)$.  In \cref{s:ar}, for
several elementary distributions, sampling routines are obtained with
complexity uniformly bounded for their parameters.  Using these
routines as building blocks, in \cref
{s:large-alpha-small-z,s:subroutine},  an algorithm is devised with
complexity $O(|\ln\delta|)$ for all $(\alpha,z)$ close to $(1,0)$.  By
combining the algorithm with those in \cref{s:general}, the normalized
$\chi\alz$ can be sampled with complexity $O(1 + |\ln\delta|)$ for all
$z$.  In \cref{s:experiments}, experiments that compared the
algorithms and incorporated them to sample the first passage
efficiently are reported.  All the proofs are collected in
\cref{s:appendix}.

In the rest of the section, we fix some notation and recall the AR
method.  The Gamma distribution with shape parameter $a>0$ and rate
$b>0$ has support $[0,\infty)$ with \pdf
\begin{align} \label{e:fp-gamma}
  \gamma_{a,b}(y) = \lfrac{b^a x^{a-1} e^{-bx}}{\Gamma(a)}, \quad
  y>0.
\end{align}
If $a=1$, the distribution is also known as the exponential
distribution with rate $b$, denoted $\Dexp(b)$.  Denote $\sign(x) =
\cf{x>0} - \cf{x<0}$, $x\wedge y = \min(x,y)$, $x\vee
y = \max(x,y)$, and $x_\pm = \max(0, \pm x)$.  By convention, if $x\ge
y$, $[x,y)=(x,y]=(x,y)=\emptyset$.  By sampling from a discrete set
with certain weights assigned to its elements, it means sampling from
the probability mass function obtained by normalizing the weights.

Let $f\ge0$ be a function on a Euclidean space with a finite positive integral.  To sample from the normalized $f$, i.e., the \pdf
$f/\int f$, the AR method uses an envelope (function) $g\ge f$ with
$\int g<\infty$ and proceeds as in \Cref{a:ar-basic}.
\begin{algorithm}[ht]
  \caption{Basic AR} \label{a:ar-basic}
  \begin{algorithmic}[1]
    \Repeat
    \State Sample $X\sim$ normalized $g$ and $U\sim\Dunif(0,1)$
    (independently)
    \Until $Ug(X)\le f(X)$. 
    \label{a:ar-general}
    \State \Return $X$. 
  \end{algorithmic}
\end{algorithm}
In the algorithm, when the condition on \cref{a:ar-general} is
satisfied, $X$ is said to be accepted.  The probability of acceptance
in each iteration is
\[
  p_{\rm accept}=\int \frac{f(x)}{g(x)} \frac{g(x)}{\int g}\,\dd x =
  \frac{\int f}{\int g}.
\]
The complexity of the algorithm is usually measured by $\mean(S)$ with
$S$ its total number of operations.  Let $m_i$ be the number of
operations in the $i$th iteration and $\xi_i$ the indicator that the
algorithm halts at the $i$th iteration.  Then $S=\sumoi k m_k
(1-\xi_1)\cdots(1-\xi_{k-1})$.  Since the $(m_i, \xi_i)$'s are \iid with
$\mean(\xi_1) = p_{\rm accept}$, $\mean(S) = \mean(m_1)/p_{\rm accept}$.

Let $f = \sum_{i\in I} f_i$ and $g = \sum_{i\in I} g_i$, where $I$ is
a finite set.  Suppose $f_i \le g_i$ and $\int g_i$ is known for all
$i$.  Then sometimes it is simpler to use \Cref{a:AR} to sample from
the normalized $f$.  Note that the algorithm is still AR.  If 
the above basic AR is used instead, then the condition on
\cref{a:ar-general2} is changed to $U g(X)\le f(X)$.  In either case,
the expected number of iterations is the same.
\begin{algorithm}[ht]
  \caption{AR with piecewise envelopes}\label{a:AR}
  \begin{algorithmic}[1]
    \Repeat
    \State Sample $Y\in I$ with weights $\int g_i$, $i\in I$.
    \State Sample $X\sim$ normalized $g_Y$ and
    $U\sim\Dunif(0,1)$.
    \Until $Ug_Y (X)\le f_Y(X)$.  \label{a:ar-general2} 
    \State \Return $X$.
  \end{algorithmic}
\end{algorithm}

\section{The sampling problem and related
  functions}  \label{s:analysis}
For $t>0$, the \pdf of $S_t$ is $p_t(x) = t^{-1/\alpha}
p(t^{-1/\alpha} x)$, where letting
\begin{align}\label{e:h-alpha}
  \hAT =\sin(\delta\theta) [\sin(\alpha \theta)]^{\alpha/\delta}
  (\sin\theta)^{-1/\delta},
\end{align}
$p$ is a \pdf supported on $[0,\infty)$ such that for $x>0$
\[
  p(x) = \frac{\alpha}{\delta\pi}\int^\pi_0
  \hAT x^{-1/\delta} \exp\{-\hAT x^{-\alpha/\delta}\}
  \,\dd\theta;
\]
see \cite {sato:99:cup}, p.~87 and \cite{zolotarev:66:stmsp}.
Let $b\in C^1((0,\infty))$ be non-increasing with $0<b(0+)<\infty$.
Let $\tau$ be the first passage time of $S$ across $b$.  It is known
that (cf.~\cite{chi:16:spa})
\begin{align} \label{e:fpt-S}
  \tau\sim B^{-1}(\varsigma), \quad\text{where~} \varsigma\sim p
  \text{~and~} B(t) = t^{-1/\alpha} b(t).
\end{align}
Note that $B$ is strictly decreasing and $B(\tau)>0$.  For $t>0$ with
$B(t)>0$, from \cite {gonzalez:23:arxiva}, $\pr\{\Delta S_\tau=0 \gv
\tau=t\} = -b'(t)/[-b'(t) + \alpha^{-1} t^{-1} b(t)]$.  On the other 
hand, from \cite{chi:16:spa},
\[
  \frac{\pr\{S_{\tau-}\in \dd u, \Delta S_\tau \in \dd v\gv \tau=t,
    \Delta S_\tau>0\}}{\dd u\,\dd v} 
  =
  \const\times p_t(u) \cf{0<b(t) - u < v} v^{-1-\alpha}.
\]
It follows that conditioning on $\tau=t$ and $S_{\tau-}=u\in (0,
b(t))$, $\Delta S_\tau\sim [b(t) - u]U^{-1/\alpha}$ with
$U\sim\Dunif(0,1)$.  Thus, the core issue is how to sample $S_{\tau-}$
conditioning on $\tau=t$ and $\Delta S_\tau$ being positive.  From the
above display,
\begin{align}\nonumber
  \frac{\pr\{S_{\tau-}\in \dd u\gv \tau=t, \Delta S_\tau>0\}}{\dd u}
  =
  \const\times [&b(t) - u]^{-\alpha} p(t^{-1/\alpha} u)
  \\\label{e:undershoot}
  =\const \times [b(t) - u]^{-\alpha} p(s u/b(t))
  &\quad\text{with~} s = B(t), \ u\in (0, b(t)).
\end{align}
Then conditioning on $\tau=t$ and $\Delta S_\tau$ being positive,
\[
  S_{\tau-}\sim b(t) X \quad\text{with $X$ having \pdf~}
  \const\times\cf{0<x<1} (1-x)^{-\alpha} p(sx).
\]
From \eqref{e:h-alpha}, $X$ can be coupled with a random variable
$\Theta\in (0,\pi)$ such that the \pdf of $(X, \Theta)$ at
$(x,\theta)\in (0,1)\times (0,\pi)$ is
\[
  \const\times (1-x)^{-\alpha} x^{-1/\delta}
  \hAT\exp\{-\hAT s^{-\alpha/\delta} x^{-\alpha/\delta}\}.
\]

Let $Y = X^{-\alpha/\delta}-1$.  Then the \pdf of $(Y,\Theta)$
at $(y,\theta)\in  (0,\infty) \times (0,\pi)$ is
\[
  \const\times [1-(y+1)^{-\delta/\alpha}]^{-\alpha} \hAT 
  \exp\{-\hAT s^{-\alpha/\delta} (y+1)\}.
\]
It is easy to check that as $\theta\to0+$, $\hAT\to\hAT[0]
:= \delta\alpha^{\alpha/\delta}$.  Let
\begin{align} \label{e:HAT-z}
  \HAT = \frac{\hAT}{\hAT[0]}, \quad
  z = \hAT[0] s^{-\alpha/\delta}.
\end{align}
From \eqref{e:fpt-S} and \eqref{e:undershoot}, the $z$ in
\eqref{e:HAT-z} is the same as the one in \eqref{e:z-def}.  Then the
\pdf of $(Y,\Theta)$ is the normalized 
\begin{align} \label{e:fpus-joint}
  \chi\alz(y,\theta)
  =
  [1-(y+1)^{-\delta/\alpha}]^{-\alpha} \HAT\exp\{-z\HAT (y+1)\}.
\end{align}

Clearly, if $(Y,\Theta)$ can be sampled, then $S_{\tau-}$ can be
sampled conditioning on $\tau=t$ and $\Delta S_{\tau-}$ being
positive.  To sample from the normalized $\chi\alz(y,\theta)$, we need
some properties of $\HAT$.  From \eqref{e:HAT-z},
\begin{align} \label{e:HAT}
  \HAT = \frac{\sinc(\delta\theta)}{\sinc(\theta)}
  \Sbr{\frac{\sinc(\alpha\theta)}{\sinc(\theta)}}^{\alpha/\delta},
\end{align}
where $\sinc(0)=0$ and $\sinc(x) = \sin(x)/x$ for $x\ne0$.  The
function $\HAT$ yields a simple representation of the $z$ in
\eqref{e:HAT-z}.  To see this, from \eqref{e:fpt-S},
$\varsigma=B(\tau)\sim p$.  Then from \cite{chambers:76:jasa}, 
\begin{align*}
  \varsigma
  \sim\frac{\sin(\alpha \Theta)}{[\sin(\Theta)]^{1/\alpha}}
  \Sbr{\frac{\sin(\delta\Theta)}\xi}^{\delta/\alpha}
  =\hAT[0]^{\delta/\alpha}
  \frac{\sinc(\alpha \Theta)}{[\sinc(\Theta)]^{1/\alpha}}
  \Sbr{\frac{\sinc(\delta\Theta)}\xi}^{\delta/\alpha}
\end{align*}
giving
\begin{align} \label{e:H-z-Exp}
  \hAT[0] \varsigma^{-\alpha/\delta}
  \sim \frac{[\sinc(\Theta)]^{1/\delta} \xi}{
    [\sinc(\alpha\Theta)]^{\alpha/\delta} \sinc(\delta\Theta)
  } = \frac\xi{\HAT[\Theta][\alpha]},
\end{align}
where $\Theta\sim\Dunif(0,\pi)$ and $\xi\sim\Dexp(1)$ are independent.
Then by \eqref{e:HAT-z}, $z$ is a sampled value of
$\xi/\HAT[\Theta] [\alpha]$, the distribution of which does not depend
on the barrier $b$.  Given $z$, the value of $\tau$ is $t = B^{-1}(s)$
with $s=[\hAT[0]/z]^{\delta/\alpha} = \alpha
(\delta/z)^{\delta/\alpha}$.  As a result, Algorithm SFP-Alg in
\cite{gonzalez:23:arxiva} can be reformulated into \Cref{a:fp}.   The
central role of the sampling of the normalized $\chi\alz$ is evident
in \Cref{a:fp}.  Indeed, it is the only non-elementary step in the
algorithm.  The sampling is the focus of the rest of the paper. 
\begin{algorithm}[t]
  \caption{Sampling of $(\tau, S_{\tau-}, \Delta S_\tau)$} \label{a:fp}
  \begin{algorithmic}[1]
    \State Sample $\Theta\sim\Dunif(0,\pi)$, $\xi\sim\Dexp(1)$,
    $U, V$ \iid$\sim \Dunif(0,1)$.
    \State $z\gets \xi/\HAT[\Theta]$, $s\gets \alpha
    (\delta/z)^{\delta/\alpha}$, $t\gets B^{-1}(s)$, $y\gets0$.
    \If{$U\ge-b'(t)/[-b'(t) + \alpha^{-1} t^{-1} b(t)]$}
    \State Sample $(y,\theta)\sim$ normalized $\chi_{\alpha,z}$.
    \EndIf 
    \State $x\gets (1+y)^{-\delta/\alpha}$.
    \State\Return $(t, b(t) x, b(t)(1-x) V^{-1/\alpha})$.
  \end{algorithmic}
\end{algorithm}

Since $\HAT\to1$ as $\alpha\to0$, define $\HAT[\theta][0] = 1$.  On
the other hand, as $\alpha\to1$,
\begin{align}\nonumber
  \HAT
  &\sim  \frac{1+o(1)}{\sinc(\theta)}
  \Sbr{\frac{\sinc(\alpha\theta)}{\sinc(\theta)}}^{\alpha/\delta}
  =\frac{1+o(1)}{\sinc(\theta)}
  \Sbr{1-\frac{\sinc(\theta)-\sinc(\alpha\theta)}
    {\sinc(\theta)}}^{\alpha/\delta} 
  \\\label{e:HAT1}
  &\to
  \nth{\sinc(\theta)}
  \exp\Sbr{-\frac{\theta\sinc'(\theta)}{\sinc(\theta)}}
  =\frac{\exp(1-\cos\theta/\sinc(\theta))}{\sinc(\theta)}
  := \HAT[\theta][1].
\end{align}
It is easy to see that given $\theta$, $\alpha^{-1}\ln\HAT$ as a
function of $\alpha$ is symmetric about $1/2$.  

\begin{prop} \label{p:HATH1} 
  For $\theta\in (0,\pi)$ and $\theta\in (0,1)$,
  \phref{p:p:HATH1}{Proof}
  \begin{align} \label{e:lHAT}
    \frac{\ln\HAT}\alpha
    =
    2\sum^\infty_{n=1}
    \frac{\zeta(2n)}{2n \pi^{2n}}
    \sum^{2n-1}_{k=0} (\alpha^k + \delta^k)\theta^{2n}
    \le \ln \HAT[\theta][1]
  \end{align}
  and $\HAT[\theta][1] = \exp\{\theta^2/2 + V_1(\theta)\}$ with
  $V_1(\theta) = \sum^\infty_{n=2}(2+1/n) \zeta(2n)
  (\theta/\pi)^{2n}$, where $\zeta(s)$ is the Riemann $\zeta$-function.
  Moreover, given $0<\theta_1 < \theta_2<\pi$,
  $\HAT[\theta_2]/\HAT[\theta_1]$ as a function of $\alpha$ is
  increasing on $[1/2,1]$.
\end{prop}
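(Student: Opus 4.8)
The plan is to linearise everything through the Euler product for the sine, which turns $\ln\HAT$ into an explicit power series in $\theta^{2}$ whose coefficients are elementary in $\alpha$; all three assertions are then facts about those coefficients. Taking logarithms in \eqref{e:HAT},
\[
  \ln\HAT=\bigl[\ln\sinc(\delta\theta)-\ln\sinc(\theta)\bigr]
  +\frac{\alpha}{\delta}\bigl[\ln\sinc(\alpha\theta)-\ln\sinc(\theta)\bigr],
\]
and from $\sin u=u\prod_{k\ge1}(1-u^{2}/k^{2}\pi^{2})$ one gets, for $|u|<\pi$,
\[
  \ln\sinc(u)=\sum_{k\ge1}\ln\Bigl(1-\frac{u^{2}}{k^{2}\pi^{2}}\Bigr)
  =-\sum_{n\ge1}\frac{\zeta(2n)}{n\pi^{2n}}\,u^{2n}.
\]
I would substitute this at $u=\delta\theta,\alpha\theta,\theta$ (all $<\pi$ for $\theta\in(0,\pi)$), collect powers of $\theta$, and use $1-\delta^{2n}=\alpha\sum_{k=0}^{2n-1}\delta^{k}$ and $\frac{\alpha}{\delta}(1-\alpha^{2n})=\alpha\sum_{k=0}^{2n-1}\alpha^{k}$ (which use $\alpha+\delta=1$) to obtain $\ln\HAT=\alpha\sum_{n\ge1}\frac{\zeta(2n)}{n\pi^{2n}}\theta^{2n}\sum_{k=0}^{2n-1}(\alpha^{k}+\delta^{k})$; dividing by $\alpha$ is the first equality in \eqref{e:lHAT}. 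Convergence for $\theta\in(0,\pi)$, and the legitimacy of rearranging the double series, follow from the bound $\sum_{k=0}^{2n-1}(\alpha^{k}+\delta^{k})\le 2n+1$ established next.

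For the inequality and the form of $\HAT[\theta][1]$: when $k\ge1$ and $0<\alpha<1$, $\alpha^{k}+\delta^{k}\le\alpha+\delta=1$, while $k=0$ contributes $2$, so $\sum_{k=0}^{2n-1}(\alpha^{k}+\delta^{k})\le 2+(2n-1)=2n+1$; comparing the series for $\ln\HAT/\alpha$ with this bound term by term gives $\ln\HAT/\alpha\le\sum_{n\ge1}\frac{(2n+1)\zeta(2n)}{n\pi^{2n}}\theta^{2n}$. Letting $\alpha\to1$ in the series for $\ln\HAT$ (dominated convergence, with $\sum_{n}\frac{(2n+1)\zeta(2n)}{n\pi^{2n}}\theta^{2n}<\infty$ as majorant) identifies the right-hand side as $\ln\HAT[\theta][1]$, and $(2n+1)/n=2+1/n$ puts it in the stated form; the $n=1$ term equals $3\zeta(2)(\theta/\pi)^{2}=\theta^{2}/2$, which peels off to leave $V_{1}(\theta)$ exactly as written. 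As a cross-check against the closed form in \eqref{e:HAT1}, differentiating $\ln\sinc(\theta)$ gives $1-\theta\cot\theta=-\theta\,(\ln\sinc(\theta))'=2\sum_{n}\zeta(2n)(\theta/\pi)^{2n}$, so $1-\theta\cot\theta-\ln\sinc(\theta)$ equals the same series.

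For the monotonicity: write $a_n(\alpha)=\alpha\sum_{k=0}^{2n-1}(\alpha^{k}+\delta^{k})$, so the series above gives
\[
  \ln\frac{\HAT[\theta_2]}{\HAT[\theta_1]}
  =\sum_{n\ge1}\frac{\zeta(2n)}{n\pi^{2n}}\bigl(\theta_2^{2n}-\theta_1^{2n}\bigr)\,a_n(\alpha),
\]
with every coefficient $\frac{\zeta(2n)}{n\pi^{2n}}(\theta_2^{2n}-\theta_1^{2n})$ strictly positive when $0<\theta_1<\theta_2<\pi$; it thus suffices that each $a_n$ be nondecreasing on $[1/2,1]$. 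The key step is the algebraic collapse $a_n(\alpha)=\alpha\sum_{k=0}^{2n-1}\delta^{k}+\alpha\sum_{k=0}^{2n-1}\alpha^{k}=\bigl[1-(1-\alpha)^{2n}\bigr]+\sum_{j=1}^{2n}\alpha^{j}$, a sum of two manifestly nondecreasing functions of $\alpha$ on $[0,1]$. Termwise differentiation of the $\theta$-series is allowed because $a_n'(\alpha)\le 2n^{2}+3n$ uniformly on $[0,1]$ supplies a summable majorant when $\theta_2<\pi$; hence $\alpha\mapsto\ln(\HAT[\theta_2]/\HAT[\theta_1])$ has nonnegative derivative on $[1/2,1]$, in fact strictly positive since its $n=1$ term is $a_1(\alpha)=3\alpha$, so the ratio is increasing.

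The only genuine idea is the collapse $\alpha\sum_{k=0}^{2n-1}\delta^{k}=1-(1-\alpha)^{2n}$, which produces the monotonicity of $a_n$ with no calculus. Everything else is bookkeeping: I expect the main place to stumble to be keeping the two geometric-series rewrites straight and handling the singular factor $\alpha/\delta$ as $\delta\to0$, which the power series defuses; the interchange of the double series, the limit $\alpha\to1$, and differentiation under the sum are routine, as every term in sight is bounded by a constant times $n^{2}(\theta/\pi)^{2n}$ with $\theta<\pi$.
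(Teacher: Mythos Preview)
Your proof is correct and follows essentially the same route as the paper: expand $\ln\sinc$ via the Euler product to get the $\zeta(2n)$ series, rewrite the coefficients using $1-\delta^{2n}=\alpha\sum_{k}\delta^{k}$ and $1-\alpha^{2n}=\delta\sum_{k}\alpha^{k}$, and then argue termwise. The one substantive difference is in the monotonicity step: the paper observes that $\alpha^{k}+\delta^{k}$ is increasing on $[1/2,1]$ for $k\ge1$ (which genuinely needs $\alpha\ge1/2$), deduces that $\alpha^{-1}\ln(\HAT[\theta_2]/\HAT[\theta_1])$ is increasing there, and then multiplies back by $\alpha$; your algebraic collapse $a_n(\alpha)=[1-(1-\alpha)^{2n}]+\sum_{j=1}^{2n}\alpha^{j}$ shows directly that $a_n$ is increasing on all of $[0,1]$, which is slightly cleaner and in fact yields the conclusion on the whole interval, not just $[1/2,1]$.
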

Recall that $\zeta(0)=-1/2$, $\zeta(2) = \pi^2/6$,
and $\zeta(2n)>0$ for all $n\ge1$ (\cite{NIST:10}, 25.6.1).  
\begin{prop} \label{p:H-sand}
  Define \phref{p:p:H-sand}{Proof}
  \[
    \XiAT = \frac{\sinc(\delta\theta)}{\sinc(\pi-\theta)} 
    \Sbr{\frac{\sinc(\pi-\alpha\theta)}
      {\sinc(\pi-\theta)}}^{\alpha/\delta}.
  \]
  Then
  \begin{align} \label{e:HAT-factor}
    \HAT = \XiAT \frac\theta{\pi-\theta}
    \Sbr{1 + \frac{\delta\pi}{\alpha(\pi-\theta)}}^{\alpha / \delta}.
  \end{align}
  Furthermore, for $\alpha\in (1/2,1)$, both $\XiAT$ and 
  \begin{align} \label{e:KAT}
    \XIAT:= \HAT
    \Sbr{1+\frac{\delta\pi}{\alpha(\pi-\theta)}}^{-1/\delta}
    = 
    \frac{\XiAT \theta}{\pi/\alpha-\theta}
  \end{align}
  are increasing on $[(\nth3+\nth{3\alpha})\pi, \pi)$.  Finally,
  \[
    [\ln\XiAT]' \le \theta \Rvl{\Sbr{\ln\nth{\sinc(t)}}''}_{t=\pi -
      \alpha\theta} =
    \theta\Sbr{\nth{\sin^2(\alpha\theta)} - \nth{(\pi
        - \alpha\theta)^2}}
  \]
  and
  \[
    [\ln\HAT]' = [\ln\XiAT]' + \frac{\pi^2}{\theta(\pi-\theta) (\pi -
      \alpha\theta)}.
  \]
\end{prop}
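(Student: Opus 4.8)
\emph{Overview and the factorization.} The plan is to prove the four assertions in turn, all resting on one algebraic identity for $\sinc$ and on elementary monotonicity/convexity of the function $\phi(t):=1/t-\cot t=-[\ln\sinc(t)]'$ on $(0,\pi)$; here $\sinc(t)=\prod_{k\ge1}(1-t^2/k^2\pi^2)$ gives $-\ln\sinc(t)=\sum_{m\ge1}\zeta(2m)\,t^{2m}/(m\pi^{2m})$, hence $\phi(t)=\sum_{m\ge1}2\zeta(2m)\,t^{2m-1}/\pi^{2m}$, a power series with positive coefficients and radius of convergence $\pi$. For \eqref{e:HAT-factor}: since $\sin(\pi-x)=\sin x$, one has $\sinc(x)=\sinc(\pi-x)(\pi-x)/x$ for $x\in(0,\pi)$. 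Applying this to the factors $\sinc(\theta)$ and $\sinc(\alpha\theta)$ in \eqref{e:HAT} and using $(\pi-\alpha\theta)/[\alpha(\pi-\theta)]=1+\delta\pi/[\alpha(\pi-\theta)]$, one rewrites $\HAT$ as the product of $\XiAT$, $\theta/(\pi-\theta)$, and $\{1+\delta\pi/[\alpha(\pi-\theta)]\}^{\alpha/\delta}$; this is \eqref{e:HAT-factor}. Multiplying by $\{1+\delta\pi/[\alpha(\pi-\theta)]\}^{-1/\delta}$ and using $(\alpha-1)/\delta=-1$ and $\alpha(\pi-\theta)+\delta\pi=\pi-\alpha\theta$ gives both forms of $\XIAT$ in \eqref{e:KAT}. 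These steps hold for all $\alpha\in(0,1)$, $\theta\in(0,\pi)$.

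\emph{Monotonicity on $[(\nth3+\nth{3\alpha})\pi,\pi)$.} Since $\XiAT>0$ and $\XIAT=\XiAT\cdot\alpha\theta/(\pi-\alpha\theta)$ with $\alpha\theta/(\pi-\alpha\theta)$ positive and increasing in $\theta$, it suffices to show $[\ln\XiAT]'\ge0$. Writing $L=\ln\sinc$, from $\ln\XiAT=L(\delta\theta)-\delta^{-1}L(\pi-\theta)+(\alpha/\delta)L(\pi-\alpha\theta)$ one gets $[\ln\XiAT]'=\delta L'(\delta\theta)+\delta^{-1}L'(\pi-\theta)-(\alpha^2/\delta)L'(\pi-\alpha\theta)$, so (using $L'=-\phi$) the claim $[\ln\XiAT]'\ge0$ is equivalent to $\alpha^2\phi(\pi-\alpha\theta)\ge\phi(\pi-\theta)+\delta^2\phi(\delta\theta)$. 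Now $\phi(t)/t$ is a power series in $t^2$ with positive coefficients, hence increasing on $(0,\pi)$; since $\pi-\theta\le\pi-\alpha\theta$ and $\delta\theta\le\pi-\alpha\theta$, this gives $\phi(\pi-\theta)\le[(\pi-\theta)/(\pi-\alpha\theta)]\phi(\pi-\alpha\theta)$ and $\phi(\delta\theta)\le[\delta\theta/(\pi-\alpha\theta)]\phi(\pi-\alpha\theta)$, so it is enough to have $(\pi-\theta)+\delta^3\theta\le\alpha^2(\pi-\alpha\theta)$. Using $1-\alpha^2=\delta(1+\alpha)$ and $1-\delta^3-\alpha^3=3\alpha\delta$, this inequality rearranges to exactly $\theta\ge(\nth3+\nth{3\alpha})\pi$.

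\emph{The two final displays.} Rearranging the derivative formula with $\Delta:=\phi(\pi-\alpha\theta)-\phi(\pi-\theta)\ge0$ gives $[\ln\XiAT]'=-\delta\phi(\delta\theta)-(1+\alpha)\phi(\pi-\alpha\theta)+\Delta/\delta\le\Delta/\delta$, the first two terms being nonpositive. Since $\Delta=\int_{\pi-\theta}^{\pi-\alpha\theta}\phi'$, the interval of integration has length $\delta\theta$, and $\phi'(t)=\csc^2 t-t^{-2}$ is increasing on $(0,\pi)$ ($\phi$ being convex there, its Taylor series at $0$ having nonnegative coefficients), we get $\Delta\le\delta\theta\,\phi'(\pi-\alpha\theta)$, hence $[\ln\XiAT]'\le\theta\,\phi'(\pi-\alpha\theta)$; as $\phi'(t)=[\ln(1/\sinc(t))]''$ and $\sin(\pi-\alpha\theta)=\sin(\alpha\theta)$, this is the stated bound. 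For the identity, take logarithms in \eqref{e:HAT-factor}, write $1+\delta\pi/[\alpha(\pi-\theta)]=(\pi-\alpha\theta)/[\alpha(\pi-\theta)]$, and differentiate: one finds $[\ln\HAT]'-[\ln\XiAT]'=1/\theta+1/[\delta(\pi-\theta)]-\alpha^2/[\delta(\pi-\alpha\theta)]$, and clearing denominators reduces this to the polynomial identity $\delta(\pi-\theta)(\pi-\alpha\theta)+\theta(\pi-\alpha\theta)-\alpha^2\theta(\pi-\theta)=\delta\pi^2$, which holds because the coefficients of $\theta$ and $\theta^2$ on the left vanish (using $1-\alpha^2=\delta(1+\alpha)$).

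\emph{Main obstacle.} The delicate step is the monotonicity: the interval $[(\nth3+\nth{3\alpha})\pi,\pi)$ is precisely what the above argument yields, and reaching it requires applying the sharp estimate ``$\phi(t)/t$ increasing'' to \emph{both} $\phi(\pi-\theta)$ and $\phi(\delta\theta)$; coarser devices (monotonicity or convexity of $\phi$ alone, or a tangent-line lower bound for $\phi(\pi-\alpha\theta)$ based at $\pi-\theta$) give a strictly smaller interval, so the full strength of the power-series positivity of $\phi$ is genuinely needed.
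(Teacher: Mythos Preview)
Your proof is correct and follows essentially the same approach as the paper. Both arguments hinge on the power-series positivity of $-\ln\sinc$ (equivalently, of your $\phi$): the paper works term-by-term, showing $\alpha^2(\pi-\alpha\theta)^n\ge(\pi-\theta)^n+\delta^2(\delta\theta)^n$ for every $n\ge1$ by reducing to the $n=1$ case via $\pi-\alpha\theta\ge\max(\pi-\theta,\delta\theta)$, while you package the same reduction as ``$\phi(t)/t$ increasing''; both land on the identical linear inequality $(\pi-\theta)+\delta^3\theta\le\alpha^2(\pi-\alpha\theta)$ that gives the exact endpoint $(\tfrac13+\tfrac1{3\alpha})\pi$. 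For the upper bound on $[\ln\Xi_\alpha]'$ the paper likewise drops the $\delta^2(\delta\theta)^{2n-1}$ terms and uses $\alpha^2\le1$ to arrive at your $\Delta/\delta$, then applies the same mean-value step with $\phi'$ increasing; and the final identity is ``direct calculation'' in both.
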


\section{Two simple algorithms}

\label{s:general}
This section supplies two simple algorithms to sample from the
normalized $\chi\alz(y,\theta)$.  Both algorithms apply to all
$(\alpha,z)\in (0,1)\times (0,\infty)$.  Henceforth, denote
\begin{align} \label{e:calpha}
  c_\alpha = (\lfrac\alpha\delta)^\alpha, \quad
  \tau=\tau(z, \theta) = z\HAT,
\end{align}
so that by \eqref{e:fpus-joint}, $\chi\alz(y,\theta) = z^{-1} [1-(y+1)^{-
  \delta/\alpha}]^{-\alpha} \tau e^{-\tau (y+1)}$.

\subsection{The first algorithm} \label{ss:general1} 
As $f(y):=
(y+1)^{-\delta/\alpha}$ is convex in $y>0$, $f'(0)\le [f(y)-f(0)]/y\le
f'(y)$, giving  $(\delta/\alpha) y(y+1)^{-1/\alpha}\le
1-(y+1)^{-\delta/\alpha} \le (\delta/\alpha)y$.  Define
\[
  A\alz(y,\theta) = c_\alpha z^{-1}(1+y) y^{-\alpha} 
  \tau e^{-\tau(y+1)}.
\]
Then
\begin{align} \label{e:general-sand}
  \nth{y+1}\le \frac{\chi\alz(y,\theta)}{A\alz(y,\theta)}
  = \nth{y+1} \Sbr{\frac{(\delta/\alpha) y}{1 -
      (y+1)^{-\delta/\alpha}}}^\alpha
  \le1.
\end{align}
It follows that if the normalized $A\alz(y, \theta)$ can be sampled,
then the normalized $\chi\alz(y,\theta)$ can be sampled by AR with
$A\alz(y,\theta)$ as the envelope.  Define
\[
  w\alz(\theta) = \lfrac\delta {(\tau + \delta)}, \quad
  A^*\alz(\theta)
  =\tau^\alpha e^{-\tau} \Grp{1+\lfrac\delta\tau}
\]
and recall that $\gamma_{a,b}(y)$ denotes a Gamma \pdf as in
\eqref{e:fp-gamma}.  Then
\begin{align}\nonumber
  A\alz(y,\theta)
  &\propto
  \tau e^{-\tau}
  [\Gamma(\delta)\tau^{-\delta}
  \gamma_{\delta,\tau}(y) +
  \Gamma(1+\delta)\tau^{-1-\delta}\gamma_{1+\delta, \tau}(y)]  
  \\\label{e:general-A}
  &\propto A^*\alz(\theta)
  [(1-w\alz(\theta))\gamma_{\delta,\tau}(y) + 
  w\alz(\theta)\gamma_{1+\delta,\tau}(y)].
\end{align}
As a result, under the normalized $A\alz(y,\theta)$, the marginal \pdf
of $\theta$ is the normalized $A^*\alz(\theta)$, and conditioning on
$\theta$, the \pdf of $y$ is a mixture of Gamma \pdf's
\begin{align} \label{e:A-gamma-mix}
  \phi_\theta(y)=(1-w\alz(\theta)) \gamma_{\delta,\tau}(y) +
  w\alz(\theta) \gamma_{1+\delta,\tau}(y).
\end{align}

To sample from the normalized $A^*\alz(\theta)$, from \eqref{e:lHAT},
$\HAT$ is strictly increasing on $[0,\pi)$ with $\HAT[0]=1$.  Then by
$\tau = z\HAT$, $A^*\alz (\theta) \le  (1+\delta/z) z^{-\delta} \tau
e^{-\tau}$.   From \eqref{e:lHAT} again, $\tau >z e^{\alpha
  \theta^2/2} > z + z\alpha \theta^2/2$.  On the other hand, $t
e^{-t}$ is increasing on $[0,1]$ and decreasing on $[1,\infty)$, so
for $t_2>t_1>0$, $t_2 e^{-t_2} \le(t_1\vee1) e^{-(t_1\vee1)}\le
(t_1\vee 1) e^{-t_1}$.  Therefore, letting
\[
  m_s(\theta) = e^{-s\theta^2/2} \cf{0<\theta<\pi}, \quad
  r\alz =(1+\delta/z)z^\alpha [(1+\alpha\pi^2/2) \vee(1/z)],
\]
one has $A^*\alz(\theta) \le (1+\delta/z) z^{-\delta}
[(z+z\alpha\theta^2/2) \vee1] e^{-(z+z\alpha\theta^2/2)} \le r\alz 
e^{-z} m_{\alpha z}(\theta)$.  From the bound, the normalized
$A^*\alz(\theta)$ can be sampled by AR with $r\alz e^{-z} m_{\alpha
  z}(\theta)$ as the envelope.  Once $\theta$ is sampled, $y$ is
sampled from $\phi_\theta$.  The resulting $(y,\theta)$ is a sample
point from the normalized $A\alz(y,\theta)$.  The steps to do this are
described  in \Cref{a:ar-A}.  Note that \cref{a:ar-A5} of the
algorithm follows from \eqref{e:general-sand}.

\begin{algorithm}[t]
  \caption{Sampling from the normalized $\chi\alz$ using envelope
    $A\alz$.}
  \label{a:ar-A} 
  \begin{algorithmic}[1]
    \Repeat \label{a:ar-A1} \Comment{outer loop}
    \Repeat \Comment{inner loop}
    \State Sample $\theta\sim$ normalized $m_{\alpha z}$ and
    $U_1\sim \Dunif(0,1)$.
    \label{a:ar-A2}
    \Until $U_1 r\alz e^{-(1+\alpha \theta^2/2) z} \le
    A^*\alz(\theta)$.
    \State Sample $y\sim\phi_\theta$ and $U_2\sim \Dunif(0,1)$,
    $R\gets \frac{(\delta/\alpha) y}{1-(y+1)^{-\delta/\alpha}}$.
     \label{a:ar-A4}
    \Until $U_2(y+1)\le R^\alpha$.
    \Comment{By \eqref{e:general-sand}}
    \label{a:ar-A5}
    \State \Return $(y,\theta)$.
  \end{algorithmic}
\end{algorithm}

\Cref{a:ar-A} has two non-elementary operations, one is the sampling
from the normalized $m_{\alpha z}$ on \cref{a:ar-A2}, the other
the sampling from a mixture of Gamma \pdf's on \cref{a:ar-A4}.
The normalized $m_s$ can be sampled with a uniformly bounded
complexity by using AR with $\cf{0<\theta<\pi}$ as the envelope if
$0<s\le1$, and via $|Z_s|/\sqrt s$ if $s>1$, where $Z_s$ follows the
normal distribution $N(0,1)$ restricted to $[-\sqrt s\pi, \sqrt
s\pi]$.  On the other hand, the complexity to sample from a Gamma \pdf
is uniformly bounded (\cite{devroye:86:sv-ny}, sections IX3.3 and
IX3.6), so the complexity to sample from $\phi_\theta$ is uniformly
bounded for $\theta$.  Therefore, to analyze the complexity of
\Cref{a:ar-A}, it suffices to consider the expected total number of
iterations.

For the inner loop, the probability of acceptance in each iteration is 
\begin{align} \label{e:accept-prob-A*}
  \frac{\int^\pi_0 A^*\alz(\theta)\,\dd\theta}{
    r\alz\int^\pi_0 e^{-(1+\alpha\theta^2/2)z}
    \,\dd\theta
  }\ge
  \frac{z^\alpha\int^\pi_0 e^{-z\HAT}\,\dd\theta}{
    r\alz
    \int^\pi_0 e^{-(1+\alpha\theta^2/2)z}\,\dd\theta
  }
\end{align}
as $A^*\alz(\theta)\ge \tau^\alpha e^{-\tau} \ge z^\alpha e^{-z\HAT}$.
From \Cref{p:HATH1}, $\HAT\le e^{\alpha(\theta^2/2 +
V_1(\theta))}$.  Then by $e^x \le 1 + x + x^2 e^x$ for $x>0$,
\begin{align} \nonumber
  \HAT
  &
  \le 1 + \alpha\theta^2/2 + \alpha V_1(\theta)
  + (\alpha\theta^2/2 + \alpha V_1(\theta))^2
  e^{\alpha\theta^2/2 + \alpha V_1(\theta)}
  \\
  & \label{e:fpus-H}
  \le 1 + \alpha\theta^2/2 + \alpha\theta^2 V(\theta),
\end{align}
where $V(\theta) = V_1(\theta)/\theta^2 + \theta^2(1/2 +
V_1(\theta)/\theta^2)^2 e^{\theta^2/2 + V_1(\theta)}$.  Then, letting
$s = \alpha z$,
\[
  \frac{\int^\pi_0 e^{-z\HAT}\,\dd\theta}{
    \int^\pi_0 e^{-(1+\alpha\theta^2/2)z}\,\dd\theta
  }
  \ge
  \frac{\int^\pi_0 e^{-(1 + \alpha\theta^2/2 + \alpha \theta^2
      V(\theta))z}\,\dd\theta}{
    \int^\pi_0 e^{-(1+\alpha\theta^2/2)z}\,\dd\theta
  }
  \ge
  \frac{\int^\pi_0 e^{-s(\theta^2/2)(1+ 2\theta^2
      V(\theta))}\,\dd\theta}{
    \int^\pi_0 e^{-s\theta^2/2}\,\dd\theta
  }.
\]
The \rhs tends to 1 as $s\dto0$.  On the other hand, since the \rhs
equals
\[
  \frac{\int^{\sqrt s\pi}_0 e^{-(\theta^2/2)(1+ 2s^{-1}\theta^2
      V(\theta/\sqrt s))}\,\dd\theta}{
    \int^{\sqrt s\pi}_0 e^{-\theta^2/2}\,\dd\theta
  },
\]
by dominated convergence, it tends to 1 as $s\toi$.  Then the \rhs has
a positive lower bound for all $s>0$, so from \eqref
{e:accept-prob-A*}, in each iteration of the inner loop, the
probability of acceptance is $O(z^\alpha/r\alz)$.  As a result,
within each iteration of the outer loop, the expected number of
iterations of the inner loop is $O(z^{-\alpha} r\alz)$.  Next, given
$\theta$, from the lower bound in \eqref{e:general-sand} and $\HAT>1$,
the probability of acceptance at
\cref{a:ar-A5} is
\begin{align*}
  p_\text{accept}(\theta)
  &:=\frac{
    \intzi \chi\alz(y,\theta)\,\dd y
  }{
    \intzi A\alz(y,\theta)\,\dd y
  }
  \ge
  \frac{
    \intzi (y+1)^{-1} A\alz(y)\,\dd y
  }{
    \intzi A\alz(y)\,\dd y
  }
  \\
  &=
  \frac{\intzi y^{-\alpha} e^{-z\HAT y}\,\dd y
  }{\intzi (y^{1-\alpha} + y^{-\alpha}) e^{-z\HAT y}\,\dd y}
  =
  \frac{z\HAT}{z\HAT + \delta} \ge \frac{z}{z + \delta}.
\end{align*}
Thus $p_{\rm a} (\theta)$ is uniformly lower bounded by $\lfrac
z{(z+\delta)}$.  As a result, the expected total number of iterations
is $O(z^{-\alpha} r\alz(1+\delta/z)) = O(1+z^{-1}(1 +\delta/z)^2)$.
In particular, given $z_0>0$, the expected total number is uniformly
bounded for all $\alpha\in(0,1)$ and $z\ge z_0$.

\subsection{The second algorithm}\label{ss:general2}
\begin{prop} \label{p:fpus-squeeze}
  Define $c_{1,\alpha} = \lfrac{c_\alpha}{(c_\alpha + 1)}$ and
  $c_{2,\alpha} = 1\vee(\alpha/\delta)$.  Then for $y>0$,
  \begin{align} \label{e:fpus-squeeze}
    c_{1,\alpha}(y^{-\alpha} + 1) \le
    [1-(y+1)^{-\delta/\alpha}]^{-\alpha}
    \le c_{2,\alpha}(y^{-\alpha}+1)
  \end{align}
  and $\inf_{\alpha\in (0,1)} c_{1,\alpha}>0$.
  \phref{p:p:fpus-squeeze}{Proof}
\end{prop}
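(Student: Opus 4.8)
The plan is to reduce \eqref{e:fpus-squeeze} to a single one‑variable estimate and then treat the two sides separately. Write $\beta=\delta/\alpha>0$, so the middle quantity in \eqref{e:fpus-squeeze} is $[1-(1+y)^{-\beta}]^{-\alpha}$, and observe that $c_\alpha=\beta^{-\alpha}$, hence $c_{1,\alpha}=c_\alpha/(c_\alpha+1)=1/(1+\beta^{\alpha})$ and $c_{2,\alpha}=1\vee\beta^{-1}$. Multiplying \eqref{e:fpus-squeeze} through by the positive factor $[1-(1+y)^{-\beta}]^{\alpha}/(y^{-\alpha}+1)$ turns it into the equivalent statement
\[
  1\wedge\beta \ \le\ \Phi(y):=\bigl[1-(1+y)^{-\beta}\bigr]^{\alpha}\,(y^{-\alpha}+1)\ \le\ 1+\beta^{\alpha},\qquad y>0,
\]
where, setting $g(y)=1-(1+y)^{-\beta}\in(0,1)$, one has $\Phi(y)=(g(y)/y)^{\alpha}+g(y)^{\alpha}$. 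The left inequality of \eqref{e:fpus-squeeze} is the upper bound $\Phi\le1+\beta^{\alpha}$, and the right inequality of \eqref{e:fpus-squeeze} is the lower bound $\Phi\ge1\wedge\beta$.

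For the upper bound on $\Phi$ I would invoke the convexity of $y\mapsto(1+y)^{-\delta/\alpha}$ already recorded at the start of \cref{ss:general1}, which gives $g(y)\le\beta y$ and trivially $g(y)<1$; hence $(g(y)/y)^{\alpha}\le\beta^{\alpha}$ and $g(y)^{\alpha}\le1$, so $\Phi(y)\le1+\beta^{\alpha}$. For the lower bound on $\Phi$, subadditivity of $t\mapsto t^{\alpha}$ on $[0,\infty)$ yields $\Phi(y)\ge\bigl(g(y)/y+g(y)\bigr)^{\alpha}=\bigl(g(y)(1+y)/y\bigr)^{\alpha}$, so since $x^{\alpha}\ge x$ for $x\in[0,1]$ it suffices to prove $g(y)(1+y)/y\ge 1\wedge\beta$, as this then gives $\Phi(y)\ge(1\wedge\beta)^{\alpha}\ge1\wedge\beta$. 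Putting $t=1+y>1$, this last claim reads $(t-t^{1-\beta})/(t-1)\ge 1\wedge\beta$; by the mean value theorem applied to $q(s)=s-s^{1-\beta}$ (with $q(1)=0$) the left side equals $q'(\xi)=1-(1-\beta)\xi^{-\beta}$ for some $\xi\in(1,t)$, and $q'(\xi)\ge1$ when $\beta\ge1$, while $q'(\xi)>1-(1-\beta)=\beta$ when $\beta<1$ (because $\xi^{-\beta}<1$ for $\xi>1$). This closes both inequalities of \eqref{e:fpus-squeeze}.

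For the last assertion $\inf_{\alpha\in(0,1)}c_{1,\alpha}>0$, note that $c_{1,\alpha}=c_\alpha/(c_\alpha+1)$ is increasing in $c_\alpha$, so it is enough to bound $c_\alpha=(\alpha/\delta)^{\alpha}$ away from $0$, i.e.\ to bound $\ln(1/c_\alpha)=\alpha\ln(\delta/\alpha)=\alpha\ln\delta-\alpha\ln\alpha$ from above. Since $\alpha\ln\delta\le0$ on $(0,1)$ and $-\alpha\ln\alpha\le 1/e$ (maximum at $\alpha=1/e$), one gets $c_\alpha\ge e^{-1/e}$, whence $c_{1,\alpha}\ge e^{-1/e}/(1+e^{-1/e})>0$.

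I expect the only mildly delicate point to be the lower bound $\Phi(y)\ge1\wedge\beta$ (equivalently the upper bound in \eqref{e:fpus-squeeze}): a single convexity estimate does not suffice there because the two summands $(g(y)/y)^{\alpha}$ and $g(y)^{\alpha}$ degenerate at the opposite ends $y\to0$ and $y\to\infty$, and the key is to recombine them by subadditivity before applying the mean value theorem; the upper bound on $\Phi$ and the estimate on $c_{1,\alpha}$ are routine.
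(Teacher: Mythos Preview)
Your proof is correct and follows the same overall framework as the paper: both rewrite \eqref{e:fpus-squeeze} as two-sided bounds on the function $k_\alpha(y)=\Phi(y)=(y^{-\alpha}+1)[1-(1+y)^{-\delta/\alpha}]^\alpha$, and both obtain the upper bound $\Phi\le 1+\beta^\alpha$ from the same convexity estimate $g(y)\le\beta y$.

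The lower bound is handled a little differently. The paper splits into two cases. When $\beta=\delta/\alpha\ge 1$ it uses the monotonicity $(1+y)^{-\beta}\le(1+y)^{-1}$ to reduce to $t^\alpha+(1-t)^\alpha\ge 1$ with $t=1/(1+y)$; when $\beta<1$ it applies $x^\alpha\ge x$ to each summand separately (which is legitimate precisely because both $g(y)$ and $g(y)/y$ lie in $[0,1]$ in that regime) and then bounds $(1+y)g(y)/y$ from below by a convexity argument. Your route is more unified: you first combine the two summands via subadditivity of $t\mapsto t^\alpha$, and only then handle the single quantity $(1+y)g(y)/y$ with the mean value theorem, so the case split appears only at the very end and is disposed of in one line. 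Both arguments are short; yours avoids the auxiliary comparison $(1+y)^{-\beta}\le(1+y)^{-1}$ and the separate convexity step, at the cost of invoking subadditivity explicitly. For the final assertion the paper argues qualitatively from continuity and endpoint limits of $c_\alpha$, whereas you produce the explicit bound $c_\alpha\ge e^{-1/e}$, which is a nice sharpening.
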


As before, denote $\tau =z\HAT$.  Define
\[
  B\alz(y,\theta) = z^{-1} c_{2,\alpha}(y^{-\alpha}+1) \tau
  e^{-\tau(y+1)}.
\]
Then from \eqref{e:fpus-joint} and \Cref{p:fpus-squeeze},
\begin{align} \label{e:chi-B}
  \frac{c_{1,\alpha}}{c_{2,\alpha}}
  \le \frac{\chi\alz(y,\theta)}{B\alz(y,\theta)}
  =
  \frac{[1-(y+1)^{-\delta/\alpha}]^{-\alpha}}
  {c_{2,\alpha}(y^{-\alpha}+1)}
  \le1.
\end{align}
It follows that if the normalized $B\alz(y, \theta)$ can be sampled,
then the normalized $\chi\alz(y,\theta)$ can be sampled by AR with $B\alz(y,\theta)$ as the envelope.  Define
\[
  w^*\alz(\theta) = \nth{\Gamma(\delta) \tau^\alpha + 1},
  \quad
  B^*\alz(\theta)
  =\{\Gamma(\delta)\tau^\alpha+1\} e^{-\tau}.
\]
Then
\begin{align} \nonumber
  B\alz(y,\theta)
  &\propto
  \tau e^{-\tau}[\Gamma(\delta)\tau^{-\delta}
    \gamma_{\delta,\tau}(y) + \tau^{-1} \gamma_{1,\tau}(y)]
  \\\label{e:B-factor}
  &\propto
  B^*\alz(\theta) \{[1-w^*\alz(\theta)]
  \gamma_{\delta,\tau}(y) + w^*\alz(\theta)\gamma_{1,\tau}(y)\}.
\end{align}
Consequently, under the normalized $B\alz(y, \theta)$, the marginal
\pdf of $\theta$ is the normalized $B^*\alz(\theta)$, and
conditioning on $\theta$, $y$ follows a mixture of Gamma \pdf's
\begin{align} \label{e:B-gamma-mix}
  \phi^*_\theta(y)
  =[1-w^*\alz(\theta)] \gamma_{\delta,\tau}(y) + w^*\alz(\theta)
  \gamma_{1,\tau}(y).
\end{align}
Since for all $\theta\in [0,\pi)$, $B^*\alz(\theta) \le \sup_{t>0}
[\Gamma(\delta) t^\alpha+1] e^{-t} \le \Gamma(\delta) +1$, 
the normalized $B^*\alz(\theta)$ can be sampled with $[\Gamma(\delta)
+1]\cf{0\le \theta<\pi}$ as the envelope.  Once $\theta$ is
sampled, $y$ is sampled from $\phi^*_\theta$.  The resulting
$(y,\theta)$ is a sample from the normalized $B\alz(y,\theta)$.

\begin{algorithm}
  \caption{Sampling from the normalized $\chi\alz$ using envelope
    $B\alz$.}
  \label{a:ar-B}
  \begin{algorithmic}[1]
    \Repeat \label{a:ar-B1}  \Comment{outer loop}      
    \Repeat \label{a:ar-B2}  \Comment{inner loop}
    \State Sample $\theta\sim\Dunif(0,\pi)$ and $U_1
    \sim\Dunif(0,1)$.
    \Until $U_1 [\Gamma(\delta)+1] \le B^*\alz(\theta)$.
    \label{a:ar-B3}
    \State \label{a:ar-B4}
    Sample $y\sim \phi^*_\theta$ and $U_2 \sim\Dunif(0,1)$,
    $R \gets \frac{(\delta/\alpha) y}{1-(y+1)^{-\delta/\alpha}}$.
    \Until $U_2 (c_{2,\alpha}/c_\alpha)(1+y^\alpha) \le R^\alpha$.
    \Comment{By \eqref{e:chi-B}.}

    \State \Return $(y,\theta)$.
  \end{algorithmic}
\end{algorithm}

Combining the sampling of the normalized $B\alz$ with \eqref{e:chi-B},
the normalized $\chi\alz$ can be sampled as in \Cref{a:ar-B}.
The algorithm has one non-elementary operation, which is the sampling
from $\phi^*_\theta$ on \cref{a:ar-B4}.  As noted in last subsection,
the complexity of the operation is uniformly bounded for $\theta$.
Thus, to analyze the complexity of \Cref{a:ar-B}, it suffices to
consider the expected total number of iterations.  By \Cref{p:HATH1},
in each iteration of the inner loop, the probability of acceptance is
\[
  \frac{\int^\pi_0 B^*\alz(\theta)\,\dd \theta}
  {\Gamma(\delta)+1}
  \ge
  \frac{\int^\pi_0 e^{-z\HAT}\,\dd \theta}{\Gamma(\delta)+1} 
  \ge\frac{C_z}{\Gamma(\delta)+1} \quad\text{with~}
  C_z = \int^\pi_0 e^{-z \HAT[\theta][1]}\,\dd\theta.
\]
On the other hand, in each iteration of the outer loop, given any
$\theta\in (0,\pi)$ generated at the end of the inner loop, by
\eqref{e:chi-B} and \eqref{e:B-factor}, the probability of
acceptance
\[
  \frac{\intzi \chi\alz(y,\theta)\,\dd y}
  {\intzi B\alz(y,\theta)\,\dd y}
  \ge
  \frac{c_{1,\alpha}}{c_{2,\alpha}},
\]
regardless of the values of $\theta$ and $z$.  Thus the expected total
number of iterations is $O(\frac{c_{2,\alpha}}{c_{1,\alpha}}
(\Gamma(\delta) + 1)/C_z) = O(\delta^{-2}/C_z)$.  Note that $C_z$ is
increasing in $z$.  Then given $z_0>0$, by using \Cref{a:ar-A,a:ar-B}
for $z> z_0$ and $0<z\le z_0$ respectively, the normalized
$\chi\alz(y,\theta)$ can be sampled with complexity $O(\delta^{-2})$
for all $z>0$.

\section{Preparatory AR related results} \label{s:ar}
In the rest of the paper, we develop one more algorithm, which has a
uniform complexity bound $O(|\ln \delta|)$ for all $(\alpha, z)$
close to $(1,0)$.  This section provides sampling routines for two
elementary distributions what will be heavily used in the algorithm.

Recall that if $\nu$ is a measure on $\Reals^n$ and $\phi: \Reals^n
\to\Reals^m$ is measurable, then the pushforward of $\nu$ under $\phi$
is a measure $\tilde\nu$ on $\Reals^m$ such that $\tilde\nu(A) = \nu(\phi^{-1}(A))$ for every measurable set $A$.  Denote $\tilde\nu
=\phi\#\nu$.  When $\nu$ is the law of a random variable $X$,
$\phi\#\nu$ is the law of $\phi(X)$.  If $f\ge0$ is a function on
$\Reals^n$ and $g\ge0$ a function on $\Reals^m$, such that $g(y)\,\dd
y$ is the pushforward of $f(x)\,\dd x$ under $x\to y = \phi(x)$, then
$g$ will be referred to as the pushforward of $f$ under $\phi$,
denoted $g=\phi\# f$.  If $n=m=1$ and $\phi$ has a piecewise
differentiable inverse $\psi$, then $\phi\#f(y) = f(\psi(y))
|\psi'(y)|$ and $\int \phi\#f = \int f$.

\subsection{A variation for log-concave functions} 
\label{ss:log-cv} 
Let $a(x)$ be a log-concave function of $x\ge0$, i.e., $\ln a(x)$ is
concave.  Suppose $a(0)= \sup_{x\ge0} a<\infty$.   In general, if
$\int a$ is unknown, then $f = a/\int a$ can be sampled using the
algorithm in \cite{devroye:12:spl}.  However, if it is known that
$\int a$ is no greater than a constant $A$, then there is an
alternative more amenable to analysis.  Since $f$ is a log-concave
\pdf, from p.~290 of \cite{devroye:86:sv-ny}, $f(x) \le f(0)(1\wedge
e^{1-f(0) x}) =2\Vrx_{f(0)}(x)$, where given $t>0$,
\begin{align} \label{e:g-t}
  \Vrx_t(x) = (t/2)(1\wedge e^{1-tx})\cf{x\ge0}
\end{align}
is the \pdf of $t^{-1}[\eta U + (1-\eta)(1+\xi)]$, where
$\eta\in\{0,1\}$ with $\pr\{\eta=0\}=1/2$, $U\sim \Dunif(0,1)$, and
$\xi\sim\Dexp(1)$, and the three random variables are independent.  A
more compact representation is $t^{-1}[\cf{U\le1} U + \cf{U>1} \ln\frac
e{2-U}]$ with $U\sim\Dunif(0,2)$.  Then
\begin{align} \label{e:log-concave}
  a(x) \le a(0)(1\wedge e^{1-a(0) x/\int a}) \le
  a(0)(1\wedge e^{1-a(0) x/A})=2 A\Vrx_s(x) \quad\text{with~}
  s = a(0)/A.
\end{align}
If $f$ is sampled using AR with $2 A\Vrx_s$ as the envelope, then the
probability of acceptance in each iteration is $\nth{2A} \int a$.  It
is useful to note that by \eqref{e:g-t}, $\Vrx_0(x)\equiv0$.

Given two constants $b>a\ge0$, let
\begin{align} \label{e:varphi-def}
  \varphi_{a,b,c}(x)=\varphi_c(x) \cf{a<x\le b} \quad
  \text{with~}   \varphi_c(x) = x^{c-1} e^x\cf{x>0}.
\end{align}
Note that the exponential term in $\varphi_{a,b,c}$ is $e^x$ instead
of $e^{-x}$.  The function appears in several \pdf's later.  To sample
from those \pdf's, we need an envelope $\varphi^*_{a,b,c}$ for
$\varphi_{a,b,c}$ that meets two criteria: 1) it has an explicit
integral whose ratio to $\int \varphi_{a,b,c}$ is uniformly bounded
for $0\le a<b<\infty$ and $c\in (0,1)$, and 2) the normalized
$\varphi^*_{a,b,c}$ is easy to sample.  To get $\varphi^*_{a,b,c}$, we
will divide $\varphi_{a,b,c}$ into two parts, such that one under a
power transform becomes log-concave, while the other is restricted to
a uniformly bounded interval.  For each part, we construct an
envelope.  Put together, the two envelopes form $\varphi^*_{a,b,c}$.

From the power series expansion of $e^x$, 
\begin{align} \label{e:varphi}
  \int \varphi_{a,b,c}=\int^b_a \varphi_c = 
  \sumzi k \int^b_a \frac{x^{k+c-1}}{k!}\,\dd x
  = \sumzi k u_k, \quad\text{where~} u_k = \frac{b^{k+c} -
    a^{k+c}}{(k+c)k!}.
\end{align}
Since $c\in(0,1)$, for $k\ge1$,
\begin{align} \label{e:varphi2}
  \frac{u_k}{\lfrac{(b^{k+c}-a^{k+c})}{(k+1)!}}
  = \frac{k+1}{k+c}
  \in (1,2).
\end{align}
Then
\begin{gather}\label{e:varphi2b}
  \frac{b^c - a^c}c+\frac {D(a,b,c)}2
  \le \int\varphi_{a,b,c}\le \frac{b^c - a^c}c+D(a,b,c):=M(a,b,c),
\end{gather}
where
\[
  D(a,b,c) = 2\sumoi k \frac{b^{k+c}-a^{k+c}}{(k+1)!}
  = 2[b^{c-1} (e^b - 1 - b) -a^{c-1} (e^a - 1 -a)].
\]
It is plain that $\int\varphi_{a,b,c}>M(a,b,c)/2$.  By
\eqref{e:varphi}, $\varphi_{a,b,c}\equiv0$ if $a\ge b$.  For
consistency, define $M(a,b,c)=0$ if $a\ge b$.

Given $p\in (0,1)$, the pushforward of $\varphi_c(x)$ under 
$x\to t = x^{1/p}$ is $\pi_p(t) = p t^{pc-1} \exp(t^p)$.  Since
$(\ln\pi_p)''(t) = t^{-2}[1-pc - (1-p) p t^p]$, $\pi_p(t)$ is
log-concave on $[d(p)^{1/p}, \infty)$, where
\[
  d(p)= \frac{1-c}{1-p} + \nth p.
\]
To make use of log-concavity as much as possible, one may minimize
$d(p)$.  It is seen that
\begin{align} \label{e:xpc}
  \inf_{p\in(0,1)} d(p) = (1+\sqrt{1-c})^2:=x_c, \quad
  \mathop{\arg\inf}_{p\in (0,1)} d(p) =
  \nth{1+\sqrt{1-c}} := p_c.
\end{align}
Let
\[
  A = b\wedge x_c, \quad B = a\vee x_c.
\]
Then $(a, A]$ and $(B, b]$ partition $(a,b]$, so $\varphi_{a, b, c} =
\varphi_{a, A,c} + \varphi_{B, b, c}$.  Note that
\[
  \varphi_{a,A,c}\not\equiv0\Iff x_c>a, \quad 
  \varphi_{B,b,c}\not\equiv0\Iff x_c<b.
\]
First, consider $\varphi_{B,b, c}$ provided $x_c < b$.  In this case,
$\tilde\varphi_c:= \pi_{p_c}$ is increasing and log-concave on
$I:=(B^{1/p_c}, b^{1/p_c}]$.  From \eqref{e:varphi2b}, the ratio of
$\int_I\tilde\varphi_c =  \int \varphi_{B,b,c}$ to $M(B, b, c)$ is in
$[1/2,1]$.  Let $s = \tilde\varphi_c(b^{1/p_c})/M(B, b, c)$.  Since
$\tilde\varphi_c(b^{1/p_c}-t)$ is log-concave and decreasing on
$[0, b^{1/p_c}-B^{1/p_c})$, then from \eqref{e:log-concave}, for all
$t\in\Reals$,
\begin{align} \label{e:varphi3}
  \tilde\varphi_c(t)\cf{B^{1/p_c}< t \le b^{1/p_c}}
  \le
  2 M(B, b,c) \Vrx_s(b^{1/p_c}-t).
\end{align}
As a result, the pushforward of $2  M(B, b,c) \Vrx_s(b^{1/p_c} - t)$
under the signed power transform $t\to x = \sign(t) |t|^{p_c}$ is an
envelope for $\varphi_{B, b, c}(x) = \varphi_c(x)\cf{B<x\le b}$.
Clearly, the integral of the envelope is $2 M(B,b,c)$.  Next,
consider $\varphi_{a, A,c}$ provided $x_c>a$.  For this
function, we shall construct an envelope with integral $2
M(a,A,c)$.  Given integer $n\ge0$,
\[
  r_n(x) := \frac{e^x -\sum^n_{k=0} x^k/k!}{x^{n+1}/(n+1)!}
  =
  \sumzi k \frac{(n+1)! x^k}{(n+1+k)!}
\]
is increasing on $(0,\infty)$.  As $A\le x_c\le4$, it follows that for
$x\in (a, A]\subset (0,4]$, 
\[
  \varphi_c(x) \le 
  \sum^n_{k=0} \frac{x^{k+c-1}}{k!} + \frac{r_n(4)}{(n+1)!} x^{n+c} :=
  f_n(x).
\]
We need the integral of $f_n$ on any interval $(s,t]
\subset (0,4)$ to be at most $2M(s,t,c)$.  From \eqref
{e:varphi}--\eqref{e:varphi2}, this holds if $r_n(4)\le
\lfrac{2(n+1)}{(n+2)}$.  It is not hard to show, either analytically
or numerically, that the latter is true for $n\ge7$.  Let  $f^*(x)$
be the normalized $f_7(x)\cf{a<x\le A}$.  Then by $\int^A_a f_7 \le 2
M(a, A, c)$, 
\begin{align} \label{e:varphi4}
  \varphi_{a, A, c}(x) \le 2M(a, A, c)f^*(x).
\end{align}
Furthermore, $f^*(x)$ is a finite mixture of simple \pdf's.  To see
this, let
\begin{align} \label{e:varphi4-u}
  u_k = \frac{A^{k+c} - a^{k+c}}{(k+c)k!},
  \quad
  \beta_k(x) = \frac{(k+c)x^k\cf{a<x\le A}}{A^{k+c} - a^{k+c}}.
\end{align}
It is seen that $x^{c-1} \beta_k(x)$ is the \pdf of $A U^{1/(k+c)}$
with $U\sim \Dunif((a/A)^{k+c},1)$ and
\begin{align} \label{e:varphi4-w}
  f^*(x) = \frac{x^{c-1} \sum^8_{i=0}w_i \beta_i(x)}{\sum^8_{i=0}
    w_i}, \text{~where~}
  w_k = u_k \text{~for~}k\le7, \ w_8 = r_7(4) u_8.
\end{align}

Combining \eqref{e:varphi3} and \eqref{e:varphi4}, define
\begin{align} \nonumber
  \varphi^*_{a,b,c}(x)
  &=
  2M(a, A,c) f^*(x)\cf{a\le x\le A}
  \\\label{e:varphi5}
  &\qquad+\frac{2M(B, b,c)}{p_c} |x|^{1/p_c-1}
  \Vrx_s(b^{1/p_c} - \sign(x) |x|^{1/p_c}),
\end{align}
where $s$ is the same as in \eqref{e:varphi3} and can be written as 
$\lfrac{p_c b^{c-1/p_c} e^b}{M(B, b,c)}$.  Then
\begin{align} \label{e:varphi*}
  \varphi_{a,b,c}(x) \le \varphi^*_{a,b,c}(x), \quad
  \int\varphi^*_{a,b,c} = 2 M(a,b,c) \le 4\int\varphi_{a,b,c}.
\end{align}
Furthermore,  $\varphi^*_{a,b,c}(x)$ can be sampled by
\Cref{a:varphi}.
\begin{algorithm}[t]
  \caption{Sampling from the normalized $\varphi^*_{a,b,c}$, $c\in
    (0,1)$, $0\le a < b<\infty$.}\label{a:varphi}
  \begin{algorithmic}[1]
    \State $A\gets b\wedge x_c$, $B\gets a\vee x_c$.
    \State Sample $i\in\{0,1\}$ with weights $M(a, A, c)$ and $M(B,
    b,c)$.
    \If{$i=0$}
    \State Sample $k\in\{0, \ldots, 8\}$ with weights $w_k$ and $X\sim
    x^{c-1} \beta_k(x)$.
    \Else
    \State Sample $Y\sim \Vrx_s$, where $s=p_c b^{c-1/p_c}
    e^b/M(B, b,c)$ and $\Vrx_s$ is defined in  \eqref{e:g-t}.
    \State $D=b^{1/p_c}-Y$,  $X \gets \text{sign}(D)|D|^{p_c}$.
    \EndIf
    \State \Return $X$.
  \end{algorithmic}
\end{algorithm}

\subsection{Envelope for a domain-restricted Gamma density} \label{ss:icgamma}
Given $b>a\ge0$, let
\begin{align} \label{e:gabc}
  g_{a,b,c}(x) = x^{c-1} e^{-x} \cf{a<x\le b}.
\end{align}
The normalized $g_{a,b,c}$ is a Gamma density restricted to $(a,b]$.
Similar to last subsection, we need an envelope $g^*_{a,b,c}$ for
$g_{a,b,c}$, such that the normalized $g^*_{a,b,c}$ is easy to 
sample and $\int g^*_{a,b,c}/\int g_{a,b,c}$ is uniformly bounded for
$c\in (0,2)$ and $0\le a < b<\infty$.  To this end, we rely on
log-concavity.  Note that to sample from an (unrestricted) Gamma
density, the most efficient AR algorithms do not rely on log-concavity
(\cite{devroye:86:sv-ny}, section XI.3.4).  However, the envelopes in
those algorithms become inefficient for sampling from a Gamma density
restricted to a finite interval outside its bulk.

\begin{lemma} \label{l:icgamma}
  Let $0\le a<b<\infty$ and $c\in (0,2)$.  Then
  \phref{p:l:icgamma}{Proof}
  \begin{gather} \label{e:icgamma}
    e^{-1} G(a,b,c) \le \int g_{a,b,c} \le G(a,b,c),
  \end{gather}
  where, if $b\in (a,a+2]$, then
  \[
    G(a,b,c) =
    \begin{cases}
      e^{-a} (b^c - a^c)/c &\text{if~} b\in (a,a+1]
      \\
      G(a,a+1,c) + G(a+1,b,c) &\text{if~} b\in (a+1,a+2],
    \end{cases}
  \]
  and if $b>a+2$, then
  \[
    G(a,b,c)
    =
    \begin{cases}
      G(a,a+2,c) + 2[(a+2)^{c-1} e^{-a-2} - b^{c-1} e^{-b}]
      &\text{if~} c\in [1,2),
      \\
      G(a,a+2,c) + (a+2)^{c-1} (e^{-a-2} - e^{-b})
      &\text{if~} c\in (0,1).
    \end{cases}
  \]
\end{lemma}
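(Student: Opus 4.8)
The plan is to establish the two-sided bound on $\int g_{a,b,c}=\int_a^b x^{c-1}e^{-x}\,\dd x$ by treating the "bulk" part $(a,a+2]$ and the "tail" part $(a+2,b]$ separately, and within the bulk, further splitting at $a+1$; the definition of $G(a,b,c)$ already reflects exactly this decomposition, so the work is to verify each piece and then assemble. The upper bound $\int g_{a,b,c}\le G(a,b,c)$ and the lower bound $e^{-1}G(a,b,c)\le\int g_{a,b,c}$ should be proved in parallel, piece by piece, so that additivity of both $\int$ and $G$ over the partition $(a,a+1],(a+1,a+2],(a+2,b]$ closes the argument.

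First I would handle the short interval $(a,a+1]$: on this range $e^{-(a+1)}\le e^{-x}\le e^{-a}$, so $\int_a^{a+1}x^{c-1}e^{-x}\,\dd x$ is bounded above by $e^{-a}(b^c-a^c)/c$ (with $b=a+1$, or more generally $b\wedge(a+1)$) and below by $e^{-1}$ times that, using $e^{-(a+1)}=e^{-1}e^{-a}$. Next, for the interval $(a+1,a+2]$ I would shift: $\int_{a+1}^{b}x^{c-1}e^{-x}\,\dd x=\int_{a'}^{b}$ with $a'=a+1$, so the same one-unit-interval estimate applies with $a$ replaced by $a+1$, giving the recursive clause $G(a,a+1,c)+G(a+1,b,c)$; additivity of the integral over $(a,a+1]\cup(a+1,b]$ then yields both bounds on $(a,a+2]$ simultaneously, since both the lower and upper estimates for the two subintervals carry the common factor $e^{-1}$ (lower) or $1$ (upper). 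The only subtlety is that on $(a,a+1]$ the crude bound $e^{-x}\ge e^{-(a+1)}$ loses a factor $e^{-1}$ relative to $e^{-a}$, which is precisely the source of the $e^{-1}$ in \eqref{e:icgamma}; I would check that no worse constant is incurred anywhere else.

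The tail part $b>a+2$ is the one requiring a genuine idea. Here I would use integration by parts or, equivalently, a monotone-ratio comparison with $x^{c-1}e^{-x}$ against its "primitive-like" bound. For $c\in[1,2)$, note $\frac{\dd}{\dd x}\bigl(-2x^{c-1}e^{-x}\bigr)=2x^{c-1}e^{-x}-2(c-1)x^{c-2}e^{-x}$; since $x\ge a+2\ge2$ and $c-1<1$, one has $2(c-1)x^{c-2}\le x^{c-1}$ (because $2(c-1)\le x$), so $\frac{\dd}{\dd x}\bigl(-2x^{c-1}e^{-x}\bigr)\ge x^{c-1}e^{-x}$, whence $\int_{a+2}^{b}x^{c-1}e^{-x}\,\dd x\le 2[(a+2)^{c-1}e^{-a-2}-b^{c-1}e^{-b}]$, matching $G$. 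For the lower bound on the tail I would similarly show $\frac{\dd}{\dd x}\bigl(-2x^{c-1}e^{-x}\bigr)\le e\cdot x^{c-1}e^{-x}$, or more directly bound $\int_{a+2}^\infty\ge$ a constant multiple, ensuring the global $e^{-1}$ factor is not violated; a clean route is to observe $x^{c-1}e^{-x}\ge \frac{\dd}{\dd x}\bigl(-x^{c-1}e^{-x}\bigr)$ fails in general but $x^{c-1}e^{-x}\ge \frac12\cdot\frac{\dd}{\dd x}\bigl(-2x^{c-1}e^{-x}\bigr)\cdot(\text{correction})$ can be made to work with the stated constant. For $c\in(0,1)$ the derivative $\frac{\dd}{\dd x}\bigl(-x^{c-1}e^{-x}\bigr)=x^{c-1}e^{-x}+(1-c)x^{c-2}e^{-x}\ge x^{c-1}e^{-x}$ already gives $\int_{a+2}^b\le (a+2)^{c-1}e^{-a-2}-b^{c-1}e^{-b}$, but the stated $G$ has coefficient $1$ not... — actually with $c<1$ the extra positive term means the simple antiderivative bound $\int_{a+2}^b x^{c-1}e^{-x}\,\dd x\le (a+2)^{c-1}e^{-a-2}-b^{c-1}e^{-b}$ is immediate, matching the $(0,1)$ clause; the lower bound follows from $x^{c-1}e^{-x}\le e\cdot\frac{\dd}{\dd x}(-x^{c-1}e^{-x})$ for $x\ge2$, which holds once $1+(1-c)/x\le e$, trivially true. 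Finally I would assemble: over $(a,b]=(a,a+2]\cup(a+2,b]$, add the two-sided estimates for each piece, using that $G(a,b,c)=G(a,a+2,c)+G(a+2,b,c)$ by construction and that $\int$ is additive, to conclude \eqref{e:icgamma}.

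The main obstacle I anticipate is pinning down the tail lower bound with exactly the constant $e^{-1}$ rather than something slightly worse: the bulk already forces that constant, so the tail estimate must not degrade it, which means I need the inequality $x^{c-1}e^{-x}\le e\cdot\frac{\dd}{\dd x}\bigl(-2x^{c-1}e^{-x}\bigr)$ (for $c\in[1,2)$) or its $(0,1)$-analogue to hold on all of $[a+2,\infty)\subseteq[2,\infty)$, which reduces to an elementary inequality in $x$ and $c$ that I would verify by checking the worst case $x=2$. The rest is bookkeeping with the recursive definition of $G$.
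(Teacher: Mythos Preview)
Your decomposition into $(a,a+1]$, $(a+1,a+2]$, and $(a+2,b]$ and the treatment of the bulk match the paper's proof, and your differential-inequality route for the $c\in[1,2)$ tail is equivalent to the paper's integration by parts. There is, however, a concrete gap in the $c\in(0,1)$ tail \emph{upper} bound.

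For $c\in(0,1)$ the tail part of $G$ is $(a+2)^{c-1}(e^{-a-2}-e^{-b})$, not $(a+2)^{c-1}e^{-a-2}-b^{c-1}e^{-b}$. Your antiderivative inequality $\frac{\dd}{\dd x}\bigl(-x^{c-1}e^{-x}\bigr)\ge x^{c-1}e^{-x}$ gives only $\int_{a+2}^b x^{c-1}e^{-x}\,\dd x\le (a+2)^{c-1}e^{-a-2}-b^{c-1}e^{-b}$, and since $b^{c-1}<(a+2)^{c-1}$ when $c<1$, this is \emph{strictly larger} than $(a+2)^{c-1}(e^{-a-2}-e^{-b})$; so it does not prove $\int g_{a+2,b,c}\le G$-tail, contrary to your claim that it ``matches.'' The correct argument is simpler: for $c<1$ the factor $x^{c-1}$ is decreasing, so $x^{c-1}\le (a+2)^{c-1}$ on $[a+2,b]$, whence $\int_{a+2}^b x^{c-1}e^{-x}\,\dd x\le (a+2)^{c-1}\int_{a+2}^b e^{-x}\,\dd x$, which is exactly the required bound. (Your lower bound via $1+(1-c)/x\le e$ is fine, since the quantity you bound from below dominates $(a+2)^{c-1}(e^{-a-2}-e^{-b})$; the paper instead does a further case split on $b\le a+3$ versus $b>a+3$.) One more slip: in your final paragraph the inequality needed for the $c\in[1,2)$ tail lower bound is $\frac{\dd}{\dd x}\bigl(-2x^{c-1}e^{-x}\bigr)\le e\,x^{c-1}e^{-x}$, as you had it in the middle paragraph, not the reversed version you wrote at the end.
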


By \eqref{e:gabc}, $g_{a,b,c}\equiv0$ if $a\ge b$.  For consistency,
define $G(a,b,c)=0$ if $a\ge b$.  First let $c\in [1,2)$.  
Let
\[
  A=b\wedge (c-1), \quad B=a\vee (c-1).
\]
Then $(a, A]$ and $(B, b]$ partition $(a,b]$, so $g_{a,b,c} = g_{a,A,
  c} + g_{B,b,c}$.  Provided $A>a$, $g_{a,A,c}$ is log-concave and
increasing on $(a, A]$; provided $B<b$, $g_{B, b,c}$ is log-concave
and decreasing on $(B, b]$.  Define
\begin{gather}\label{e:icgamma1-env}
  g^*_{a,b,c}(x)
  =
  2 G(a, A,c) \Vrx_s(A -x) + 2 G(B,b,c) \Vrx_t(x-B)\cf{x>B}
  \\\nonumber
  \text{with}\quad
  s
  =
  \begin{cases}
    \frac{A^{c-1} e^{-A}}{G(a,A,c)}
    &\text{if~} A>a, \\
    0 & \text{else,}
  \end{cases}
  \quad
  t =
  \begin{cases} \frac{B^{c-1} e^{-B}}{G(B, b,c)}
    & \text{if~} B<b, \\
    0 & \text{else.}
  \end{cases}
\end{gather}
From \eqref{e:log-concave} and \Cref{l:icgamma},
\begin{align} \label{e:icgamma-env}
  g_{a,b,c}(x) \le
  g^*_{a,b,c}(x), \quad
  \int g^*_{a,b,c} = 2G(a,b,c) \le 2e \int
  g_{a,b,c}.
\end{align}

Now let $c\in (0,1)$.  Then $g_{a,b,c}(x)$ is not log-concave,
however, under $x\to t = x^c$, its pushforward $\tilde g_{a,b,c}(t) =
(1/c)\exp\{-t^{1/c}\} \cf{a^c < t \le b^c}$ is and also
decreases on $(a^c, b^c]$.  From \eqref{e:log-concave} and
\Cref{l:icgamma}, if $\tilde g^*_{a,b,c}(t) = 2 G(a,b,c) \Vrx_s(t -
a^c)$ with $s= e^{-a}/[c G(a,b,c)]$, then $\tilde g_{a,b,c}(t) \le
\tilde g^*_{a,b,c}(t)$ and $\int \tilde g^*_{a,b,c} \le 2 e \int
\tilde g_{a,b,c}$.  Thus, \eqref{e:icgamma-env} still holds by letting
\begin{align} \label{e:icgamma2-env}
  g^*_{a,b,c}(x) = 2 c G(a,b,c) x^{c-1} \Vrx_s(x^c - a^c),
  \quad\text{where~} s = \frac{e^{-a}}{c G(a,b,c)}.
\end{align}

Combining the above results, the normalized $g^*_{a,b,c}$ can be
sampled by \Cref {a:incomplete-gamma}.

\begin{algorithm}[t]
  \caption{Sampling from the normalized $g^*_{a,b,c}(x)$,
    $0<c<2$, $0\le a<b<\infty$.} \label{a:incomplete-gamma}
  \begin{algorithmic}[1]
    \If{$1\le c<2$}
    \State $A\gets b\wedge (c-1)$, $B\gets a\vee(c-1)$.
    \State Sample $i\in\{0,1\}$ with weights $G(a,A,c)$ and
    $G(B,b,c)$.
    \If{$i=0$}
    \State $s\gets \lfrac{A^{c-1} e^{-A}}{G(a, A, c)}$, sample $Z\sim
    \Vrx_s$, $X\gets A -Z$.
    \Else
    \State $t\gets \lfrac{B^{c-1} e^{-B}}{G(B,b,c)}$,  sample $Z\sim
    \Vrx_t$, $X\gets B+Z$.
    \EndIf
    \Else
    \State $s\gets e^{-a}/[c G(a,b,c)]$.
    \State Sample $Z\sim \Vrx_s$, $T\gets
    a^c + Z$, $X\gets T^{1/c}$. 
    \EndIf
    \State \Return $X$.
  \end{algorithmic}
\end{algorithm}

\section{Main routine of the third algorithm}
\label{s:large-alpha-small-z}
This section supplies the main routine of the third algorithm to
sample from the normalized $\chi\alz(y,\theta)$.  It actually works
for all $\alpha\in (0,1)$ and $z>0$.  The next section develops the
subroutine, which is the most technical part of the paper.  In both
the main routine and the subroutine, the expected number of iterations
is bounded.  However, the subroutine requires an $O(|\ln\delta|)$
overhead as well as $O(|\ln\delta|)$ number of operations within each
iteration.

Henceforth, denote
\begin{align} \label{e:ell}
  \ell(y) = \ln(1+y).
\end{align}
Since $\ell^{-1}(v) = e^v-1$, for any positive function $f(y)$,
under the mapping $y\to v=\ell(y)$,
\[
  \ell\#f(v) = f(e^v-1) e^v = f(y) (y+1)\quad\text{with~}
  y = e^v-1.
\]

\begin{prop} \label{p:chi-G}
  Put  \phref{p:p:chi-G}{Proof}
  \begin{gather} \label{e:m-r}
    m_t(y)
    = \frac{\cf{0<y<1/t}}{[\ell(y)]^\alpha}, \quad
    r_{t,1}(y)=\frac{\cf{y\ge1/t} e^{-ty}}{[\ell(1/t)]^\alpha}, \quad
    r_{t,2}(y)=\frac{\cf{y>0} e^{-ty}}{c_\alpha}.
  \end{gather}
  Define
  \begin{gather} \label{e:F-G}
    F_\alpha(y,t) = m_t(y) + r_{t,1}(y) + r_{t,2}(y),\quad
    G\alz(y,\theta)
    = c_\alpha e^\alpha\HAT e^{-z\HAT} F_\alpha(y,z\HAT).
  \end{gather}
  Then
  \begin{gather} \label{e:chi-G}
    \chi\alz(y,\theta)\le G\alz(y,\theta), \quad
    \inf_{z>0} \frac{\int \chi\alz}{\int G\alz}
    \ge \nth{2 e^{\alpha+1}} \intzi (y+1)^{-\alpha} e^{-y}\,\dd y.
  \end{gather}
\end{prop}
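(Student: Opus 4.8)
The plan is to derive both assertions from a single two‑sided estimate of the prefactor $[1-(y+1)^{-\delta/\alpha}]^{-\alpha}$. Writing $1-(y+1)^{-\delta/\alpha}=1-e^{-u}$ with $u=(\delta/\alpha)\ell(y)>0$, I would first record
\begin{align*}
  \tfrac12\bigl(1+c_\alpha[\ell(y)]^{-\alpha}\bigr)
  \le [1-(y+1)^{-\delta/\alpha}]^{-\alpha}
  \le 1+c_\alpha[\ell(y)]^{-\alpha},\qquad y>0.
\end{align*}
The upper bound comes from $1-e^{-u}\ge u/(1+u)$ (i.e.\ $e^u\ge1+u$) followed by the subadditivity $(1+u^{-1})^\alpha\le1+u^{-\alpha}$ valid for $\alpha\in(0,1)$, together with $u^{-\alpha}=(\alpha/\delta)^\alpha[\ell(y)]^{-\alpha}=c_\alpha[\ell(y)]^{-\alpha}$. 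The lower bound comes from $1-e^{-u}\le\min(1,u)$, which gives $[1-e^{-u}]^{-\alpha}\ge\max\bigl(1,c_\alpha[\ell(y)]^{-\alpha}\bigr)\ge\tfrac12\bigl(1+c_\alpha[\ell(y)]^{-\alpha}\bigr)$.

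For the pointwise bound $\chi\alz\le G\alz$, after cancelling the common factor $\HAT e^{-z\HAT}$ and writing $\tau=z\HAT$, it suffices to show $[1-(y+1)^{-\delta/\alpha}]^{-\alpha}e^{-\tau y}\le c_\alpha e^\alpha F_\alpha(y,\tau)$. Bounding the left side by the upper estimate above yields $e^{-\tau y}+c_\alpha[\ell(y)]^{-\alpha}e^{-\tau y}$; the first summand is at most $c_\alpha e^\alpha r_{\tau,2}(y)=e^\alpha e^{-\tau y}$, so the $r_{\tau,2}$ term absorbs it. For the second summand I split on $y<1/\tau$ versus $y\ge1/\tau$: when $y<1/\tau$, $e^{-\tau y}\le1\le e^\alpha$ and $m_\tau(y)=[\ell(y)]^{-\alpha}$, hence $c_\alpha[\ell(y)]^{-\alpha}e^{-\tau y}\le c_\alpha e^\alpha m_\tau(y)$; when $y\ge1/\tau$, monotonicity gives $\ell(y)\ge\ell(1/\tau)$, so $[\ell(y)]^{-\alpha}\le[\ell(1/\tau)]^{-\alpha}\le e^\alpha[\ell(1/\tau)]^{-\alpha}$ and thus $c_\alpha[\ell(y)]^{-\alpha}e^{-\tau y}\le c_\alpha e^\alpha r_{\tau,1}(y)$. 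Summing the pieces gives $\chi\alz\le G\alz$.

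For the ratio of integrals, note that the $\theta$–dependence of both $\chi\alz(y,\theta)$ and $G\alz(y,\theta)$ enters only through the common weight $\HAT e^{-z\HAT}$ and the quantity $\tau=z\HAT$. Hence, putting $\Psi(\tau)=\intzi[1-(y+1)^{-\delta/\alpha}]^{-\alpha}e^{-\tau y}\,\dd y$ and $\Phi_\alpha(\tau)=\intzi F_\alpha(y,\tau)\,\dd y$, one has $\int\chi\alz=\int^\pi_0\HAT e^{-z\HAT}\Psi(z\HAT)\,\dd\theta$ and $\int G\alz=c_\alpha e^\alpha\int^\pi_0\HAT e^{-z\HAT}\Phi_\alpha(z\HAT)\,\dd\theta$, so $\inf_{z>0}(\int\chi\alz)/(\int G\alz)\ge(c_\alpha e^\alpha)^{-1}\inf_{\tau>0}\Psi(\tau)/\Phi_\alpha(\tau)$; it remains to prove $\Psi(\tau)\ge\frac{K}{2e}\,c_\alpha\Phi_\alpha(\tau)$ for every $\tau>0$, where $K=\intzi(y+1)^{-\alpha}e^{-y}\,\dd y$. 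Integrating $F_\alpha$ termwise, $c_\alpha\Phi_\alpha(\tau)=c_\alpha J(\tau)+c_\alpha e^{-1}\bigl(\tau[\ell(1/\tau)]^\alpha\bigr)^{-1}+\tau^{-1}$ with $J(\tau)=\int^{1/\tau}_0[\ell(y)]^{-\alpha}\,\dd y$, while the lower estimate gives $\Psi(\tau)\ge\tfrac12\bigl(\tau^{-1}+c_\alpha I(\tau)\bigr)$ with $I(\tau)=\intzi[\ell(y)]^{-\alpha}e^{-\tau y}\,\dd y$. Restricting $I(\tau)$ to $(0,1/\tau)$ (where $e^{-\tau y}\ge e^{-1}$) and to $(1/\tau,2/\tau)$ (where $e^{-\tau y}\ge e^{-2}$ and, using $\ell(2/\tau)\le2\ell(1/\tau)$, $[\ell(y)]^{-\alpha}\ge2^{-\alpha}[\ell(1/\tau)]^{-\alpha}$) gives $I(\tau)\ge e^{-1}J(\tau)+e^{-2}2^{-\alpha}\bigl(\tau[\ell(1/\tau)]^\alpha\bigr)^{-1}$. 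Substituting these and cancelling (the $\tau^{-1}$ term and, with slack to spare, the $J$–term, using $K\le1$), the claim reduces to $c_\alpha J(\tau)\,\frac{1-K}{e}\ge c_\alpha e^{-2}(K-2^{-\alpha})\bigl(\tau[\ell(1/\tau)]^\alpha\bigr)^{-1}$; since $J(\tau)\,\tau\,[\ell(1/\tau)]^\alpha\ge1$ (because $\ell(y)\le\ell(1/\tau)$ on $(0,1/\tau)$), this follows from $e(1-K)\ge K-2^{-\alpha}$, itself a consequence of $1-K=\intzi\bigl(1-(y+1)^{-\alpha}\bigr)e^{-y}\,\dd y\ge(1-2^{-\alpha})\int^\infty_1 e^{-y}\,\dd y=(1-2^{-\alpha})e^{-1}$.

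The routine parts are the two‑sided prefactor estimate and the case split for the pointwise bound. The delicate step — the main obstacle — is closing the constant count in the ratio bound: the straightforward termwise comparison of $\Psi$ with $c_\alpha\Phi_\alpha$ leaves a deficit of order $2^{-\alpha}$ on the piece coming from $r_{\tau,1}$, and this deficit must be paid out of the slack on the $m_\tau$/$J$–piece (the gap between the available factor $(1-K)/e$ and the required $(K-2^{-\alpha})/e$), using the elementary inequality $J(\tau)\,\tau\,[\ell(1/\tau)]^\alpha\ge1$ together with $e(1-K)\ge1-2^{-\alpha}$; making these fit is where the work lies.
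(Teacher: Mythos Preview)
Your argument is correct. The pointwise bound $\chi\alz\le G\alz$ follows exactly the paper's line (your upper prefactor estimate is even slightly sharper, lacking the factor $e^\alpha$, though this is not needed).

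For the ratio bound, however, you take a genuinely different route. The paper introduces an intermediate function $G^*\alz$ built from
\[
  F^*_\alpha(y,t)=m_t(y)+r^*_{t,1}(y)+r_{t,2}(y),
  \qquad r^*_{t,1}(y)=\cf{y\ge1/t}\,[\ell(y)]^{-\alpha}e^{-ty},
\]
i.e.\ the same as $F_\alpha$ except that on $\{y\ge1/t\}$ one keeps $[\ell(y)]^{-\alpha}$ rather than freezing it at $[\ell(1/t)]^{-\alpha}$. With this choice one gets a \emph{pointwise} sandwich $\chi\alz/G^*\alz\in[\tfrac{1}{2}e^{-\alpha-1},1]$, and since $F_\alpha-F^*_\alpha=r_{t,1}-r^*_{t,1}\ge0$ is supported on a single piece, the integral ratio reduces to
\[
  \frac{\int F^*_\alpha(\cdot,t)}{\int F_\alpha(\cdot,t)}
  \ge\frac{\int r^*_{t,1}}{\int r_{t,1}}
  =\int_1^\infty\Bigl[\tfrac{\ell(1/t)}{\ell(y/t)}\Bigr]^\alpha e^{-y+1}\,\dd y
  \ge\int_0^\infty(y+1)^{-\alpha}e^{-y}\,\dd y,
\]
the last step by concavity of $\ell$. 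This avoids any termwise bookkeeping: the constant $K$ drops out directly from a change of variable.

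Your approach instead compares $\Psi(\tau)$ with $c_\alpha\Phi_\alpha(\tau)$ term by term, estimates $I(\tau)$ on $(0,1/\tau)\cup(1/\tau,2/\tau)$, and then balances the $(K-2^{-\alpha})$ deficit on the $r_{\tau,1}$ piece against the $(1-K)$ slack on the $m_\tau$ piece via $J(\tau)\,\tau\,[\ell(1/\tau)]^\alpha\ge1$ and $e(1-K)\ge1-2^{-\alpha}\ge K-2^{-\alpha}$. This is sound but more labour: the paper's device of inserting $G^*$ turns the delicate balancing you flag as ``the main obstacle'' into a one-line ratio of matching integrals.
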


From \eqref{e:chi-G}, $G\alz$ is an envelope of $\chi\alz$.  However,
it is not easy to work with.  Note that given
$t>0$, from \eqref {e:varphi-def}, $\ell\# m_t(v) = v^{-\alpha} e^v
\cf{0<v<\ell(1/t)} = \varphi_{0,\ell(1/t), \delta}(v)$.  Denote by
$\ell\otimes\Id$ the mapping $(y,\theta)\to (v,\theta) = (\ell(y),
\theta)$ and note $(\ell\otimes\Id) f(v,\theta) = f(e^v-1, \theta)
e^v$ for any $f(y,\theta)\ge 0$.  As in \eqref{e:calpha}, let $\tau =
z\HAT$.  Then from \eqref{e:F-G},
\[
  (\ell\otimes\Id)\#G\alz(v,\theta)
  =
  \frac{c_\alpha e^\alpha}z \tau e^{-\tau}
  [\varphi_{0,\ell(1/\tau), \delta}(v) + \ell\#r_{\tau,1}(v) +
  \ell\#r_{\tau,2}(v)].
\]
Recall  $\varphi^*_{0,\ell(1/\tau), \delta}$ defined in
\eqref{e:varphi5}.  Let
\begin{align}\label{e:Palz}
  P\alz(v,\theta)=
  \frac{c_\alpha e^\alpha}z \tau e^{-\tau}
  [\varphi^*_{0,\ell(1/\tau), \delta}(v) + \ell\#r_{\tau,1}(v) +
  \ell\#r_{\tau, 2}(v)].
\end{align}
Then from \eqref{e:varphi*},
\[
  (\ell\otimes\Id)\#G\alz(v,\theta) \le P\alz(v,\theta), \quad
  \frac{\int(\ell\otimes\Id)\#G\alz}{\int P\alz}\ge\nth4.
\]

Combining \eqref{e:chi-G} with the above display,
\begin{align} \label{e:chi-P}
  \begin{split}
    (\ell\otimes\Id)\#\chi\alz(v,\theta)
    \le P\alz(v,\theta), \hspace{1.3cm}
    \\
    \inf_{z>0} \frac{\int (\ell\otimes\Id)\#\chi\alz}
    {\int P\alz}
    \ge \nth{8 e^{\alpha+1}} \intzi (y+1)^{-\alpha} e^{-y}\,\dd y.
  \end{split}
\end{align}
Thus, to sample from the normalized $\chi\alz$, one can sample
$(v,\theta)$ from the normalized $(\ell\otimes\Id)\#\chi\alz$ with
$P\alz$ as the envelope and return $(\ell\otimes \Id)^{-1}(v,\theta) = 
(e^v-1, \theta)$.  The expected number of iterations of the sampling
is bounded for $\alpha\in (0,1)$ and $z>0$.

The issue now is how to sample from the normalized $P\alz$.  From
\eqref{e:Palz}, if $(v,\theta)$ has the normalized $P\alz$ as the
joint \pdf, then conditioning on $\theta$, $v$ has the
normalized $\varphi^*_{0, \ell(1/\tau),\delta}(v) +
\ell\#r_{\tau,1}(v) + \ell\#r_{\tau, 2}(v)$ as its \pdf.  From
\eqref{e:varphi*},
\begin{align} \nonumber
  \int \varphi^*_{0,\ell(1/\tau),\delta}
  = 2 M(0,\ell(1/\tau), \delta)
  &=
  \frac{2[\ell(1/\tau)]^\delta}\delta + 
  \frac{4[e^{\ell(1/\tau)} - 1 - \ell(1/\tau)]}{
    [\ell(1/\tau)]^\alpha
  }
  \\\label{e:marginal0}
  &=
  \frac4{\tau [\ell(1/\tau)]^\alpha} + (2/\delta-4)
  [\ell(1/\tau)]^\delta := I_0(\theta)
\end{align}
and the normalized $\varphi^*_{0,\ell(1/\tau),\delta}(v)$ can be
sampled by \Cref{a:varphi}.  On the other hand,
\begin{align} \label{e:marginal2}
  \int r_{\tau,1}
  = \nth{e\tau [\ell(1/\tau)]^\alpha} := I_1(\theta), \quad
  \int r_{\tau,2}
  = (c_\alpha \tau)^{-1} := I_2(\theta),
\end{align}
and the normalized $r_{\tau,1}$ and $r_{\tau,2}$ are the \pdf's of
$(\xi+1)/\tau$ and $\xi/\tau$, respectively, where $\xi\sim
\Dexp(1)$.  As a result, conditioning on $\theta$, $v$ can be sampled
efficiently.  Next, the marginal \pdf of $\theta$ is the normalized
$\int P\alz(v,\theta)\,\dd v$.  From
\eqref{e:Palz}--\eqref{e:marginal2},
\begin{align*}
  \int P\alz(v,\theta)\,\dd v
  &= \frac{c_\alpha e^\alpha} z \tau e^{-\tau}
  [I_0(\theta) + I_1(\theta) + I_2(\theta)]
  \\
  &=\frac{e^\alpha} z
  e^{-\tau}
  \Cbr{2c_\alpha(1/\delta-2)\tau [\ell(1/\tau)]^\delta +
    \frac{c_\alpha(4+1/e)}{[\ell(1/\tau)]^\alpha} + 1
  }.
\end{align*}
Define 
\begin{gather} \label{e:kappa-psi-alpha}
  \begin{split}
    \kappa_{\alpha,1} = 2c_\alpha(1/\delta-2), \quad
    \kappa_{\alpha,2} = c_\alpha (4+1/e),
    \\
    \psi_\alpha(t)=
    \kappa_{\alpha,1}t [\ell(1/t)]^\delta
    +
    \kappa_{\alpha,2}[\ell(1/t)]^{-\alpha} +1.
  \end{split}
\end{gather}
Then $\int P\alz(v,\theta)\,\dd v = z^{-1}e^\alpha Q\alz(\theta)$,
where 
\begin{align} \label{e:Dpsi2}
  Q\alz(\theta)=\psi_\alpha(z\HAT) e^{-z\HAT}.
\end{align}
The sampling from the normalized $Q\alz$ is the focus of
\cref{s:subroutine}.  For now, assume that a subroutine to sample
from the normalized $Q\alz$ is available.  Then the normalized
$P\alz$, and hence $\chi\alz$ can be sampled as in \Cref{a:ell-P}.

\begin{algorithm}[t]
  \caption{Sampling from the normalized $\chi\alz$ using mapping
    $\ell$ and envelope $P\alz$.} \label{a:ell-P}
  \begin{algorithmic}[1]
    \Repeat
    \State Sample $\theta\sim$ normalized $Q\alz(\theta)$ and
    $U\sim \Dunif(0,1)$, $\tau\gets z\HAT$.
    \label{a:ell-P1}
    \State Sample $i\in\{0,1,2\}$ with weights $I_i(\theta)$ defined
    in \eqref{e:marginal0}--\eqref{e:marginal2}.
    \If{$i=0$}
    \State Sample $v\sim$ normalized $\varphi^*_{0, \ell(1/\tau),
      \delta}(v)$ by \Cref{a:varphi}.  \label{a:ell-P5}
    \Else
    \State Sample $\xi\sim\Dexp(1)$, $v\gets
    \ell((\xi+2-i)/\tau)$.
    \EndIf
    \Until $v>0$ and $U\le\chi\alz(e^v-1,\theta)e^v/P\alz(v,
    \theta)$.  \label{a:ell-P9}
    \Comment{By \eqref{e:chi-P}.}
    \State \Return $(e^v-1,\theta)$. 
  \end{algorithmic}
\end{algorithm}

\def\nal{N}
\section{Subroutine of the third algorithm}
\label{s:subroutine}
This section deals with the sampling from the normalized 
$Q\alz(\theta)$ in \eqref{e:Dpsi2}.  The final result is
\Cref{a:qalz}.  Henceforth, let $0<z\le1$.

The first step is to find a suitable proxy for $\HAT$.  Fix parameters
\begin{align} \label{e:pars}
  \textstyle
  \Delta\in (0,1), \quad \alpha_0 \in (\nth2,1),
  \quad
  \theta_0 \in
  (\frac\pi3 + \frac\pi{3\alpha_0}, \pi).
\end{align}
Henceforth, in the expression $a=O(b)$, the implicit constant depends
on the fixed parameters $\Delta$, $\alpha_0$, and $\theta_0$.  As
before, $\delta = 1-\alpha$.

First consider $\HAT$ on $[0, \theta_0)$.  Fix $0=\seqop[0]t < m <  t_{m+1}=\theta_0$ such that
\begin{align} \label{e:m-al}
  m = O(1), \quad \HAT[t_i][1]\le(1+\Delta) \HAT[t_{i-1}][1],
  \quad i=1, \ldots, m+1.
\end{align}
This can be done as follows.  By \eqref{e:HAT1}, $\HAT[\theta][1]$ is
log-convex and $(\ln H_1)'(\theta)  = f(\theta)/\theta$, where
$f(\theta)=1+ [\sinc(\theta)]^{-2} - \lfrac{2\cos\theta}
{\sinc(\theta)}$.  Let $T_0=\theta_0$, $T_{i+1} = T_i[1 -
\lfrac{\ln (1+\Delta)}{f(T_i)}]_+$, $i\ge0$, and $m=\min\{i\ge0:
T_{i+1}=0\}$.   For $i\le m$,
\[
  0\le \ln \frac{\HAT[T_i][1]}{\HAT[T_{i+1}][1]}
  \le (\ln H_1)'(T_i) (T_i - T_{i+1}) \le \frac{f(T_i)}{T_i} \frac{T_i
    \ln(1+\Delta)}{f(T_i)} = \ln (1+\Delta).
\]
Then $t_i = T_{m+1-i}$, $0\le i\le m+1$, satisfy \eqref{e:m-al}.

From \Cref{p:HATH1}, $\HAT[t_i] < (1+\Delta) \HAT[t_{i-1}]$ for
$\alpha\in [\nth2,1)$.  Define 
\begin{align} \label{e:JAT-lower}
  \JAT = \sum^m_{i=0} \cf{\theta\in [t_i, t_{i+1})} \HAT[t_i], \quad
  \theta\in [0,\theta_0).
\end{align}
Since $\HAT$ is an increasing function, then for $\theta\in
[0,\theta_0)$, 
\begin{align} \label{e:JH}
  \JAT \le \HAT \le (1+\Delta) \JAT.
\end{align}
Since $\JAT$ is a piecewise constant function on $[0,\theta_0)$, it is
easy to handle.  The next step is to extend $\JAT$ to $[\theta_0,
\pi)$, so that \eqref{e:JH} holds on the entire $[0,\pi)$.  The
construction of $\JAT$ on $[\theta_0, \pi)$ relies on the following
properties of $\XIAT$ in \eqref{e:KAT}.

\begin{prop} \label{p:nodes}
  Fix $\alpha_0\le\alpha<1$.  Let $C = -\pi[\ln\sinc(x)]''_{x=\pi
    - \alpha_0\theta_0}$.  Define
  \phref{p:p:nodes}{Proof}
  \[
    f_\alpha(\theta)
    = \sup\Cbr{\theta'\le\pi: \frac{\theta'/(\pi -
        \alpha\theta')} {\theta/(\pi - \alpha\theta)} \le 1+\Delta/2,
      \quad 
      \theta'\le \theta +\nth C\ln\frac{1+\Delta}{1+\Delta/2}
    }.
  \]
  Define $\theta_i = f_\alpha(\theta_{i-1})$ for $i\ge1$.  Let $\nal
  =\#\{i\ge0: \theta_i\in (0, \pi)\}$.  Then $\nal =
  O(\ln(1/\delta))$, $\theta_0 < \seqop\theta < \nal <
  \theta_{\nal+1}:=\pi$, and
  \begin{align} \label{e:K-nodes}
    1\le \frac{\XIAT[\theta_{i+1}]}{\XIAT[\theta_i]} \le1+\Delta,
    \quad 0\le i\le\nal.
  \end{align}
\end{prop}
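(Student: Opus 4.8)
The plan is to construct the sequence $\theta_0<\theta_1<\cdots$ by iterating $f_\alpha$, show it reaches $\pi$ in $O(\ln(1/\delta))$ steps, and verify the two-sided bound \eqref{e:K-nodes}. First I would confirm that $f_\alpha$ is well-defined and strictly increasing past its argument: by \Cref{p:H-sand}, for $\alpha\in(1/2,1)$ the function $\theta\mapsto\theta/(\pi/\alpha-\theta)$ (hence $\theta/(\pi-\alpha\theta)$ up to the constant $\alpha$) is increasing on $[(\tfrac13+\tfrac1{3\alpha})\pi,\pi)$, and it blows up to $+\infty$ as $\theta\uparrow\pi/\alpha$; since $\theta_0>(\tfrac13+\tfrac1{3\alpha_0})\pi$, both defining constraints in $f_\alpha(\theta)$ are active for $\theta\ge\theta_0$, and the first constraint alone would push $\theta'$ strictly above $\theta$ because the ratio at $\theta'=\theta$ is $1<1+\Delta/2$. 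So $f_\alpha(\theta)>\theta$, and either the sequence increases by at least a fixed amount or it is capped by hitting $\pi$.

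Next I would bound the number of steps $\nal$. The second constraint guarantees $\theta_{i+1}\le\theta_i+\tfrac1C\ln\tfrac{1+\Delta}{1+\Delta/2}$, so if that constraint were always binding we would need $\Theta(1)$ steps — that is not the source of the logarithm. The logarithm comes from the region where the \emph{first} constraint binds, i.e.\ near $\theta=\pi/\alpha$: there $\theta/(\pi-\alpha\theta)$ grows like $1/(\pi-\alpha\theta)$, and to multiply it by at most $1+\Delta/2$ we may only advance $\theta$ by a factor comparable to the current distance $\pi-\alpha\theta$ to the pole. Since $\pi-\alpha\theta_0$ is of order $\delta$ when $\alpha$ is close to $1$ (because $\theta_0<\pi$ is fixed, $\pi-\alpha\theta_0=\pi-\theta_0+\delta\theta_0$, which is actually bounded below — wait, so the relevant small quantity is rather $\pi/\alpha-\pi$, the gap between the pole and $\pi$, which equals $\pi\delta/\alpha=O(\delta)$). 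Thus on the last stretch $\theta\in[\pi-O(\delta),\pi)$ the multiplicative steps in $\pi-\alpha\theta$ are by a bounded factor, and the number of them to traverse from a constant-order distance down to reaching $\pi$ is $O(\ln(1/\delta))$. Combining the two regimes gives $\nal=O(\ln(1/\delta))$, with the implicit constant depending only on $\Delta,\alpha_0,\theta_0$.

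Then I would establish \eqref{e:K-nodes}. The lower bound $\XIAT[\theta_{i+1}]\ge\XIAT[\theta_i]$ is immediate from the monotonicity of $\XIAT$ on $[(\tfrac13+\tfrac1{3\alpha})\pi,\pi)\supseteq[\theta_0,\pi)$ from \Cref{p:H-sand}, since $\theta_i\ge\theta_0$. For the upper bound, write $\XIAT=\XiAT[\theta]\cdot\theta/(\pi/\alpha-\theta)$ as in \eqref{e:KAT}. The rational factor is controlled directly by the first defining constraint of $f_\alpha$: the ratio of $\theta/(\pi-\alpha\theta)$ between consecutive nodes is at most $1+\Delta/2$, and $\theta/(\pi/\alpha-\theta)=\alpha\cdot\theta/(\pi-\alpha\theta)$, so that factor contributes at most $1+\Delta/2$. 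For the remaining factor $\XiAT$, I would use the derivative bound $[\ln\XiAT]'\le\theta\bigl[\sin^{-2}(\alpha\theta)-(\pi-\alpha\theta)^{-2}\bigr]$ from \Cref{p:H-sand}. Bounding the bracket: $\sin^{-2}(\alpha\theta)-(\pi-\alpha\theta)^{-2}$ is exactly $[\ln(1/\sinc(t))]''$ at $t=\pi-\alpha\theta$, and for $\alpha\theta\le\alpha_0\theta_0$ this is at most its value at $t=\pi-\alpha_0\theta_0$, i.e.\ $\le C/\pi\cdot(\text{const})$ — more precisely this is where the constant $C$ was chosen. Hence $[\ln\XiAT]'\le \theta\cdot (C/\pi)\le C$ on the relevant range, wait I need $\theta\le\pi$, so $[\ln\XiAT]'\le C$. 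Then $\ln\XiAT[\theta_{i+1}]-\ln\XiAT[\theta_i]\le C(\theta_{i+1}-\theta_i)\le C\cdot\tfrac1C\ln\tfrac{1+\Delta}{1+\Delta/2}=\ln\tfrac{1+\Delta}{1+\Delta/2}$ by the second defining constraint, so that factor contributes at most $\tfrac{1+\Delta}{1+\Delta/2}$. Multiplying, $\XIAT[\theta_{i+1}]/\XIAT[\theta_i]\le(1+\Delta/2)\cdot\tfrac{1+\Delta}{1+\Delta/2}=1+\Delta$, as claimed. A small subtlety is that $[\ln\XiAT]'$ is stated as a one-sided bound and the $t=\pi-\alpha\theta$ at which the second-derivative formula is evaluated varies with $\theta$; I must check the second derivative $[\ln(1/\sinc(t))]''$ is monotone so that restricting $\alpha\theta\le\alpha_0\theta_0$ (equivalently $t\ge\pi-\alpha_0\theta_0$) controls it uniformly — that monotonicity near $t=0^+$ vs $t$ up to $\pi-\alpha_0\theta_0$ is the one place I would double-check the sign, and I expect it to be the main technical obstacle together with carefully separating the two regimes in the step-count argument.
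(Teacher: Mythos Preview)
Your approach is essentially the paper's: factor $\XIAT=\XiAT\cdot\theta/(\pi/\alpha-\theta)$ via \eqref{e:KAT}, control the rational factor by the first constraint and $\XiAT$ by the second via the derivative bound in \Cref{p:H-sand}, and split the step count into an additive regime and a multiplicative one. Two small corrections: the inequality should read $\alpha\theta\ge\alpha_0\theta_0$ (hence $t=\pi-\alpha\theta\le\pi-\alpha_0\theta_0$), and the needed monotonicity is that $[\ln(1/\sinc)]''$ is \emph{increasing} on $(0,\pi)$ (which follows from the series \eqref{e:sinc-zeta}), giving $[\ln\XiAT]'\le\theta\cdot C/\pi\le C$; for the step count the paper makes your informal split precise by writing $f_\alpha(\theta)=\min\bigl(\pi,\,g_\alpha(\theta),\,\theta+\epsilon\bigr)$ with $g_\alpha(\theta)=(1+\Delta/2)\theta/\bigl(1+\alpha(\Delta/2)\theta/\pi\bigr)$, noting $g_\alpha'(\theta_0)<1$ so the crossing $\theta_*$ of $g_\alpha(\theta)=\theta+\epsilon$ exists, and observing that past $\theta_*$ each step multiplies $\theta/(\pi-\alpha\theta)$ by exactly $1+\Delta/2$ while this quantity is capped by $1/\delta$ at $\theta=\pi$.
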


Now for $\theta\in [\theta_0, \pi)$, define
\begin{align} \label{e:JAT-upper}
  \JAT
  =
  \Sbr{\sum^\nal_{i=0} \cf{\theta\in [\theta_i, \theta_{i+1})}
    \XIAT[\theta_i]
  }\Sbr{1+\frac{\delta\pi}{\alpha(\pi-\theta)}}^{1/\delta}.
\end{align}
Then $\JAT$ is increasing and right continuous on $[0,\pi)$.  From
\Cref{p:H-sand} and  \eqref{e:K-nodes}, \eqref{e:JH} also holds for
$\theta\in(\theta_0, \pi)$.  Then \eqref{e:JH} holds on $[0,\pi)$.

It is easy to see that $\JAT\ge1$, $\JAT[t_i]=\HAT[t_i]$ for $0\le
i\le m$, and $\JAT[\theta_i] = \HAT[\theta_i]$ for $0\le i\le \nal$.  
Let
\begin{align} \label{e:theta-*-z}
  \vartheta_z = \inf\{\theta: \JAT\ge1/z\}.
\end{align}
Since $0<z\le1$, the following are all the possibilities:
\begin{align*}
  1\le 1/z\le\HAT[\theta_0]
  \implies&
  \vartheta_z = t_i\ \text{with~} i = \min\{k\le m+1:
  \HAT[t_k]\ge1/z\} \\
  1/z>\HAT[\theta_0]
  \implies&
  \vartheta_z = 
    \Grp{
      \pi-\frac{\lfrac{\delta\pi}\alpha}{
      [1/(z\XIAT[\theta_i])]^\delta -1}
  }\wedge \theta_{i+1}\ \text{with~}
  \\&i = \max\{k:
  \HAT[\theta_k]\le 1/z\}
\end{align*}

We apply \Cref{a:AR} to sample from the normalized $Q\alz$.  To do
this, we partition $[0, \pi)$ into several intervals, which include
$[\vartheta_z, \pi)$ and multiple intervals contained in $[0,
\vartheta_z)$.  For each of the intervals $S$, we construct an
envelope $Q^*_S$ for $Q\alz(\theta) \cf{\theta\in S}$.  The
construction on $[\vartheta_z,\pi)$ is relatively simple so is
considered first.   \phref{p:p:N-J-sup}{Proof}
\begin{prop} \label{p:N-J-sup}
  Denote $\kappa_{\alpha,3}=\kappa_{\alpha,1} + (\ln 2)^{-\alpha} 
  \kappa_{\alpha,2}+1$ and $\kappa_{\alpha,4}=\kappa_{\alpha,1}
  (\ln2)^\delta + \kappa_{\alpha,2}$.  Let
  \[
    \omega_z = \frac{2\kappa_{\alpha,3}}{\tau_z\varrho_z}, \quad
    s_z = \varrho_z \tau^{\alpha+1}_z e^{-\tau_z}.
  \]
  where $\tau_z = z\HAT[\vartheta_z]$ and $\varrho_z = (\ln
  H_\alpha)'(\vartheta_z)$.  Then for $\theta\in [\vartheta_z, \pi)$,
  \begin{align} \label{e:z-star}
    Q\alz(\theta) \le 
    \omega_z\Vrx_{s_z}(\theta - \vartheta_z).
  \end{align}
  Furthermore,
  \begin{align}\label{e:K-sup}
    \omega_z
    \le
    \frac{2\kappa_{\alpha,3}/\kappa_{\alpha,4}}
    {(z+1) e^{-z} - 2/e} \int^{\vartheta_z}_0 Q\alz. 
  \end{align}
  Finally, for all $\alpha\ge\alpha_0$,
  \[
    \frac{\kappa_{\alpha,3}}{\kappa_{\alpha,4}} \le
    (\ln 2)^{\alpha-1} + \frac{2+e^{-1}}{\ln
      2}\frac{1-\alpha}{2\alpha-1} +
    \Grp{\frac{1-\alpha}\alpha}^\alpha
    \frac{1-\alpha}{2\alpha-1},
  \]
  and the \rhs is a decreasing function of $\alpha\in[\alpha_0, 1)$.
\end{prop}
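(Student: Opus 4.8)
The plan is to deduce all four assertions from a single pointwise bound for $\psi_\alpha$. Throughout I take $\alpha\in[\alpha_0,1)$, so $\kappa_{\alpha,1}=2c_\alpha(2\alpha-1)/\delta>0$ and $\kappa_{\alpha,2}=c_\alpha(4+e^{-1})>0$, and $0<z<1$; the edge value $z=1$ makes $\vartheta_z=0$, $\varrho_z=0$ and is handled by continuity, reading $\omega_z\Vrx_{s_z}(\cdot)$ there as its limit $\kappa_{\alpha,3}\tau_z^\alpha e^{-\tau_z}(1\wedge e^{1-s_z\,\cdot})$. I will freely use, from \Cref{p:HATH1}, that $\HAT$ is increasing and log-convex on $[0,\pi)$ with $\HAT[0]=1$ and $(\ln H_\alpha)'(0)=0$, hence $(\ln H_\alpha)'$ is strictly increasing and positive on $(0,\pi)$; that $\vartheta_z>0$, so $\varrho_z=(\ln H_\alpha)'(\vartheta_z)>0$; and that $\tau_z=z\HAT[\vartheta_z]\ge zJ_\alpha(\vartheta_z)\ge1$, since $\JAT\le\HAT$ by \eqref{e:JH} and $J_\alpha(\vartheta_z)\ge1/z$ by right continuity of $\JAT$. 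The workhorse estimate I will prove first is $\psi_\alpha(\tau)\le\kappa_{\alpha,3}\tau^\alpha$ for $\tau\ge1$: bound the three summands of $\psi_\alpha$ using $\ln(1+1/\tau)\le1/\tau$ (so $\tau[\ln(1+1/\tau)]^\delta\le\tau^\alpha$), the concavity bound $\ln(1+u)\ge u\ln 2$ on $u\in[0,1]$ with $u=1/\tau$ (so $[\ln(1+1/\tau)]^{-\alpha}\le(\ln 2)^{-\alpha}\tau^\alpha$), and $1\le\tau^\alpha$.

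For \eqref{e:z-star}, fix $\theta\in[\vartheta_z,\pi)$, put $\tau=z\HAT\ge\tau_z\ge1$, and write $Q\alz(\theta)\le\kappa_{\alpha,3}g(\tau)$ with $g(t)=t^\alpha e^{-t}$, decreasing on $[1,\infty)$. I will take the minimum of two bounds. Monotonicity of $g$ gives $Q\alz(\theta)\le\kappa_{\alpha,3}g(\tau_z)=\omega_z s_z/2$. Since $\ln H_\alpha$ is convex, its graph lies above the tangent at $\vartheta_z$, so $\tau\ge\tau_z e^{\varrho_z x}$ with $x:=\theta-\vartheta_z$, hence $g(\tau)\le g(\tau_z e^{\varrho_z x})$, and I claim $g(\tau_z e^{\varrho_z x})\le e\,g(\tau_z)e^{-s_z x}$. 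Taking logarithms and using $e^u-1\ge u+u^2/2$ for $u\ge0$, this reduces to the nonnegativity of the quadratic $\frac{\tau_z\varrho_z^2}{2}x^2+\varrho_z h(\tau_z)\,x+1$, where $h(\tau)=\tau-\alpha-\tau^{\alpha+1}e^{-\tau}$ and $\varrho_z h(\tau_z)=\varrho_z(\tau_z-\alpha)-s_z$. If $h(\tau_z)\ge0$ all coefficients are nonnegative. If $h(\tau_z)<0$, then, since $\tau_z\ge1$ gives $\tau_z^{\alpha+1}e^{-\tau_z}\le\tau_z^2 e^{-\tau_z}\le4e^{-2}$ while $\tau_z\ge1>\alpha$, we get $|h(\tau_z)|<4e^{-2}$, so the discriminant $\varrho_z^2(h(\tau_z)^2-2\tau_z)<\varrho_z^2((4e^{-2})^2-2)<0$ and the quadratic is again nonnegative. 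Taking the minimum of the two bounds yields $Q\alz(\theta)\le\kappa_{\alpha,3}\tau_z^\alpha e^{-\tau_z}(1\wedge e^{1-s_z x})=\omega_z\Vrx_{s_z}(\theta-\vartheta_z)$. I expect the discriminant step in the case $h(\tau_z)<0$ to be the one genuinely delicate point; the rest here is bookkeeping.

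For \eqref{e:K-sup}, I change variables $\tau=z\HAT$ on $(0,\vartheta_z)$, for which $d\tau=\tau(\ln H_\alpha)'(\theta)\,d\theta$; since $(\ln H_\alpha)'$ is increasing, $d\theta\ge d\tau/(\tau\varrho_z)$ and therefore $\int_0^{\vartheta_z}Q\alz\ge\varrho_z^{-1}\int_z^{\tau_z}\tau^{-1}\psi_\alpha(\tau)e^{-\tau}\,d\tau\ge\varrho_z^{-1}\int_z^{1}\tau^{-1}\psi_\alpha(\tau)e^{-\tau}\,d\tau$, using $\tau_z\ge1$. On $[z,1]$, $\tau\le1$, so $[\ln(1+1/\tau)]^\delta\ge(\ln 2)^\delta$, $[\ln(1+1/\tau)]^{-\alpha}/\tau\ge\tau^{\alpha-1}\ge1$ and $1/\tau\ge1$, giving $\psi_\alpha(\tau)/\tau\ge\kappa_{\alpha,1}(\ln 2)^\delta+\kappa_{\alpha,2}=\kappa_{\alpha,4}$; hence $\int_0^{\vartheta_z}Q\alz\ge\kappa_{\alpha,4}(e^{-z}-e^{-1})/\varrho_z$. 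Finally $e^{-z}-e^{-1}\ge(z+1)e^{-z}-2/e$ (equivalently $ze^{-z}\le e^{-1}$, true on $(0,1]$) and $\tau_z\ge1$ give $\int_0^{\vartheta_z}Q\alz\ge\kappa_{\alpha,4}[(z+1)e^{-z}-2/e]/(\tau_z\varrho_z)$; since $\omega_z=2\kappa_{\alpha,3}/(\tau_z\varrho_z)$ and $(z+1)e^{-z}-2/e>0$ on $(0,1)$, rearranging is exactly \eqref{e:K-sup}.

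For the final part, I split $\kappa_{\alpha,3}=\kappa_{\alpha,1}+(\ln 2)^{-\alpha}\kappa_{\alpha,2}+1$ over $\kappa_{\alpha,4}$ and bound the denominator below by $\kappa_{\alpha,1}(\ln 2)^\delta$ (legitimate since $\kappa_{\alpha,1}>0$): this yields $\kappa_{\alpha,1}/\kappa_{\alpha,4}\le(\ln 2)^{-\delta}=(\ln 2)^{\alpha-1}$; then $(\ln 2)^{-\alpha}\kappa_{\alpha,2}/\kappa_{\alpha,4}\le(\ln 2)^{-1}\kappa_{\alpha,2}/\kappa_{\alpha,1}=\frac{4+e^{-1}}{2\ln 2}\cdot\frac{1-\alpha}{2\alpha-1}\le\frac{2+e^{-1}}{\ln 2}\cdot\frac{1-\alpha}{2\alpha-1}$ (using $\alpha+\delta=1$, $\kappa_{\alpha,2}/\kappa_{\alpha,1}=(4+e^{-1})\delta/[2(2\alpha-1)]$, and $4+e^{-1}\le2(2+e^{-1})$); and $1/\kappa_{\alpha,4}\le\frac{1}{2(\ln 2)^\delta}\,c_\alpha^{-1}\,\frac{1-\alpha}{2\alpha-1}\le\left(\frac{1-\alpha}{\alpha}\right)^{\alpha}\frac{1-\alpha}{2\alpha-1}$ (using $c_\alpha^{-1}=((1-\alpha)/\alpha)^\alpha$ and $2(\ln 2)^\delta\ge2\ln 2>1$). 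Summing the three bounds gives the displayed inequality. For monotonicity in $\alpha\in[\alpha_0,1)$: $(\ln 2)^{\alpha-1}$ has logarithmic derivative $\ln\ln 2<0$; $\frac{d}{d\alpha}\frac{1-\alpha}{2\alpha-1}=-(2\alpha-1)^{-2}<0$; and $T(\alpha):=\left(\frac{1-\alpha}{\alpha}\right)^{\alpha}\frac{1-\alpha}{2\alpha-1}$ satisfies $\frac{d}{d\alpha}\ln T(\alpha)=\ln\frac{1-\alpha}{\alpha}-\frac{\alpha+1}{1-\alpha}-1-\frac{2}{2\alpha-1}<0$ on $(1/2,1)$ since every summand is negative there. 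Hence the right-hand side, a sum of three decreasing functions, is decreasing, which finishes the plan.
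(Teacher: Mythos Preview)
Your proof is correct, and your route for \eqref{e:z-star} is genuinely different from the paper's.  The paper argues as follows: from the sandwich $\kappa_{\alpha,4}\le\psi_\alpha(\tau)/\tau^\alpha\le\kappa_{\alpha,3}$ (valid for $\tau\ge1$) it sets $D_{\alpha,z}(\theta)=\kappa_{\alpha,3}\tau^\alpha e^{-\tau}$, invokes an auxiliary lemma to the effect that $F^c e^{-F}$ is log-concave when $F$ is convex and $F\ge c$, concludes that $D_{\alpha,z}$ is log-concave and decreasing on $[\vartheta_z,\pi)$, and then applies the general log-concave envelope \eqref{e:log-concave} with $A$ chosen as an upper bound for $\int_{\vartheta_z}^\pi D_{\alpha,z}$, itself obtained via a change-of-variables lemma.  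You bypass all of this: you use log-convexity of $H_\alpha$ (immediate from the series in \Cref{p:HATH1}) to get the tangent bound $\tau\ge\tau_z e^{\varrho_z x}$, and then reduce the exponential-decay bound to positivity of an explicit quadratic in $x$, handled by a short discriminant check.  Your approach is more elementary and self-contained; the paper's is more modular, reusing the log-concave machinery it has already set up in \Cref{ss:log-cv}.

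For \eqref{e:K-sup} the two arguments are close in spirit (both are the change of variables $\tau=zH_\alpha(\theta)$ combined with monotonicity of $(\ln H_\alpha)'$), but your lower bound $\psi_\alpha(\tau)/\tau\ge\kappa_{\alpha,4}$ on $(0,1]$ is actually a cleaner way to reach the target than the paper's, since the paper's pointwise sandwich is only derived for $\tau\ge1$ and the extension to $\tau\in(z,1)$ is left implicit there.  Your treatment of the final ratio bound and its monotonicity is essentially the same as the paper's, with slightly looser constants absorbed into the stated right-hand side.
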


From \Cref{p:N-J-sup}, $\omega_z\Vrx_{s_z}(\theta - \vartheta_z)$ will
be used as the envelope for $Q\alz$ on $[\vartheta_z, \pi)$.  To
construct the envelops on $[0,\vartheta_z)$, the following result is the
starting point.
\begin{prop} \label{p:N-J-sandwich} 
  For $\theta\in [0, \vartheta_z)$,  \phref{p:p:N-J-sandwich}{Proof} 
  \begin{align}\label{e:psi-phi} 
    e^{-1-\Delta} \le
    \frac{Q\alz(\theta)}{\psi_\alpha(z\JAT)}
    \le 1+\Delta.
  \end{align}
\end{prop}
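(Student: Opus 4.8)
The plan is to exploit the two-sided bound \eqref{e:JH}, namely $\JAT\le\HAT\le(1+\Delta)\JAT$ on $[0,\pi)$, together with the defining property of $\vartheta_z$ in \eqref{e:theta-*-z}: since $\JAT$ is increasing, for every $\theta\in[0,\vartheta_z)$ we have $\JAT<1/z$, i.e.\ $z\JAT<1$, and (recalling $\JAT\ge1$ from the text) $z\JAT\in(0,1)$; consequently $z\JAT\le z\HAT\le(1+\Delta)z\JAT<1+\Delta$. Writing $t=z\JAT$ and $\tilde t=z\HAT$, so that $Q\alz(\theta)=\psi_\alpha(\tilde t)e^{-\tilde t}$, the proposition will follow at once from two elementary monotonicity facts about $\psi_\alpha$ on $(0,\infty)$: (i) $\psi_\alpha$ is increasing, and (ii) $\psi_\alpha(\lambda t)\le\lambda\psi_\alpha(t)$ for all $\lambda\ge1$ and $t>0$.

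To obtain (i) and (ii) I would differentiate. Throughout this section $\alpha\ge\alpha_0>1/2$, hence $\kappa_{\alpha,1}=2c_\alpha(1/\delta-2)>0$ and $\kappa_{\alpha,2}=c_\alpha(4+1/e)>0$. Put $u=u(t):=\ell(1/t)=\ln(1+1/t)$, so $u'(t)=-1/(t^2+t)$ and therefore
\[
  \frac{d}{dt}\bigl(t\,u^\delta\bigr)=u^{\delta-1}\Bigl(u-\frac{\delta}{t+1}\Bigr),
  \qquad
  \frac{d}{dt}\bigl(u^{-\alpha}\bigr)=\frac{\alpha\,u^{-\alpha-1}}{t^2+t}.
\]
The standard inequality $\ln(1+x)>x/(1+x)$ with $x=1/t$ gives $u>1/(t+1)\ge\delta/(t+1)$, so both displayed derivatives are positive; as $\kappa_{\alpha,1},\kappa_{\alpha,2}>0$ this proves (i). For (ii) it suffices that $t\mapsto\psi_\alpha(t)/t$ be decreasing. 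Writing $\psi_\alpha(t)/t=\kappa_{\alpha,1}u^\delta+\kappa_{\alpha,2}\,u^{-\alpha}/t+1/t$, the first term decreases because $u$ decreases in $t$, the last term obviously decreases, and
\[
  \frac{d}{dt}\Bigl(\frac{u^{-\alpha}}{t}\Bigr)
  =\frac{u^{-\alpha-1}}{t^2}\Bigl(\frac{\alpha}{t+1}-u\Bigr)<0,
\]
again since $u>1/(t+1)>\alpha/(t+1)$. Hence $\psi_\alpha(\lambda t)/(\lambda t)\le\psi_\alpha(t)/t$ for $\lambda\ge1$, which is (ii).

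Granted (i) and (ii), the two inequalities drop out. For the upper bound, $e^{-\tilde t}\le1$, $\psi_\alpha(\tilde t)\le\psi_\alpha((1+\Delta)t)$ by (i) and $\tilde t\le(1+\Delta)t$, and $\psi_\alpha((1+\Delta)t)\le(1+\Delta)\psi_\alpha(t)$ by (ii); thus $Q\alz(\theta)=\psi_\alpha(\tilde t)e^{-\tilde t}\le(1+\Delta)\psi_\alpha(z\JAT)$. For the lower bound, $\psi_\alpha(\tilde t)\ge\psi_\alpha(t)$ by (i) and $\tilde t\ge t$, while $e^{-\tilde t}\ge e^{-(1+\Delta)t}\ge e^{-1-\Delta}$ because $\tilde t\le(1+\Delta)t$ and $t=z\JAT<1$; thus $Q\alz(\theta)\ge e^{-1-\Delta}\psi_\alpha(z\JAT)$. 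The only delicate point is the positivity of $\kappa_{\alpha,1}$, which is precisely why the argument is confined to $\alpha>1/2$ (equivalently $\alpha\ge\alpha_0$); the remainder is a routine verification of the monotonicity and sublinearity of $\psi_\alpha$.
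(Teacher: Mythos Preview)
Your argument is correct and follows the same overall plan as the paper: establish that $\psi_\alpha$ is increasing and sublinear (i.e.\ $\psi_\alpha(\lambda t)\le\lambda\psi_\alpha(t)$ for $\lambda\ge1$), then combine these with \eqref{e:JH} and the bound $z\JAT<1$ on $[0,\vartheta_z)$ to get the sandwich. The paper isolates the two facts about $\psi_\alpha$ as a separate lemma (\Cref{l:psi-alpha}) and proves them via the concavity of $\ell$ (so that $s\,\ell(1/s)$ is increasing and $\ell(1/(as))\ge a^{-1}\ell(1/s)$), whereas you obtain them by direct differentiation and the inequality $\ln(1+x)>x/(1+x)$; the final deduction of \eqref{e:psi-phi} is identical in both.
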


By \Cref{p:N-J-sandwich}, the envelopes on $[0,\vartheta_z)$ will be
made from $\psi_\alpha(z\JAT)$.  On $[0,\vartheta_z)\cap
[0,\theta_0) = [0, \vartheta_z\wedge\theta_0)$, as $\JAT$ is piecewise
constant, from \eqref{e:psi-phi}, each constant piece of $(1+\Delta)
\psi_\alpha(z\JAT)$ is used as an envelope.  The construction on
$[0,\vartheta_z) \setminus [0, \theta_0) = [\theta_0, \vartheta_z)$ is
more involved.  Clearly, one only has to consider it when $\theta_0<  \vartheta_z$, or equivalently $z<1/\HAT[\theta_0]$.
In this case, 
\begin{align} \label{e:I_n}
  I_n = I_{n,z} :=[\theta_n \wedge \vartheta_z, \theta_{n+1}\wedge
  \vartheta_z), \quad 0\le n\le\nal,
\end{align}
form a partition of $[\theta_0, \vartheta_z)$, and for each $n$, $I_n
\ne \emptyset\Iff\theta_n<\vartheta_z$.  Define
\begin{align} \label{e:lambda_z}
  \laz(\theta)=
  \ell\Grp{\nth{z\JAT}}.
\end{align}
Then $\laz$ is strictly decreasing on $[\theta_0, \vartheta_z)$ and
smooth on each $I_n\ne\emptyset$, and $\laz([\theta_0, \vartheta_z))$ is
partitioned by $\laz(I_n)=(c_n, d_n]$, $I_n\ne\emptyset$, where $c_n =
c_{n,z}$ and $d_n = d_{n,z}$ are defined by
\begin{align} \label{e:lambda-I}
  \begin{split}
    c_n 
    =\laz((\theta_{n+1}\wedge\vartheta_z)-))
    &=
    \begin{cases}
      \displaystyle
      \ell\Grp{
        {\nth{z\XIAT[\theta_n]}
          \Sbr{1+\frac{\delta\pi}{\alpha(\pi-\theta_{n+1})}}^{-1/\delta}
        }
      }
      &\text{if~} \theta_{n+1} \le\vartheta_z \\
      \ln2 & \text{else,}
    \end{cases}
    \\
    d_n= \laz(\theta_n\wedge \vartheta_z)
    &=
    \ell\Grp{
      {\nth{z\XIAT[\theta_n]}
        \Sbr{1+\frac{\delta\pi}{\alpha(\pi-\theta_n)}}^{-1/\delta}
      }
    }.
  \end{split}
\end{align}
\begin{lemma}  \label{l:N-J-mid}
  Fix $\alpha\in[\alpha_0, 1)$ and $z<1/\HAT[\theta_0]$.
  Given $0\le n\le \nal$ with $I_n\ne\emptyset$, let
  \phref{p:l:N-J-mid}{Proof}
  \begin{align} \label{e:mid-range-consts}
    a_n = a_{n,z}
    = z\JAT[(\theta_{n+1}\wedge\vartheta_z)-] = (e^{c_n}-1)^{-1},
    \quad
    b_n = \frac{\delta\pi}{\pi - \alpha\theta_n}.
  \end{align}
  Define function $P_n(t) = P_{n,z}(t)$ on $\laz(I_n)$ as well as
  constants $\pi_n = \pi_{n,z}$ and $v_n = v_{n,z}$ as follows.  If
  $\theta_n\le (1 -\delta/\alpha)\pi$, then
  \begin{gather} \label{e:mid-range-P1}
    \begin{split}
      \pi_n = 
      \frac{(1+a_n)^2\alpha} {\pi[z\XIAT[\theta_n]]^\delta}
      &(\pi - \theta_n)^2, \quad
      v_n = (1-\Delta)/[(1+\Delta)(1+a_n)^{\alpha+3}], \\
      P_n(t)
      &=
      \pi_n[(1+a_n)\kappa_{\alpha,1} t^\delta e^{-t}  +
      \kappa_{\alpha,2} t^{-\alpha} + 1] e^{-\delta t},
    \end{split}
  \end{gather}
  where $\kappa_{\alpha,1}$ and $\kappa_{\alpha,2}$ are given in
  \eqref {e:kappa-psi-alpha}.  If $\theta_n>(1 - \delta/\alpha)\pi$,
  then
  \begin{align} \label{e:mid-range-P2}
    \begin{split}
      \pi_n = \frac{(1+a_n)^2\delta^2\pi}{\alpha b^2_n}
      &[z\XIAT[\theta_n]]^\delta, \quad
      v_n = b^2_n/[2^{1+\delta}(1+a_n)^2],
      \\
      P_n(t)
      &=
      \pi_n[\kappa_{\alpha,1} t^\delta e^{-t} +\kappa_{\alpha,2}
      t^{-\alpha} + 1] e^{\delta t}.
    \end{split}
  \end{align}
  Then
  \begin{align} \label{e:Q-P}
    e^{-1-\Delta} v_n
    \le \frac{Q\alz(\theta)}{\laz^{-1}\# P_n(\theta)} \le 1+\Delta,
    \quad\theta\in I_n.
  \end{align}
\end{lemma}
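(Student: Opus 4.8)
We may assume $I_n\ne\emptyset$, so that $\theta_n<\vartheta_z$ and $\theta\ge\theta_n$ for all $\theta\in I_n$. By \Cref{p:N-J-sandwich}, $e^{-1-\Delta}\le Q\alz(\theta)/\psi_\alpha(z\JAT)\le1+\Delta$ on $[0,\vartheta_z)\supseteq I_n$, so it suffices to prove the sharper bound
\[
  v_n\le\frac{\psi_\alpha(z\JAT)}{\laz^{-1}\# P_n(\theta)}\le1,\qquad\theta\in I_n,
\]
and multiply the two chains of inequalities. The first step is to put both sides in closed form on $I_n$. Fix $\theta\in I_n$, put $t=\laz(\theta)\in(c_n,d_n]$, $u=\delta\pi/[\alpha(\pi-\theta)]$, $u_n=\delta\pi/[\alpha(\pi-\theta_n)]$; then $(\pi-\theta_n)/(\pi-\theta)=u/u_n$ and, by \eqref{e:mid-range-consts}, $b_n=u_n/(1+u_n)$. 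On $I_n$, \eqref{e:JAT-upper} gives $\JAT=\XIAT[\theta_n](1+u)^{1/\delta}$, hence $z\JAT=(e^t-1)^{-1}$ by \eqref{e:lambda_z}; since also $\ell(1/(z\JAT))=t$ identically, \eqref{e:kappa-psi-alpha} yields
\[
  \psi_\alpha(z\JAT)=\kappa_{\alpha,1}\frac{t^\delta e^{-t}}{1-e^{-t}}+\kappa_{\alpha,2}t^{-\alpha}+1,
\]
and differentiating $\laz$ with $z\JAT+1=1/(1-e^{-t})$ gives $|\laz'(\theta)|=\alpha u^2(1-e^{-t})/[\delta^2\pi(1+u)]$.

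Next I would note that $t>c_n$, $a_n=(e^{c_n}-1)^{-1}$, and $c_n\ge\ell(1)=\ln2$ (the last by \eqref{e:lambda-I} and \eqref{e:theta-*-z}) together give $1-e^{-t}\in[1/(1+a_n),1)$ and $a_n\le1$. Hence the first summand of $\psi_\alpha(z\JAT)$ lies between $\kappa_{\alpha,1}t^\delta e^{-t}$ and $(1+a_n)\kappa_{\alpha,1}t^\delta e^{-t}$, its other two summands coinciding with those of the bracket $N(t)$ of $P_n(t)$ in \eqref{e:mid-range-P1}--\eqref{e:mid-range-P2}; comparing, $1/(1+a_n)\le\psi_\alpha(z\JAT)/N(t)\le1$ in case \eqref{e:mid-range-P1}, while $1\le\psi_\alpha(z\JAT)/N(t)\le1/(1-e^{-t})$ in case \eqref{e:mid-range-P2}.

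The core is the simplification of $\laz^{-1}\# P_n(\theta)=\pi_n N(t)e^{\mp\delta t}|\laz'(\theta)|$ ($-$ in \eqref{e:mid-range-P1}, $+$ in \eqref{e:mid-range-P2}). Inserting the formulas for $\pi_n$, $|\laz'(\theta)|$ and $z\JAT=z\XIAT[\theta_n](1+u)^{1/\delta}$, all powers of $z\XIAT[\theta_n]$, $\alpha$, $\delta$, $\pi$ and $1+u$ cancel, and the identity $(1+u)^{-1/\delta}/\bigl(z\XIAT[\theta_n]+(1+u)^{-1/\delta}\bigr)=1-e^{-t}$ collapses the remainder to $\laz^{-1}\# P_n(\theta)/N(t)=\rho_n(\theta)$, where $\rho_n(\theta)=(1+a_n)^2(u/u_n)^2(1-e^{-t})^{1+\delta}$ in case \eqref{e:mid-range-P1} and $\rho_n(\theta)=(1+a_n)^2 b_n^{-2}\bigl(u/(1+u)\bigr)^2(1-e^{-t})^\alpha$ in case \eqref{e:mid-range-P2}. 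Then $\psi_\alpha(z\JAT)/\bigl(\laz^{-1}\# P_n(\theta)\bigr)=\bigl(\psi_\alpha(z\JAT)/N(t)\bigr)/\rho_n(\theta)$, and the claim reduces to two-sided bounds on $\rho_n$. In case \eqref{e:mid-range-P1}, $u\ge u_n$ and $1-e^{-t}\ge1/(1+a_n)$ give $\rho_n\ge(1+a_n)^{1-\delta}=(1+a_n)^\alpha\ge1$, yielding the upper bound; for the lower bound, the defining inequality of $f_\alpha$ in \Cref{p:nodes} ($(\pi-\alpha\theta_n)/(\pi-\alpha\theta)\le1+\Delta/2$ on $I_n$), via $\pi-\alpha\theta=\alpha(\pi-\theta)+\delta\pi$, rearranges to $u/u_n\le1+\frac{\Delta}{2}(1+u)$, and since $\theta_n\le(1-\delta/\alpha)\pi$ forces $u_n\le1$ here, this yields $u/u_n\le(1+\Delta/2)/(1-\Delta/2)$, hence $(u/u_n)^2\le(1+\Delta)/(1-\Delta)$ and $\rho_n\le(1+\Delta)(1+a_n)^{\alpha+2}/(1-\Delta)$; with $\psi_\alpha/N\ge1/(1+a_n)$ this gives exactly $\ge v_n$. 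In case \eqref{e:mid-range-P2}, $u\ge u_n$ gives $u/(1+u)\ge b_n$, so $\rho_n\ge(1+a_n)^2(1-e^{-t})^\alpha\ge(1+a_n)^{2-\alpha}$, and $\psi_\alpha/N\le1/(1-e^{-t})\le1+a_n$ gives the upper bound $\le(1+a_n)^{-\delta}\le1$; while $\rho_n\le(1+a_n)^2/b_n^2$ with $\psi_\alpha/N\ge1$ gives $\ge b_n^2/(1+a_n)^2\ge v_n$.

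The lengthy but mechanical part is the cancellation producing the two formulas for $\rho_n$; the one genuinely substantive point is the geometric estimate $u/u_n\le(1+\Delta/2)/(1-\Delta/2)$ in case \eqref{e:mid-range-P1}, which is where the first defining inequality of $f_\alpha$ in \Cref{p:nodes} and the hypothesis $\theta_n\le(1-\delta/\alpha)\pi$ both enter, and which pins down the exponent $\alpha+3$ in $v_n$ (and analogously the factor $2^{1+\delta}$ in the other case). One must also keep track of how the $\mp\delta t$ sign and the two different brackets $N(t)$ interact with the preceding estimates.
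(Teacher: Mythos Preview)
Your proof is correct and follows essentially the same route as the paper's: reduce via \Cref{p:N-J-sandwich} to a two-sided bound between $\psi_\alpha(z\JAT)$ and $\laz^{-1}\#P_n$, compute the Jacobian of $\laz$, and treat the two cases separately using (in Case~1) the geometric estimate coming from the first defining inequality of $f_\alpha$ together with $\theta_n\le(1-\delta/\alpha)\pi$, and (in Case~2) the lower bound $u/(1+u)\ge b_n$. The paper works directly with $s=\JAT$ and $\pi-\theta$ via \eqref{e:JAT-inv-diff}--\eqref{e:JAT-inv-diff2}, whereas you introduce the variables $u,u_n$ and the factor $\rho_n$; this is purely a change of coordinates (note $[s/\XIAT[\theta_n]]^\delta-1=u$ and $(\pi-\theta_n)/(\pi-\theta)=u/u_n$), and your organization into $(\psi_\alpha/N)/\rho_n$ is arguably cleaner. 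Two minor remarks: the inequality you label as ``the defining inequality of $f_\alpha$'' is really a consequence of it (you silently use $\theta\ge\theta_n$ to drop the factor $\theta/\theta_n\ge1$), and your stated bound $\rho_n\le(1+\Delta)(1+a_n)^{\alpha+2}/(1-\Delta)$ in Case~1 is looser than what the computation actually gives, namely $(1+a_n)^2(1+\Delta)/(1-\Delta)$, but either suffices for the conclusion.
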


\begin{remark}
  From \eqref{e:mid-range-consts}, $a_n\le1$, and if $\theta_n >
  (1-\delta/\alpha)\pi$, then $1/2<b_n<1$.  As a result,
  \[
    v_n\ge \frac{1-\Delta}{2^{\alpha+3}(1+\Delta)} \wedge 2^{-5-\delta}.
  \]
\end{remark}

Although $\laz^{-1}\# P_n$ in \Cref{l:N-J-mid} is an envelope of
$Q\alz$ on $I_n\ne\emptyset$, it is not easy to directly work with.
Note that $P_n$ consists of functions of the form
$\varphi_{a,b,c}$ in \eqref{e:varphi-def} or $g_{a,b,c}$ in
\eqref{e:gabc} as well as exponential functions restricted to a
bounded interval.  From \cref{s:ar}, a suitable envelope can be
obtained on the basis of $\laz^{-1}\# P_n$.  \phref{p:p:N-J-mid2}{Proof}
\begin{prop}  \label{p:N-J-mid2}
  Fix $\alpha\in [\alpha_0, 1)$.  Given $z>0$, let $I_n = I_{n,z}$ as
  in \eqref{e:I_n}.  For $0\le n\le\nal$, define $P^*_n(\theta) = 
  P^*_{n,z}(\theta)$ on $I_n$ as follows.  If $I_n=\emptyset$, then
  $P^*_n(\theta)\equiv 0$, otherwise, 
  \[
    P^*_n(t)
    =
    \pi_n [P^*_{n,1}(t) + P^*_{n,2}(t)+P^*_{n,3}(t)],
  \]
  where, if $\theta_n\le (1 -\delta/\alpha)\pi$, then   
  \begin{align*}
    P^*_{n,1}(t)&=
    (1+\delta)^{-\delta}(1+a_n)\kappa_{\alpha,1}
    g^*_{(1+\delta)c_n, (1+\delta) d_n, 1+\delta}((1+\delta) t),
    \\
    P^*_{n,2}(t)&=
    \kappa_{\alpha,2} \delta^\alpha g^*_{\delta c_n, \delta d_n, 
      \delta}(\delta t),
    \quad
    P^*_{n,3}(t)= e^{-\delta t}\cf{c_n < t\le d_n},
  \end{align*}
  and if $\theta_n>(1 - \delta/\alpha)\pi$, then
  \begin{align*}
    P^*_{n,1}(t)&=
    \alpha^{-\delta}\kappa_{\alpha,1}
    g^*_{\alpha c_n, \alpha d_n, 1+\delta}(\alpha t), 
    \\
    P^*_{n,2}(t)&=
    \kappa_{\alpha,2} \delta^\alpha \varphi^*_{\delta c_n,
      \delta d_n, \delta}(\delta t),
    \quad
    P^*_{n,3}(t)= e^{\delta t}\cf{c_n < t\le d_n}.
  \end{align*}
  Then
  \begin{gather}\label{e:mid-range-P-env}
    \begin{split}
      Q\alz(\theta)
      &\cf{\theta\in I_n} \le (1+\Delta)\laz^{-1}\#
      P^*_n(\theta) \quad\text{and}\\
      &\frac{e^{-2-\Delta} v_n}{2}
      \int \laz^{-1}\# P^*_n \le \int_{I_n} Q\alz.
    \end{split}
  \end{gather}
\end{prop}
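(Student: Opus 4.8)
The plan is to expand $P_n$ into three summands, recognise each as an affine rescaling of one of the building blocks $g_{a,b,c}$ of \eqref{e:gabc} or $\varphi_{a,b,c}$ of \eqref{e:varphi-def}, replace it by the corresponding starred envelope from \cref{s:ar}, and then transport the resulting inequalities back to the $\theta$-scale by $\laz^{-1}$. If $I_n=\emptyset$ both parts of \eqref{e:mid-range-P-env} are trivially true, so assume $I_n\ne\emptyset$; since $\theta_n\ge\theta_0$ this forces $\vartheta_z>\theta_0$, i.e.\ $z<1/\HAT[\theta_0]$, so \cref{l:N-J-mid} applies, $\laz(I_n)=(c_n,d_n]$, and $P_n=\pi_n(T_1+T_2+T_3)$ on $(c_n,d_n]$ with $\pi_n>0$ and the $T_i$ read off from \eqref{e:mid-range-P1} resp.\ \eqref{e:mid-range-P2}: when $\theta_n\le(1-\delta/\alpha)\pi$, $T_1=(1+a_n)\kappa_{\alpha,1}t^\delta e^{-(1+\delta)t}$, $T_2=\kappa_{\alpha,2}t^{-\alpha}e^{-\delta t}$, $T_3=e^{-\delta t}$, and when $\theta_n>(1-\delta/\alpha)\pi$, $T_1=\kappa_{\alpha,1}t^\delta e^{-\alpha t}$ (using $-t+\delta t=-\alpha t$), $T_2=\kappa_{\alpha,2}t^{-\alpha}e^{\delta t}$, $T_3=e^{\delta t}$.

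First I would do the rescaling bookkeeping. In the first case, $x=(1+\delta)t$ writes $T_1$ as a positive multiple of $g_{(1+\delta)c_n,(1+\delta)d_n,1+\delta}((1+\delta)t)$ and $x=\delta t$ writes $T_2$ as a positive multiple of $g_{\delta c_n,\delta d_n,\delta}(\delta t)$; in the second case, $x=\alpha t$ writes $T_1$ as a positive multiple of $g_{\alpha c_n,\alpha d_n,1+\delta}(\alpha t)$, while $T_2$, because of the $e^{+\delta t}$, becomes a positive multiple of $\varphi_{\delta c_n,\delta d_n,\delta}(\delta t)$. In every case the induced shape parameter is $1+\delta\in(1,2)$ or $\delta\in(0,1)$ and the rescaled endpoints still obey $0\le a<b<\infty$, so \eqref{e:icgamma-env} (with $g^*_{a,b,c}$ sampled by \cref{a:incomplete-gamma}) and \eqref{e:varphi*} (with $\varphi^*_{a,b,c}$ sampled by \cref{a:varphi}) are available. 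Since $T_i$ and $P^*_{n,i}$ ($i=1,2$) differ only by $g_{a,b,c}$ versus $g^*_{a,b,c}$, or $\varphi_{a,b,c}$ versus $\varphi^*_{a,b,c}$, with the same constant prefactor, the pointwise envelope bounds $g_{a,b,c}\le g^*_{a,b,c}$ and $\varphi_{a,b,c}\le\varphi^*_{a,b,c}$ give $T_i\le P^*_{n,i}$ on $(c_n,d_n]$; as $T_3=P^*_{n,3}$ there, we get $P_n\le P^*_n$ on $(c_n,d_n]$, with $P^*_n$ exactly the function in the statement. Because $\laz$ is smooth and strictly decreasing on $I_n$, the operator $\laz^{-1}\#$ is multiplication by $|\laz'|\ge 0$, so $\laz^{-1}\# P_n\le\laz^{-1}\# P^*_n$ on $I_n$; together with the upper bound in \eqref{e:Q-P}, this is the first inequality of \eqref{e:mid-range-P-env}.

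For the second inequality I would compare integrals. Pushforward preserves total mass, so $\int\laz^{-1}\# P^*_n=\int_{(c_n,d_n]}P^*_n$ and $\int\laz^{-1}\# P_n=\int_{(c_n,d_n]}P_n$, and each substitution $t\mapsto\lambda t$ (with $\lambda\in\{1+\delta,\alpha,\delta\}$) divides all integrals by $\lambda$, so the ratio of $\int_{(c_n,d_n]}P^*_{n,i}$ to $\int_{(c_n,d_n]}T_i$ equals the scale-free ratio $\int_{(a,b]}g^*_{a,b,c}/\int_{(a,b]}g_{a,b,c}$ for the $g$-type terms and $\int_{(a,b]}\varphi^*_{a,b,c}/\int_{(a,b]}\varphi_{a,b,c}$ for the $\varphi$-type term. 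Restricting to $(a,b]$ only shrinks the numerators, and $g_{a,b,c}$, $\varphi_{a,b,c}$ are already supported on $(a,b]$, so these ratios are at most $\int g^*_{a,b,c}/\int g_{a,b,c}\le 2e$ by \eqref{e:icgamma-env} and $\int\varphi^*_{a,b,c}/\int\varphi_{a,b,c}\le 4\le 2e$ by \eqref{e:varphi*}; also $\int_{(c_n,d_n]}P^*_{n,3}=\int_{(c_n,d_n]}T_3$. Summing the three summands, $\int\laz^{-1}\# P^*_n\le 2e\int\laz^{-1}\# P_n$. Integrating the lower bound of \eqref{e:Q-P} over $I_n$ gives $\int_{I_n}Q\alz\ge e^{-1-\Delta}v_n\int\laz^{-1}\# P_n$, and combining the last two inequalities yields $\int_{I_n}Q\alz\ge\frac{e^{-1-\Delta}}{2e}v_n\int\laz^{-1}\# P^*_n=\frac{e^{-2-\Delta}v_n}{2}\int\laz^{-1}\# P^*_n$, which is the second inequality.

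I expect the only delicate part to be the exponent-and-endpoint bookkeeping: matching $t^\delta e^{-(1+\delta)t}$, $t^{-\alpha}e^{-\delta t}$, $t^\delta e^{-\alpha t}$ and $t^{-\alpha}e^{+\delta t}$ to the correct shape parameters and rescaled limits of $g_{a,b,c}$ or $\varphi_{a,b,c}$, and verifying that every induced shape lies in $(0,2)$ (for the $g$-type summands) or $(0,1)$ (for the $\varphi$-type summand), the ranges for which \cref{a:incomplete-gamma,a:varphi} and the bounds \eqref{e:icgamma-env}, \eqref{e:varphi*} were established. Beyond that, everything is routine verification.
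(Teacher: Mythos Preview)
Your proposal is correct and follows essentially the same route as the paper: expand $P_n$ into three summands, identify each as an affine rescaling of $g_{a,b,c}$ or $\varphi_{a,b,c}$, replace by the starred envelopes using \eqref{e:icgamma-env} and \eqref{e:varphi*}, and then combine the resulting pointwise and integral bounds with \eqref{e:Q-P} via $\laz^{-1}\#$. The only cosmetic difference is that the paper writes $\int\laz^{-1}\#P^*_n=\int P^*_n$ (full integral) directly rather than first restricting to $(c_n,d_n]$, but your remark that restriction only shrinks the numerator covers this and the final constant $e^{-2-\Delta}v_n/2$ comes out the same.
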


From \Cref{p:N-J-sandwich,p:N-J-mid2,p:N-J-sup}, the normalized
$Q\alz$ can be sampled as follows.  First, partition $[0,\pi)$
into the following intervals arranged from left to right, 
\begin{gather*} 
  D_0 = D_{0,z} = [t_0\wedge\vartheta_z, t_1\wedge\vartheta_z), \ldots,
  D_m = D_{m,z} = [t_m\wedge \vartheta_z, t_{m+1}\wedge\vartheta_z),\\
  I_0 = I_{0,z}=
  [\theta_0\wedge\vartheta_z, \theta_1\wedge\vartheta_z),\ldots,
  I_\nal = I_{\nal,z} = [\theta_\nal\wedge\vartheta_z, \theta_{\nal+1}
  \wedge\vartheta_z),\
  E=E_z = [\vartheta_z,\pi).
\end{gather*}
Recall that $t_0=0$, $t_{m+1}=\theta_0$, and $\theta_{\nal + 1}=\pi$.
Put $\Cal I = \{E, \eno[0] D m, \eno[0] I \nal\}$.  
For each $S\in \Cal I$, use $Q^*_S(\theta)$ as the envelope of
$Q\alz(\theta)\cf{\theta \in S}$, where
\[
  Q^*_S(\theta)
  =
  \begin{cases}
    \omega_z \Vrx_{s_z}(\theta - \vartheta_z)
    &
    \text{if~} S = E = [\vartheta_z,\pi)
    \\
    (1+\Delta)\psi_\alpha(z\HAT[t_i]) \cf{\theta\in D_i},
    &
    \text{if~} S = D_i = [t_i\wedge\vartheta_z, t_{i+1}\wedge\vartheta_z)
    \\
    (1+\Delta)\laz^{-1}\#P^*_n(\theta),
    &
    \text{if~} S = I_n = [\theta_n\wedge\vartheta_z, \theta_{n+1}\wedge
    \vartheta_z).
  \end{cases}
\]
From \Cref{p:N-J-sandwich,p:N-J-mid2}, 
\[
  \Cbr{\frac{e^{-2-\Delta}}{2(1+\Delta)}\min_{n\le N} v_n}
  \int \sum_{S\ne E} Q^*_S\le \int^{\vartheta_z}_0 Q\alz .
\]
Then from \Cref{p:N-J-sup} and the above bounds,
\[
  \frac{\int Q\alz}{\int\sum_S Q^*_S}
  \ge
  \frac{\int^{\vartheta_z}_0 Q\alz}{
    \int \sum_{S\ne E} Q^*_S + \omega_z
  }
  \ge
  \Sbr{
    \frac{2(1+\Delta)e^{2+\Delta}}{\min_{n\le N}v_n}
    +\frac{2\kappa_{\alpha,3}/\kappa_{\alpha,4}}
    {(z+1) e^{-z} - 2/e}
  }^{-1}.
\]
Given $z_0\in(0,1)$, the \rhs is uniformly bounded below from 0 for
$\alpha\ge\alpha_0$ and $z\le z_0$.  Therefore, if
\Cref{a:AR} is applied to $Q\alz(\theta) =
\sum_{S\in \Cal I} Q\alz(\theta)\cf{\theta\in S}$ with $Q^*_S(\theta)$
as the envelope for $Q\alz(\theta)\cf{\theta\in S}$, then the expected
number of iterations is uniformly bounded.  Note that some $S\in\Cal
I$ may be empty and while $Q\alz(\theta)\cf{\theta\in S}$ have bounded
and disjoint supports, $Q^*_S$ may have unbounded and overlapping
supports.

To use \Cref{a:AR}, we need to know $\int Q^*_S$ and to sample from
the normalized $Q^*_S$ for each $S\in\Cal I$.  First, from
\Cref{p:N-J-sup},
\[
\int Q^*_E = \omega_z
\]
and the normalized $Q^*_E$ is the \pdf of $\vartheta_z + Z$ with $Z\sim
\Vrx_{s_z}$.  Second, from \eqref {e:JAT-lower}, for each $0\le i\le
m$,
\[
  \int Q^*_{D_i}  = (1+\Delta)\psi_\alpha(z\HAT[t_i])
  (t_{i+1}\wedge\vartheta_z - t_i\wedge\vartheta_z),
\]
and if $D_i\ne\emptyset$, the normalized $Q^*_{D_i}$ is the \pdf
of $\Dunif(D_i)$.  Note that $D_i\ne\emptyset\Iff
t_i < \vartheta_z$, in which case $D_i = [t_i, t_{i+1}\wedge
\vartheta_z)$.  Third, for each $0\le n\le\nal$, if $I_n=\emptyset$,
then $Q^*_{I_n}\equiv0$, and if $I_n\ne\emptyset$,
then by \Cref{p:N-J-mid2}, 
\[
  \int Q^*_{I_n} = (1+\Delta) \int P^*_n =
  (1+\Delta)\pi_n \int (P^*_{n,1} +
  P^*_{n,2} + P^*_{n,3}).
\]
From \eqref{e:varphi*} and \eqref{e:icgamma-env}, if $\theta_n \le(1 -
\delta/\alpha)\pi$,
\begin{align*}
  \int P^*_{n,1}
  &=
  2(1+\delta)^{-1-\delta}(1+a_n)\kappa_{\alpha,1}
  G((1+\delta)c_n, (1+\delta) d_n, 1+\delta), \\
  \int P^*_{n,2}
  &=
  2\kappa_{\alpha,2} \delta^{-\delta} G(\delta c_n, \delta d_n, 
  \delta), \quad
  \int P^*_{n,3}= \delta^{-1}(e^{-\delta c_n} -
  e^{-\delta d_n})
\end{align*}
and if $\theta_n>(1-\delta/\alpha)\pi$, then
\begin{align*}
  \int P^*_{n,1}
  &=
  2\alpha^{-1-\delta}\kappa_{\alpha,1}
  G(\alpha c_n, \alpha d_n, 1+\delta)\\
  \int P^*_{n,2}
  &=
  2\kappa_{\alpha,2} \delta^{-\delta} M(\delta c_n, \delta d_n, 
  \delta), \quad
  \int P^*_{n,3}= \delta^{-1}(e^{\delta d_n} - e^{\delta c_n}).
\end{align*}

Note that $\int Q^*_{I_n}>0\Iff I_n\ne\emptyset\Iff \theta_n <
\vartheta_z$, in which case $I_n = [\theta_n, \theta_{n+1}\wedge
\vartheta_z)$.  For $I_n\ne\emptyset$, the normalized $Q^*_{I_n}$ is
sampled via $\theta = \laz^{-1}(t)$ with $t$ sampled from the
normalized $P^*_n$.  The latter is a mixture of the normalized
$P^*_{n,i}$.  When $\theta_n\le (1-\delta/\alpha)\pi$, the normalized
$P^*_{n,1}(t)$ is the pushforward of the normalized $g^*_{(1+
  \delta)c_n, (1+\delta)d_n,   1+\delta}(x)$ under the scaling $x\to
t=x/(1+\delta)$ and likewise for the normalized $P^*_{n,2}(t)$, while
the normalized $P^*_{n,3}(t) = e^{-\delta t}\cf{c_n< t\le d_n}$ is the
\pdf of $-\ln X/\delta$ with $X\sim \Dunif(e^{- \delta  d_n},
e^{-\delta c_n})$.  The case where $\theta_n>(1- \delta/\alpha)\pi$ is
similar, except that now the normalized $P^*_{n,2}(t)$ is the
pushforward of the normalized $\varphi^*_{\delta c_n, \delta
  d_n, \delta}(x)$ under the scaling $x\to t=x/\delta$, and the
normalized $P^*_{n,3}(t) = e^{\delta t} \cf{c_n < t\le d_n}$ is the
\pdf of $\ln X/\delta$ with $X\sim \Dunif(e^{\delta c_n}, e^{\delta
  d_n})$.  Using these facts, the normalized $P^*_n(t)$ can be sampled
by \Cref{a:Penv}.
\begin{algorithm}[t]
  \caption{Sampling from the normalized $P^*_n(t)$} \label{a:Penv}
  \begin{algorithmic}[1]
    \State Sample $i\in \{1,2,3\}$ with weights $\int^{d_n}_{c_n}
    P^*_{n,i}$, $i=1,2,3$.    
    \If{$\theta_n\le (1-\delta/\alpha)\pi$}
    \If{$i=1$}
    \State Sample $X\sim$ normalized $g^*_{(1+\delta)c_n,
      (1+\delta)d_n, 1+\delta}$ by \Cref{a:incomplete-gamma}.  $T\gets
    \lfrac X{(1+\delta)}$.
    \ElsIf{$i=2$}
    \State Sample $X\sim$ normalized $g^*_{\delta c_n, \delta d_n,
      \delta}$ by \Cref{a:incomplete-gamma}.  $T\gets
    \lfrac X\delta$.
    \Else
    \State Sample $X\sim\Dunif(e^{-\delta d_n}, e^{-\delta c_n})$.
    $T\gets -\lfrac{\ln X}\delta$.
    \EndIf
    \Else
    \If{$i=1$}
    \State Sample $X\sim$ normalized $g^*_{\alpha c_n, \alpha d_n,
      1+\delta}$ by \Cref{a:incomplete-gamma}.  $T\gets \lfrac
    X\alpha$.
    \ElsIf{$i=2$}
    \State Sample $X\sim$ normalized $\varphi^*_{\delta c_n, \delta
      d_n, \delta}$ by \Cref{a:varphi}.    $T\gets \lfrac X\delta$.
    \Else
    \State Sample $X\sim\Dunif(e^{\delta c_n}, e^{\delta d_n})$.
    $T\gets \lfrac{\ln X}\delta$.
    \EndIf
    \EndIf
    \State {\bf return} $T$.
  \end{algorithmic}
\end{algorithm}
Once $t$ is sampled, $\theta = \laz^{-1}(t)$ and $\laz^{-1}
\#P^*_n(\theta) = P^*_n(t) |\lambda'_z(\theta)|$ can be
evaluated using the following.
\begin{lemma} \label{l:lambda-inv}
  Let $t\in (c_n, d_n]$ and $\theta = \laz^{-1}(t)$.  Then
  \phref{p:l:lambda-inv}{Proof}
  \begin{align} \label{e:lambda-inv}
    \theta=\pi-\frac{\delta\pi/\alpha}
    {[z\XIAT[\theta_n](e^t-1)]^{-\delta} - 1},
    \quad
    \laz'(\theta)=
    - \frac{\pi(1-e^{-t})}{(\pi-\theta)(\pi - \alpha\theta)}.
  \end{align}
\end{lemma}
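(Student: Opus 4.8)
The plan is a direct computation from the definition of $\laz$ on the single interval $I_n$. Since $\theta=\laz^{-1}(t)$ lies in $I_n\subseteq[\theta_n,\theta_{n+1})$, the indicator sum in \eqref{e:JAT-upper} collapses to the term with index $n$, so on $I_n$ we have $\JAT=\XiAT[\theta_n]\,[1+\delta\pi/(\alpha(\pi-\theta))]^{1/\delta}$; by \eqref{e:lambda_z} this gives $t=\laz(\theta)=\ln\bigl(1+1/(z\JAT)\bigr)$. Exponentiating yields $e^t-1=1/(z\JAT)$, hence $[1+\delta\pi/(\alpha(\pi-\theta))]^{1/\delta}=[z\XiAT[\theta_n](e^t-1)]^{-1}$. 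Raising both sides to the power $\delta$ turns this into $1+\delta\pi/(\alpha(\pi-\theta))=[z\XiAT[\theta_n](e^t-1)]^{-\delta}$, and solving the resulting linear equation for $\pi-\theta$ produces the first formula in \eqref{e:lambda-inv}. Well-definedness requires no work: $\laz$ is strictly decreasing and smooth on $I_n\ne\emptyset$ (noted after \eqref{e:lambda_z}), and $t\in(c_n,d_n]$ forces $e^t-1>0$, so the denominator $[z\XiAT[\theta_n](e^t-1)]^{-\delta}-1$ equals $\delta\pi/(\alpha(\pi-\theta))>0$.

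For the derivative I would first compute $(\ln\JAT)'(\theta)$ on $I_n$. From $\ln\JAT=\ln\XiAT[\theta_n]+\delta^{-1}\ln(1+\delta\pi/(\alpha(\pi-\theta)))$, differentiating the second term and clearing the fraction gives $\delta^{-1}\cdot\frac{\delta\pi/(\pi-\theta)}{\alpha(\pi-\theta)+\delta\pi}$, and the algebraic identity $\alpha(\pi-\theta)+\delta\pi=\pi-\alpha\theta$ simplifies this to $(\ln\JAT)'(\theta)=\pi/[(\pi-\theta)(\pi-\alpha\theta)]$. Now set $u=u(\theta)=1/(z\JAT)=e^t-1$, so that $\laz(\theta)=\ln(1+u)$ and $u'/u=-(\ln\JAT)'(\theta)$; the chain rule gives $\laz'(\theta)=u'/(1+u)=-\frac{u}{1+u}(\ln\JAT)'(\theta)$. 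Since $u/(1+u)=(e^t-1)/e^t=1-e^{-t}$, this is precisely $-\pi(1-e^{-t})/[(\pi-\theta)(\pi-\alpha\theta)]$, the second formula in \eqref{e:lambda-inv}; it is manifestly negative, consistent with $\laz$ being decreasing, and is exactly the quantity needed to evaluate $\laz^{-1}\#P^*_n(\theta)=P^*_n(t)|\laz'(\theta)|$ in \Cref{a:Penv}.

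There is no genuine obstacle here: the argument is elementary algebra plus one differentiation. The only two points worth a sentence are that $I_n\subseteq[\theta_n,\theta_{n+1})$, so the correct constant piece $\XiAT[\theta_n]$ of $\JAT$ is the one in play, and the cancellation $\alpha(\pi-\theta)+\delta\pi=\pi-\alpha\theta$ that yields the clean denominator $(\pi-\theta)(\pi-\alpha\theta)$ — the same identity underlying the expression for $[\ln\HAT]'$ recorded in \Cref{p:H-sand}.
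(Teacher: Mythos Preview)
Your argument is correct and is exactly the ``direct calculation'' the paper alludes to: the paper's own proof merely points back to \eqref{e:JAT-inv} (derived inside the proof of \Cref{l:N-J-mid}) for the inverse formula and says the derivative follows from \eqref{e:JAT-upper} by computation, which is what you have written out. One notational slip: throughout your proposal you write $\XiAT[\theta_n]$ (i.e.\ $\Xi_\alpha$) where the paper's $\JAT$ in \eqref{e:JAT-upper} uses $\XIAT[\theta_n]$ (i.e.\ $\Cal K_\alpha$); these are different functions, and the statement \eqref{e:lambda-inv} involves $\Cal K_\alpha(\theta_n)$, so replace $\Xi$ by $\Cal K$ everywhere.
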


\begin{algorithm}
  \caption{Sampling from the normalized $Q\alz$, $\alpha\ge\alpha_0$,
    $z\le1$.} \label{a:qalz}
  \begin{algorithmic}[1]
    \Require $\Delta$, $\alpha_0$, and $\theta_0$ defined in
    \eqref{e:pars}. 
    \Repeat
    \State Sample $S\in\Cal I=\{E, \eno[0] Dm, \eno[0]I\nal\}$ with
    weights $\int Q^*_A$, $A\in \Cal I$.
    \State Sample $U\sim\Dunif(0,1)$,  $r\gets\infty$.
    \If{$S=E = [\vartheta_z, \pi)$}
    \State Sample $Z\sim \Vrx_{s_z}$, $\theta\gets \vartheta_z + Z$.
    \If{$\theta<\pi$}
    \State $r\gets U\omega_z \Vrx_{s_z}(Z)/Q\alz(\theta)$.
    \EndIf
    \ElsIf{$S=D_i = [t_i, t_{i+1}\wedge \vartheta_z)$}
    \State Sample $\theta\sim \Dunif(D_i)$,
    $r\gets U(1+\Delta)\psi_\alpha(z\HAT[t_i])/Q\alz(\theta)$.
    \ElsIf{$S=I_n = [\theta_n, \theta_{n+1}\wedge
      \vartheta_z)$}
    \State Sample $t\sim$ normalized $P^*_n(t)$ by \Cref{a:Penv}.
    \If{$t\in (c_n, d_n]$}
    \State $\theta\gets\laz^{-1}(t)$,  $r\gets U(1+\Delta)
    \laz^{-1}\#P^*_n(\theta) / Q\alz(\theta)$.
    \EndIf
    \EndIf
    \Until{$r\le1$}
    \State \Return $\theta$.
  \end{algorithmic}
\end{algorithm}

Putting the pieces together yields \Cref{a:qalz}.  As already seen,
its expected number of iterations is uniformly bounded for
$\alpha\ge\alpha_0$ and $z\le z_0$.  On the other hand, within each
iteration, it requires the evaluation of $\int Q^*_S$ 
for $S\in \Cal I$ and sampling from the intervals in $\Cal I = \{E,
\eno[0] D m, \eno[0] I \nal\}$.  From \eqref{e:m-al} and
\Cref{p:nodes}, the number of non-empty intervals in $\Cal I$ is
$O(\ln(1/\delta))$.   Thus the expected total number of operations of
the algorithm is $O(\ln(1/\delta))$.

\section{Experiments} \label{s:experiments}
To incorporate \Cref{a:ar-A,a:ar-B,a:ell-P} into \Cref{a:fp} to sample
the first passage, we conducted several numerical experiments.  Recall
that in \Cref{a:fp}, $z$ is random and can take any positive value,
and given $z$, $\chi\alz$ is sampled.  The main goal of the
experiments was to find which of \Cref{a:ar-A,a:ar-B,a:ell-P} was
relatively efficient at sampling from $\chi\alz$ for a given $z$.
\Cref{a:fp} was then tailored to these algorithms.  Specifically, for
different $z$, it determined which of the three algorithms to call to
sample from $\chi\alz$.  We then tested this version of \Cref{a:fp}.
The R code for the experiments is in the GitHub repository
\cite{chi:25:github}.

\subsection{Numerical issues}
To implement \Cref{a:ar-A,a:ar-B,a:ell-P}, several numerical issues
need to be addressed in particular for $\alpha$ close to 1.  In
\Cref {a:ar-A,a:ar-B}, a Gamma mixture has to be sampled.  From \eqref
{e:A-gamma-mix} and \eqref {e:B-gamma-mix}, the mixture has
$\gamma_{\delta, \tau}$ as a component.  When $\delta = 1-\alpha$ is
close to 0, it is numerically challenging to sample from
$\gamma_{\delta, \tau}$, as the sample value is often extremely close
to 0.  However, $\ln\# \gamma_{\delta, \tau}$, i.e., the pushforward
of $\gamma_{\delta, \tau}$ under the logarithmic transform, can be
sampled accurately and efficiently \cite {liu:17:compstat}.  Thus, in
the experiments, $\ln\#\gamma_{\delta, \tau}$ was sampled and if the
sample value was $\xi$, then $y=e^\xi$ was stored as the sample value
from $\gamma_{\delta,\tau}$.  Meanwhile, to avoid underflow, several
calculations involving $y$ was reformulated as ones involving $\xi$.

Now turn to \Cref{a:ell-P}.  First, on its \cref{a:ell-P5}, the
normalized $\varphi^*_{0,A, \delta}$ has to be sampled for some $A>0$.
From \eqref{e:varphi4-w} and \eqref{e:varphi5}, $\varphi^*_{0,A,
  \delta}$ is a mixture and has the \pdf of $A U^{1/\delta}$ as a
component, where $U\sim \Dunif(0,1)$.  For $\delta$ close to 0, the
sample value of $A U^{1/\delta}$ is often extremely close to 0.  On
the other hand, $\ln(A U^{1/\delta}) \sim \xi  + \ln A$, where
$\xi\sim \Dexp(1/\delta)$.  In the experiments, the shifted
exponential distribution was sampled and the sample value was
stored as the logarithm of a sample value of $A U^{1/\delta}$.
Second, on \cref{a:ell-P9}, the ratio of $(\ell\otimes \Id) \chi(v,\theta)=\chi\alz(e^v-1, \theta)e^v$ to $P\alz(v,\theta)$
needs to be evaluated, where from \cref{a:ell-P5}, $v$ is a sample
value from $\phi^*_{0, \ell(1/\tau), \delta}$.  As seen earlier, when
$\delta$ is close to 0, $v$ is often extremely close to 0, resulting
in extremely small $\chi\alz(e^v-1, \theta)e^v$ and $P\alz(v,\theta)$.
To evaluate their ratio, it follows from \eqref{e:fpus-joint} and
\eqref{e:Palz} that 
\begin{align} \label{e:chi-P-ratio}
  \frac{\chi\alz(e^v-1, \theta)e^v}{P\alz(v,\theta)}
  =
  e^{-\alpha}
  \Grp{\frac{\delta v/\alpha}{1 - e^{-\delta v/\alpha}}}^\alpha
  \nth{
    e^{\tau y - v} R(v) + (\cf{v\ge b} b^{-\alpha}  + 1/c_\alpha)v^\alpha
  },
\end{align}
where $y = e^v-1$, $b=\ell(1/\tau)$, and $R(v) = v^\alpha\varphi^*_{0,
  b,\delta}(v)$. 
Using 
\[
  \frac t{e^t-1} = \sumoi n B_n \frac{t^n}{n!}, \quad |t|<2\pi,
\]
where $B_n$ are the Bernoulli numbers (\cite{NIST:10}, 24.2.1),
$(\delta v/\alpha)/(1-e^{-\delta v/\alpha})$ can be evaluated
precisely for very small $v>0$.  On the other,  by
\eqref{e:varphi4-u}--\eqref{e:varphi5}, $R(v)$ can be written as
\[
  R(v)=M_0 [v^\alpha f^*(v)] \cf{0\le v<b\wedge x_\delta} + M_1
  v^{1/p_\delta-\delta} \Vrx_s(s^{1/p_\delta} - v^{1/p_\delta}),
\]
where $M_0$, $M_1$, $x_\delta$, $p_\delta$, and $s$ are explicit positive
constants, and $v^\alpha f^*(v)$ is a polynomial with a positive value
at $v=0$.  As a result, the evaluation using \eqref{e:chi-P-ratio} has
a good precision and was used in the experiments.  Finally, similar to
\Cref{a:ar-A,a:ar-B}, $\ln(e^v-1)$ as well as $(e^v-1, \theta)$ was
output on the last line of \Cref{a:ell-P}.  For $v$ close to 0,
$\ln(e^v-1)$ was evaluated via $\ln v - \ln\frac v{e^v-1}$.

\subsection{Comparison of algorithms}
We compared the running times
of \Cref{a:ar-A,a:ar-B,a:ell-P}
for different
values of $(\alpha,z)$.  For each pair of the algorithms, the basic
routine was as follows.
\begin{center}
  \begin{algorithmic}[1]
    \Require $(\alpha,z)$ and sample size $N$.
    \State Apply each algorithm to draw a sample of size $N$ from
    $\chi\alz$.  Record $T_1$ and $T_2$, the running times of the two
    algorithms, respectively.
    \State Run Kolmogorov--Smirnov (KS) test on the two samples
    and record its $p$-value.
  \end{algorithmic}
\end{center}

\begin{figure}[p]
  \caption{Comparison of \Cref{a:ar-A,a:ar-B} when $z=0.05 \times
    10^{0.6}$.  Left:
    for each $\alpha$, the mean and SD of running time (in s) over 200
    repetitions, with $10^3$ values sampled from $\chi\alz$ by each
    algorithm per repetition.  Right: the mean and SD of $p$-value of
    KS test on the     two samples.  The horizontal line at the bottom
    is $p=0.05$.
  }\label{f:A-B1}
  \begin{center}
    \setlength{\unitlength}{1mm}
    \begin{picture}(152,73)(2,2)
      \put(4,5){
        \includegraphics[width=2.8in] {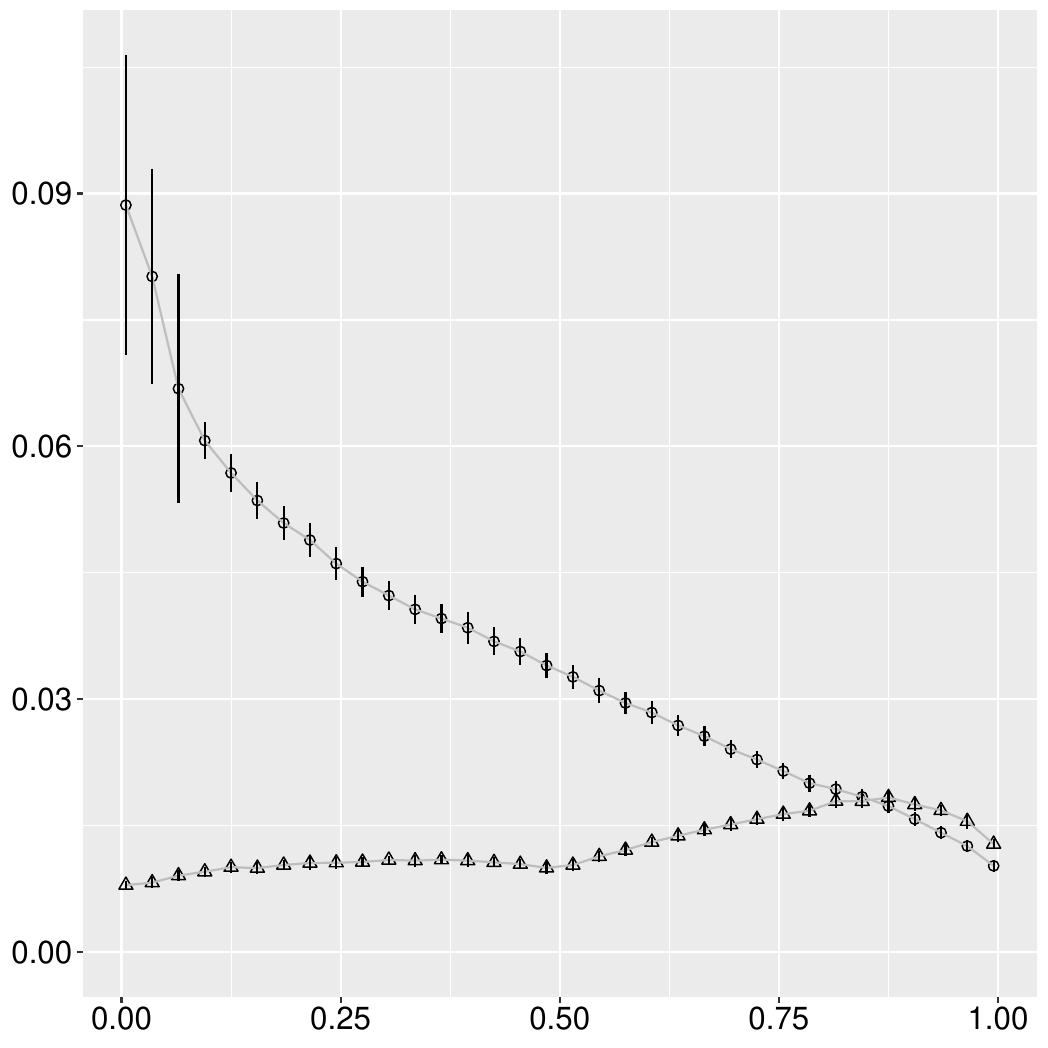}
      } \put(40,2){$\alpha$} \put(0,35){\rotatebox{90}{time (s)}}

      \put(83,5){
        \includegraphics[width=2.8in] {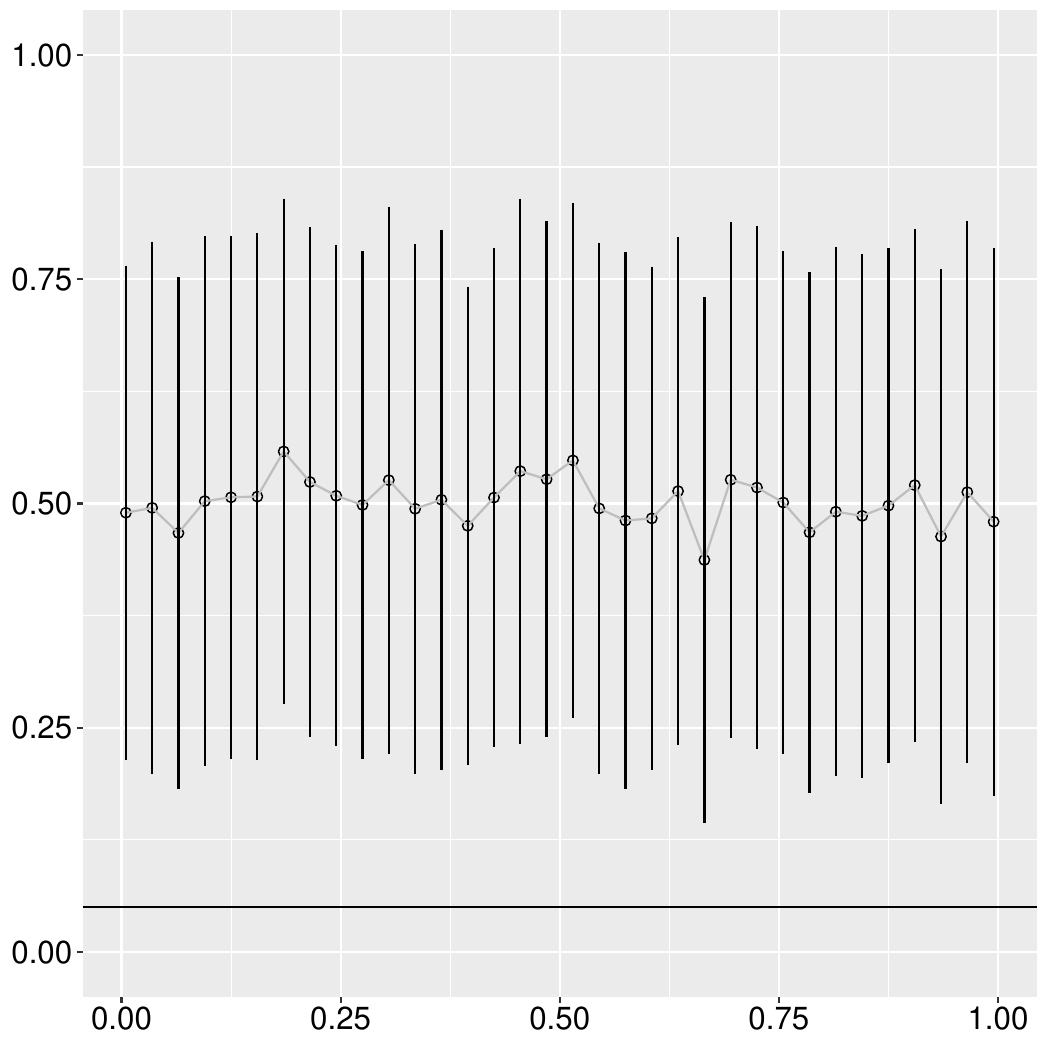}
      } \put(120,2){$\alpha$} \put(80,40){\rotatebox{90}{$p$}}
      \put(30, 38){\scalebox{.8}{\Cref{a:ar-A}}}
      \put(30, 13){\scalebox{.8}{\Cref{a:ar-B}}}
    \end{picture}
  \end{center}
\end{figure}

First, we compared \Cref{a:ar-A,a:ar-B} for all the pairs of $\alpha
= 0.005 + 0.03 i\in [0.005, 0.995]$ and $z = 0.05 \times 10^{0.15j}\in
[0.05, 4]$, where $i,j$ are integers.  For each $(\alpha,z)$, we set
$N=1000$ and repeated the basic routine 200 times.  Then we calculated 
the average and standard deviation (SD) of the running times and
$p$-value of KS test over the repetitions.  In \Cref{f:A-B1,f:A-B2},
the results are displayed as functions of $\alpha$ while $z$ is
fixed.  From \Cref{f:A-B1}, for relatively small $z$, unless $\alpha$
was close to 1, \Cref{a:ar-B} was faster than \Cref{a:ar-A}.  On the
other hand, from \Cref{f:A-B2}, for relatively large $z$,
\Cref{a:ar-A} was faster than \Cref{a:ar-B}.  These observations are
consistent with the complexity analysis in \cref{s:general}.  For all
$(\alpha, z)$ in the experiment, the $p$-values in the figures confirm
that the two algorithms sampled from the same distribution.

\begin{figure}[p]
  \caption{Comparison of \Cref{a:ar-A,a:ar-B} under the same setting
    as in \Cref{f:A-B1} except that $z=0.05\times 10^{1.5}$.}\label{f:A-B2}
  \begin{center}
    \setlength{\unitlength}{1mm}
    \begin{picture}(152,73)(2,2)
      \put(4,5){
        \includegraphics[width=2.8in] {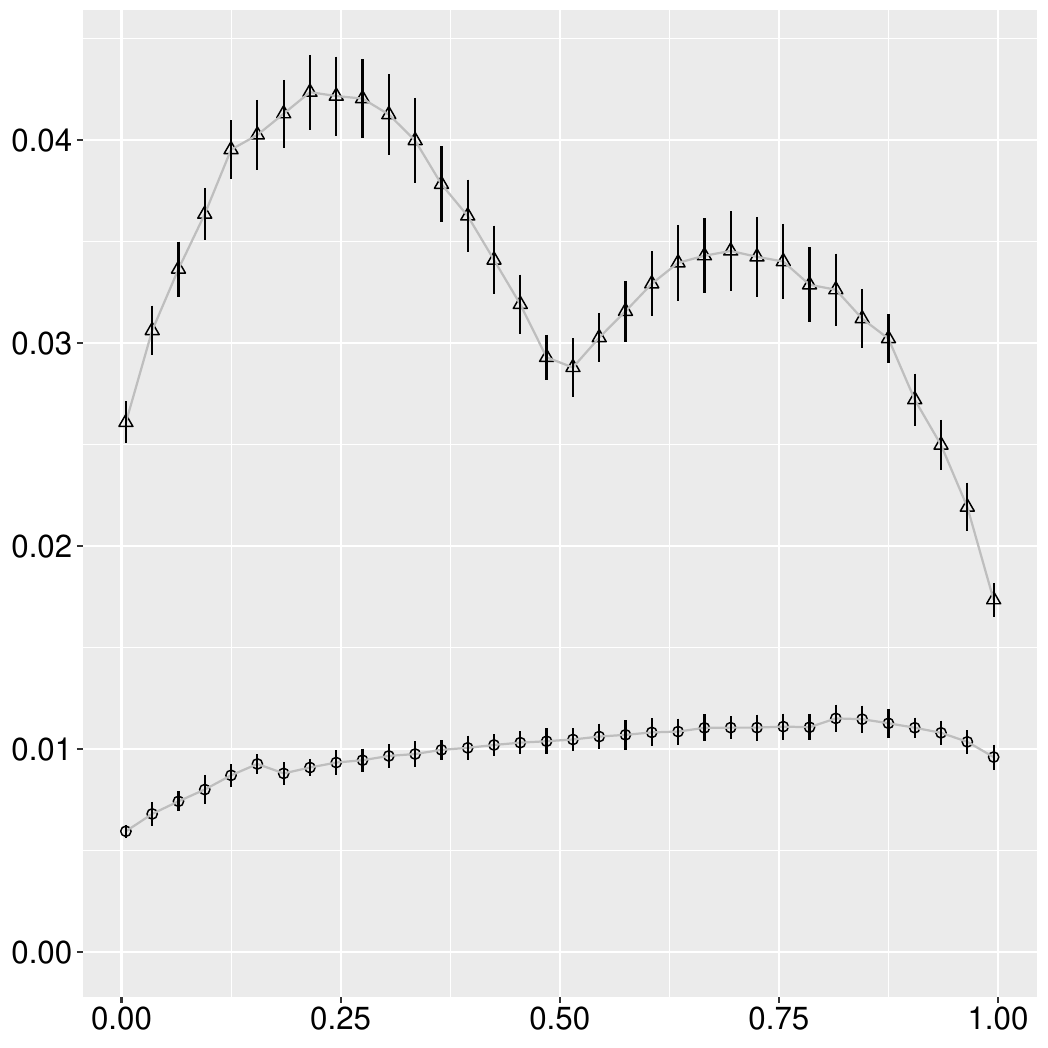}
      } \put(40,2){$\alpha$} \put(0,35){\rotatebox{90}{time (s)}}
      \put(83,5){
        \includegraphics[width=2.8in] {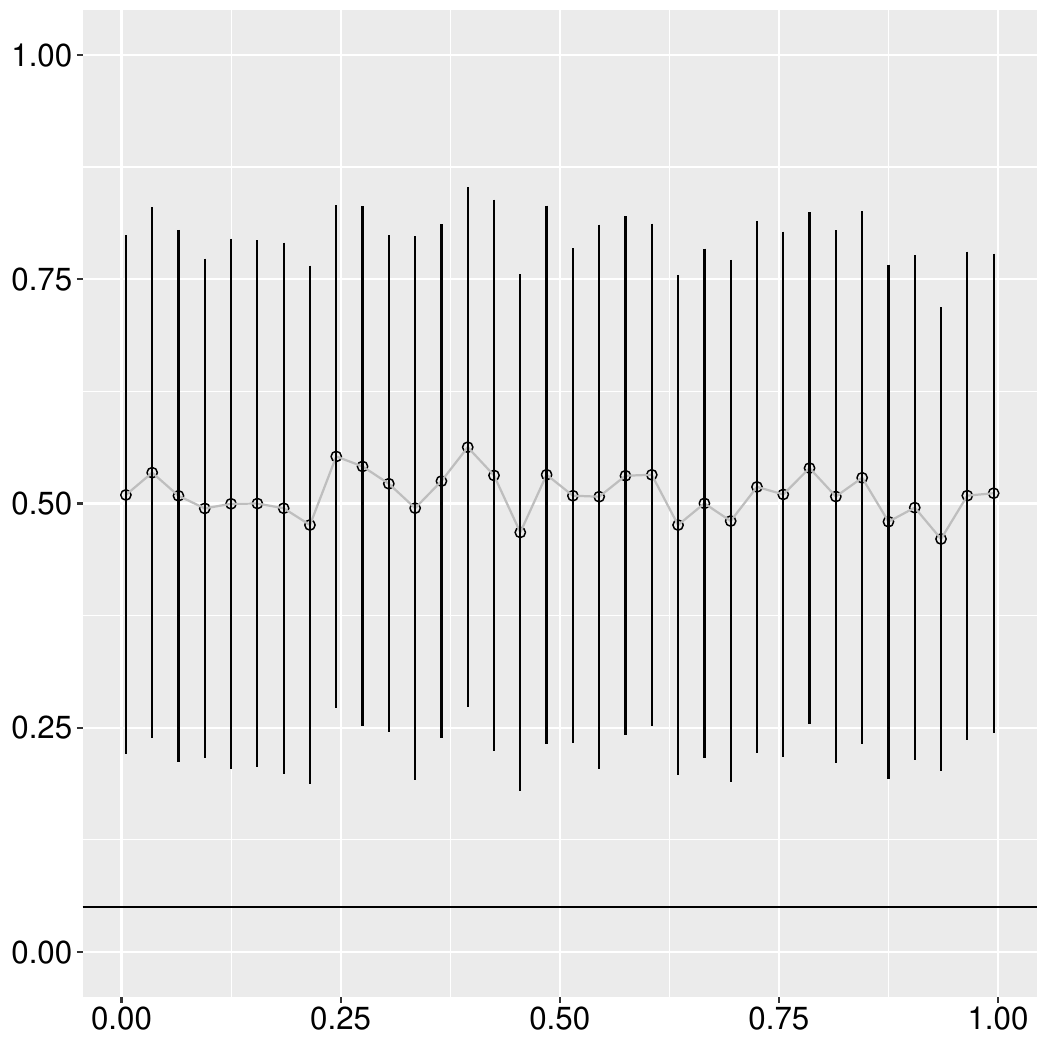}
      } \put(120,2){$\alpha$} \put(80,40){\rotatebox{90}{$p$}}
      \put(35, 21){\scalebox{.8}{\Cref{a:ar-A}}}
      \put(35, 46){\scalebox{.8}{\Cref{a:ar-B}}}
    \end{picture}
  \end{center}
\end{figure}

\begin{figure}[p]
  \caption{Comparison of running times of \Cref{a:ar-A,a:ar-B} to
    sample $2\times 10^5$ values in a single run, with the same values
    of $z$ as in \Cref{f:A-B1} (left) and \Cref{f:A-B2}
    (right).}
  \label{f:A-B3} 
  \begin{center}
    \setlength{\unitlength}{1mm}
    \begin{picture}(152,73)(2,2)
      \put(4,5){
        \includegraphics[width=2.8in] {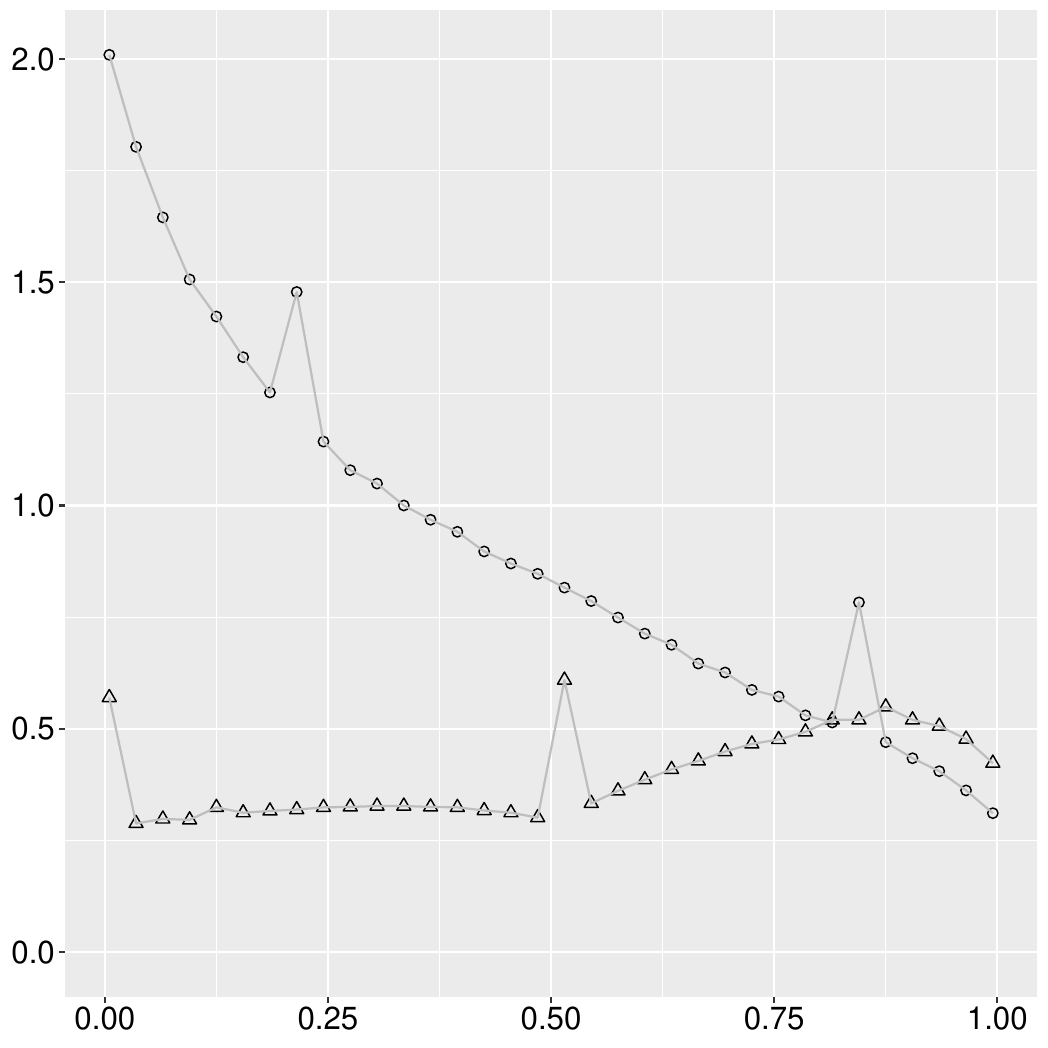}
      } \put(40,2){$\alpha$} \put(0,35){\rotatebox{90}{time (s)}}
      \put(83,5){
        \includegraphics[width=2.8in] {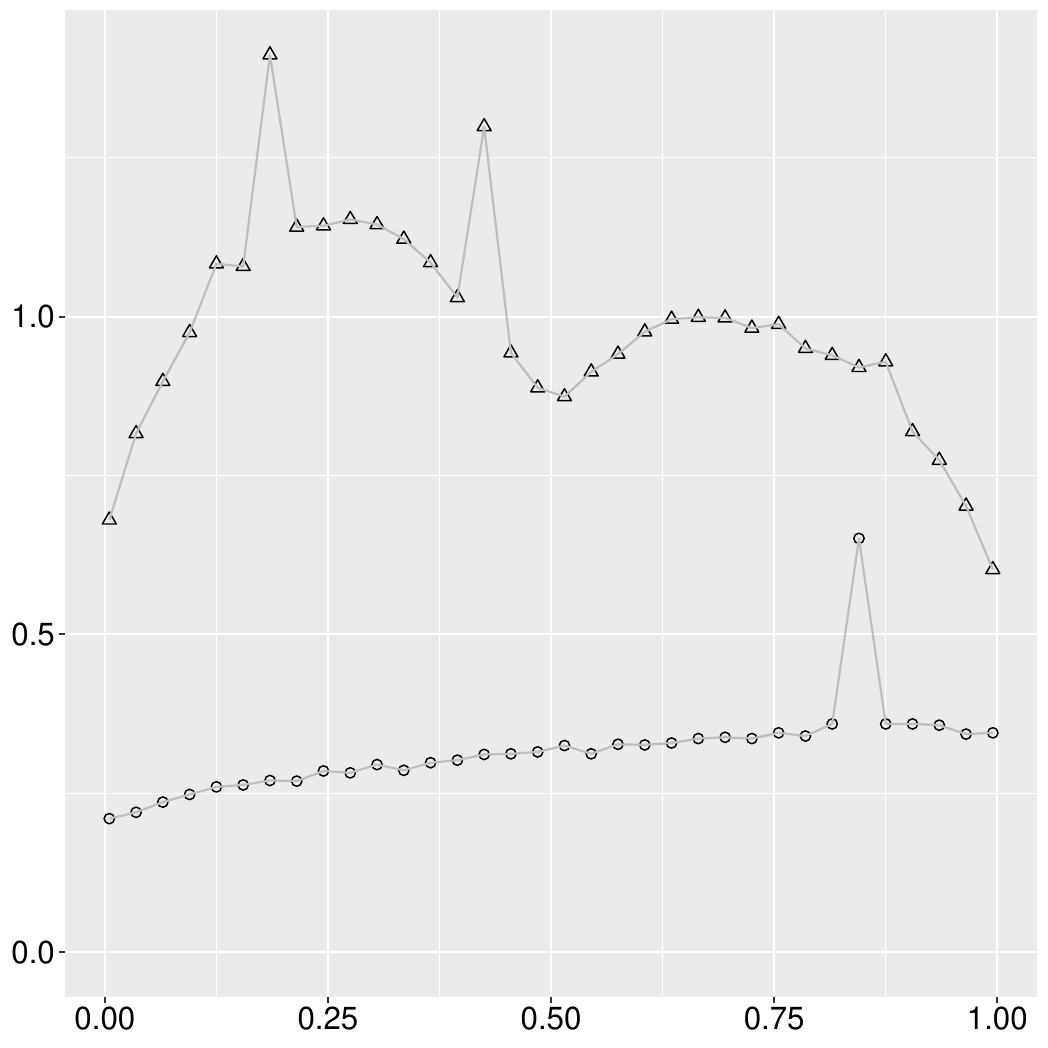}
      } \put(120,2){$\alpha$} \put(80,40){\rotatebox{90}{$p$}}
      \put(30, 45){\scalebox{.8}{\Cref{a:ar-A}}}
      \put(30, 17){\scalebox{.8}{\Cref{a:ar-B}}}
      \put(114, 20.5){\scalebox{.8}{\Cref{a:ar-A}}}
      \put(114, 45){\scalebox{.8}{\Cref{a:ar-B}}}
    \end{picture}
  \end{center}
\end{figure}

In the above experiment, for each pair $(\alpha,z)$, a total of
$2\times 10^5$ values were sampled by each algorithm over 200
repetitions.  From \Cref{f:A-B1,f:A-B2}, the total running time to
do this was relatively long.  However, the algorithms were implemented
in R using vectorized computation.  As \Cref{f:A-B3} shows, if the
algorithms were used to sample $2\times 10^5$ values in a single run,
the total running times were much shorter, while their ratio had the
same pattern as in \Cref{f:A-B1,f:A-B2}.  Next, the algorithms were
used to sample $2\times 10^5$ values from $\chi\alz$ in a single run
with $\alpha\in\{0.905, 0.935, 0.965, 0.995, 0.999\}$ and $z\in
\{10^{-4}, 10^{-3}, 10^{-2}, 2, 3, 4\}$.   Consistent with
\Cref{f:A-B1,f:A-B2}, \Cref{t:A-B4} shows that for very small $z$,
unless $\alpha$ was very close to 1, \Cref{a:ar-B} was much faster
than \Cref{a:ar-A}, while for large $z$, \Cref{a:ar-B} was
significantly faster.

\setlength{\tabcolsep}{4pt}
\begin{table}[t]
  \caption{Comparison of running times of \Cref{a:ar-A,a:ar-B} to
    sample $2\times 10^5$ values in a single run for $\alpha$ close to
    1 and $z$ close to 0 or at least 2.} \label{t:A-B4} 
  \begin{center}
    \small
    \begin{tabular*}{.48\textwidth}{@{\extracolsep{\fill}}|c|c|c|c|c|c|}\hline
      $\alpha$ & 0.905 & 0.935 & 0.965 & 0.995 & 0.999 \\\hline
      \multicolumn{6}{|c|}{$z=10^{-4}$} \\\hline
      $T_1$ &3206 & 2129 & 1140 & 162.7 & 34.89 \\\hline
      $T_2$ & 2.260 & 2.596 & 3.011 & 3.389 & 3.411 \\\hline
      \multicolumn{6}{|c|}{$z=10^{-3}$} \\\hline
      $T_1$ &180.0 & 124.8 & 68.17 & 10.89 & 3.622 \\\hline
      $T_2$ & 1.589 & 1.765 & 1.941 & 1.970 & 1.964 \\\hline
      \multicolumn{6}{|c|}{$z=10^{-2}$} \\\hline
      $T_1$ &9.743 & 7.048 & 4.298 & 1.412 & 1.054 \\\hline
      $T_2$ & 0.966 & 1.064 & 1.092 & 1.076 & 1.040 \\\hline
    \end{tabular*}\hspace{1ex}
    \small
    \begin{tabular*}{.48\textwidth}{@{\extracolsep{\fill}}|c|c|c|c|c|c|}\hline
      $\alpha$ & 0.905 & 0.935 & 0.965 & 0.995 & 0.999 \\\hline
      \multicolumn{6}{|c|}{$z=2$} \\\hline
      $T_1$ & 0.367 & 0.409 & 0.363 & 0.342 & 0.406 \\\hline
      $T_2$ & 1.195 & 1.054 & 1.036 & 0.858 & 0.794 \\\hline
      \multicolumn{6}{|c|}{$z=3$} \\\hline
      $T_1$ & 0.348 & 0.355 & 0.349 & 0.344 & 0.355 \\\hline
      $T_2$ & 2.531 & 2.310 & 2.120 & 1.773 & 1.718 \\\hline
      \multicolumn{6}{|c|}{$z=4$} \\\hline
      $T_1$ & 0.353 & 0.363 & 0.396 & 0.351 & 0.359 \\\hline
      $T_2$ & 6.150 & 5.548 & 4.856 & 4.062 & 3.949 \\\hline
    \end{tabular*}
  \end{center}
\end{table}

Next, we tested the algorithms when $(\alpha,z)$ was close to $(1,0)$,
For such $(\alpha,z)$, the above experiments demonstrated that
\Cref{a:ar-B} was faster than \Cref{a:ar-A}.  Therefore, in this
experiment, we only compared \Cref{a:ar-B} and \Cref{a:ell-P} by using
them to sample $10^5$ values from $\chi\alz$ in a single run for each
pair of $\alpha= 0.995+0.0005\times i\in [0.995, 0.9995]$ and
$z=10^{-5j} \in [10^{-60}, 10^{-5}]$, where $i,j$ are integers.  From
\Cref{f:B-ell1,f:B-ell2}, even for $z$ extremely small, \Cref{a:ar-B}
was faster than \Cref{a:ell-P}.  However, for $z$ ``ultra-small'',
e.g., $z\le 10^{-50}$, the latter became much faster.

To see why it is necessary to consider ultra-small $z$, recall that in
\Cref{a:fp}, given $\alpha$, $z$ is sampled as $\xi/\HAT[\Theta]$,
where $\xi\sim\Dexp(1)$ and $\Theta\sim \Dunif(0, \pi)$.  When
$\alpha$ is very close to 1, there is a significant chance to get
ultra-small $z$.  For example, when $\alpha = 1-10^{-3}$, by
simulation, the chance to get $z\le 10^{-50}$ is approximately
$2.2\times 10^{-2}$.  \Cref{f:B-ell1,f:B-ell2} indicate that for such
$z$, \Cref{a:ell-P} should be used.

\begin{figure}[p]
  \caption{Comparison of \Cref{a:ar-B,a:ell-P} when $z=10^{-50}$.
    Left: running time to sample $10^5$ values in a single
    run.  Right: $p$-value of KS test on the two
    samples.}\label{f:B-ell1}
  \begin{center}
    \setlength{\unitlength}{1mm}
    \begin{picture}(152,73)(2,2)
      \put(4,5){
        \includegraphics[width=2.8in] {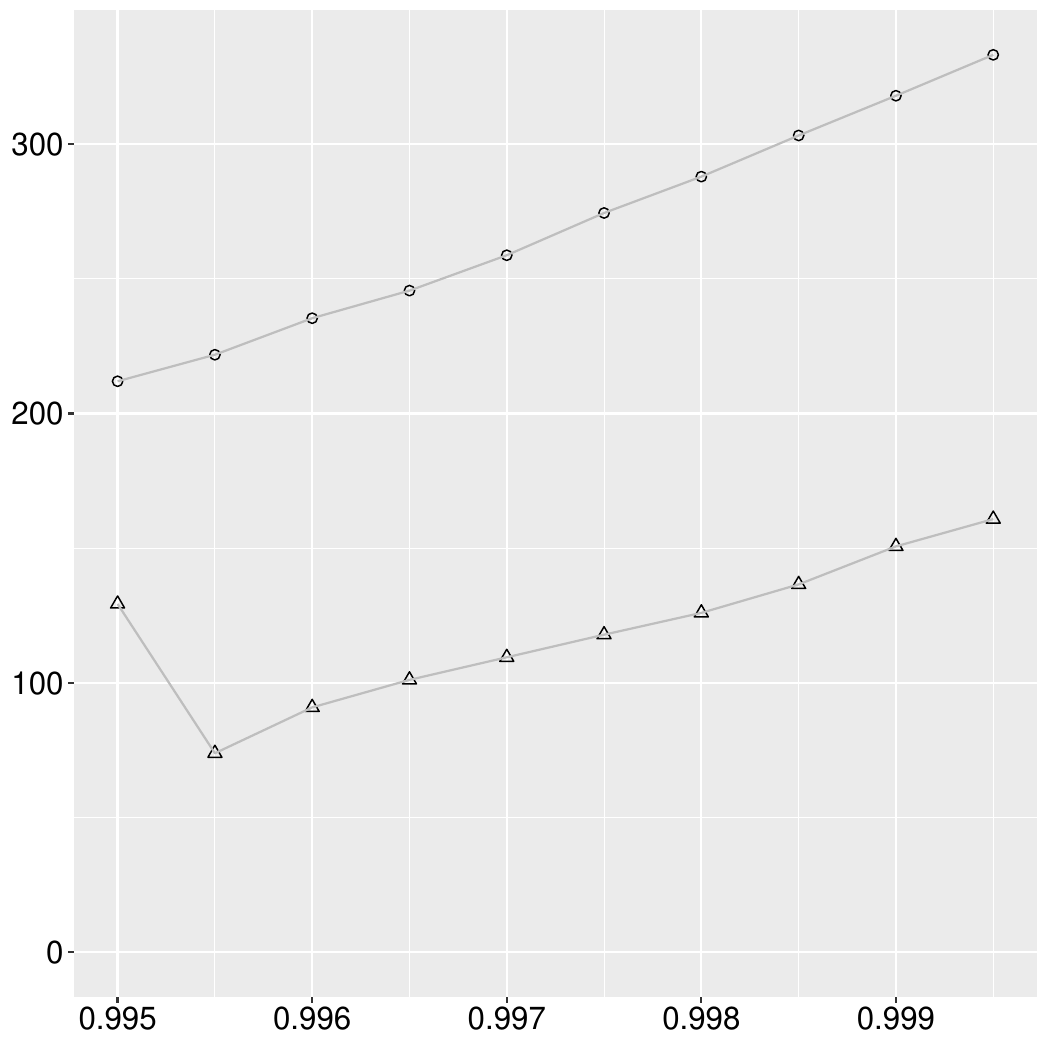}
      } \put(40,2){$\alpha$} \put(0,35){\rotatebox{90}{time (s)}}
      \put(83,5){
        \includegraphics[width=2.8in] {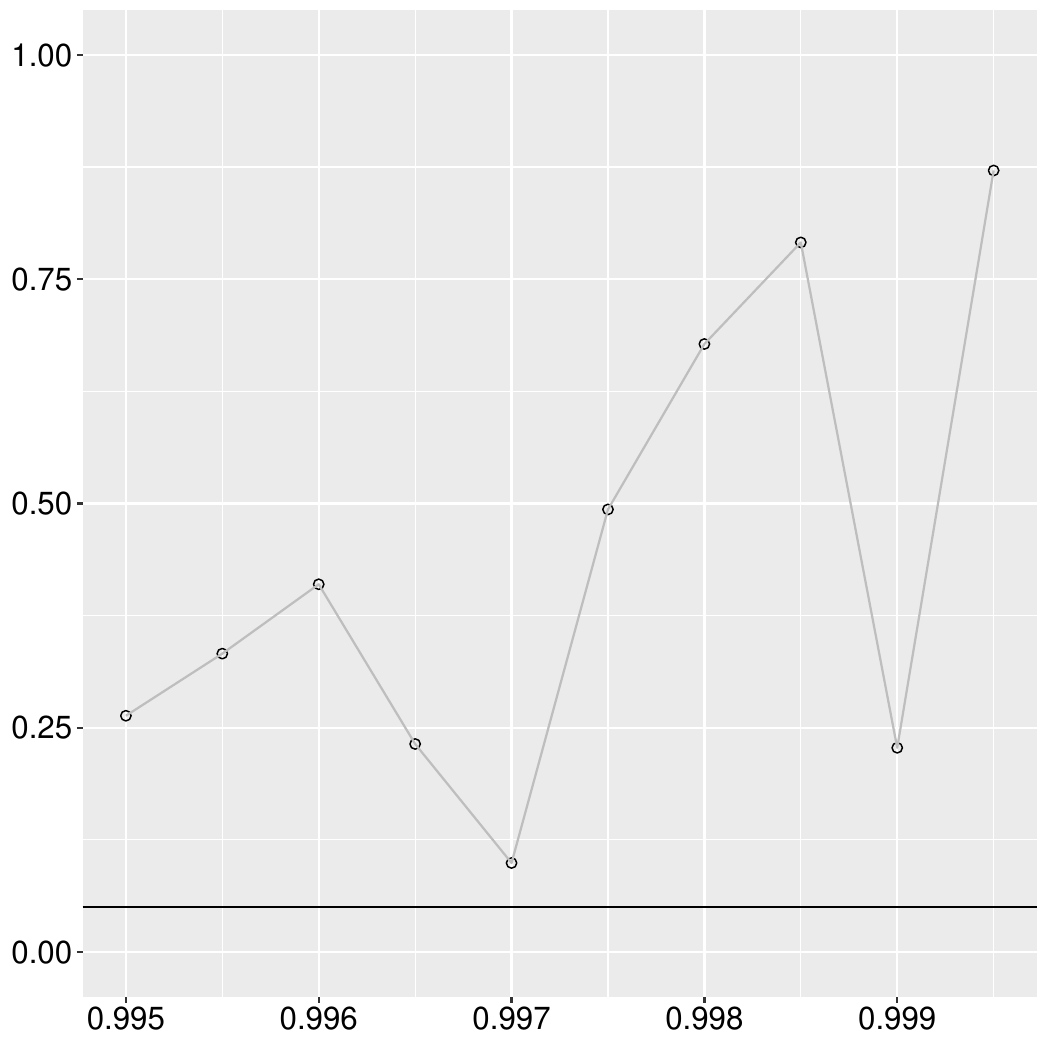}
      } \put(120,2){$\alpha$} \put(80,40){\rotatebox{90}{$p$}}
      \put(35, 26){\scalebox{.8}{\Cref{a:ell-P}}}
      \put(35, 52){\scalebox{.8}{\Cref{a:ar-B}}}
    \end{picture}
  \end{center}
\end{figure}

\begin{figure}[p]
  \caption{Comparison of \Cref{a:ar-B,a:ell-P} under the same setting
    as in \Cref{f:B-ell1} except that $z=10^{-25}$.} \label{f:B-ell2}
  \begin{center}
    \setlength{\unitlength}{1mm}
    \begin{picture}(152,73)(2,2)
      \put(4,5){
        \includegraphics[width=2.8in] {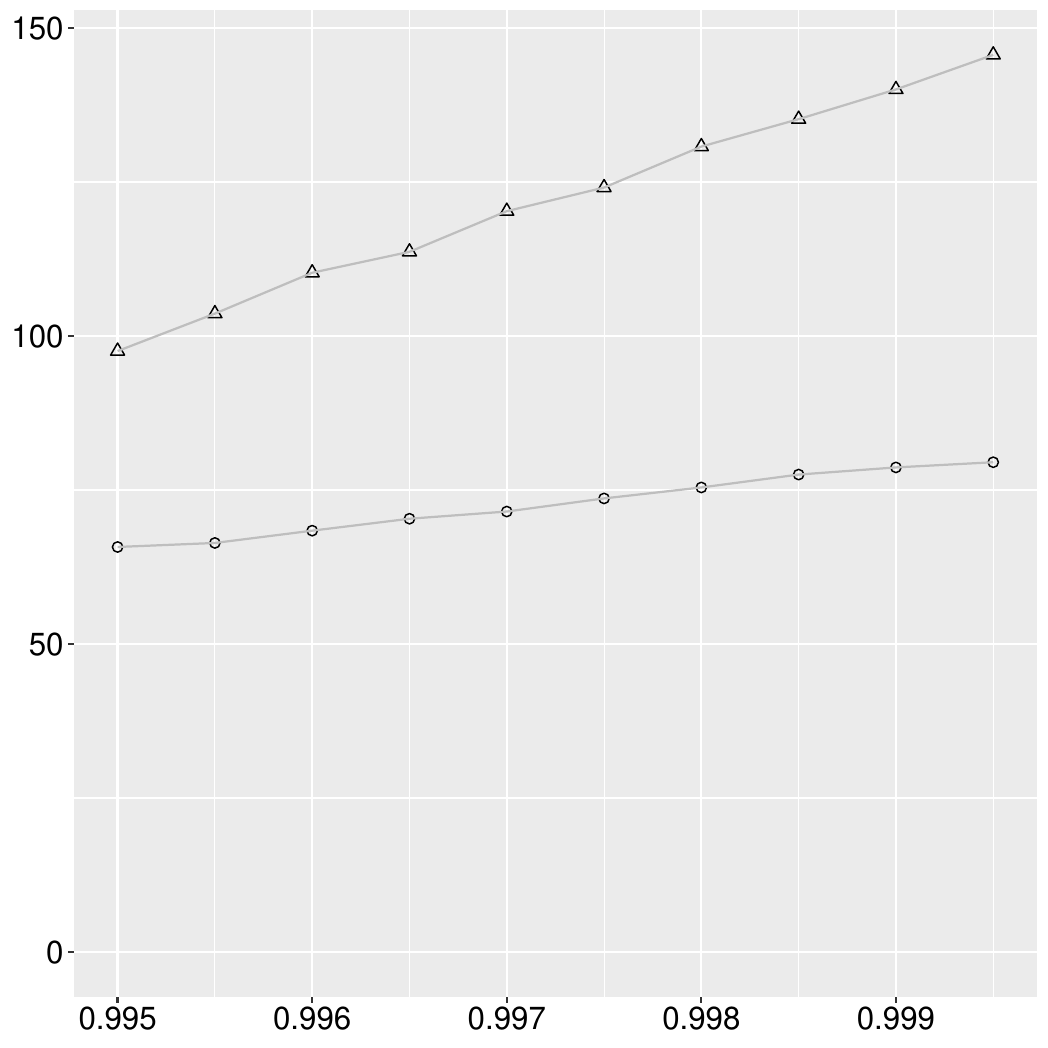}
      } \put(40,2){$\alpha$} \put(0,35){\rotatebox{90}{time (s)}}
      \put(83,5){
        \includegraphics[width=2.8in] {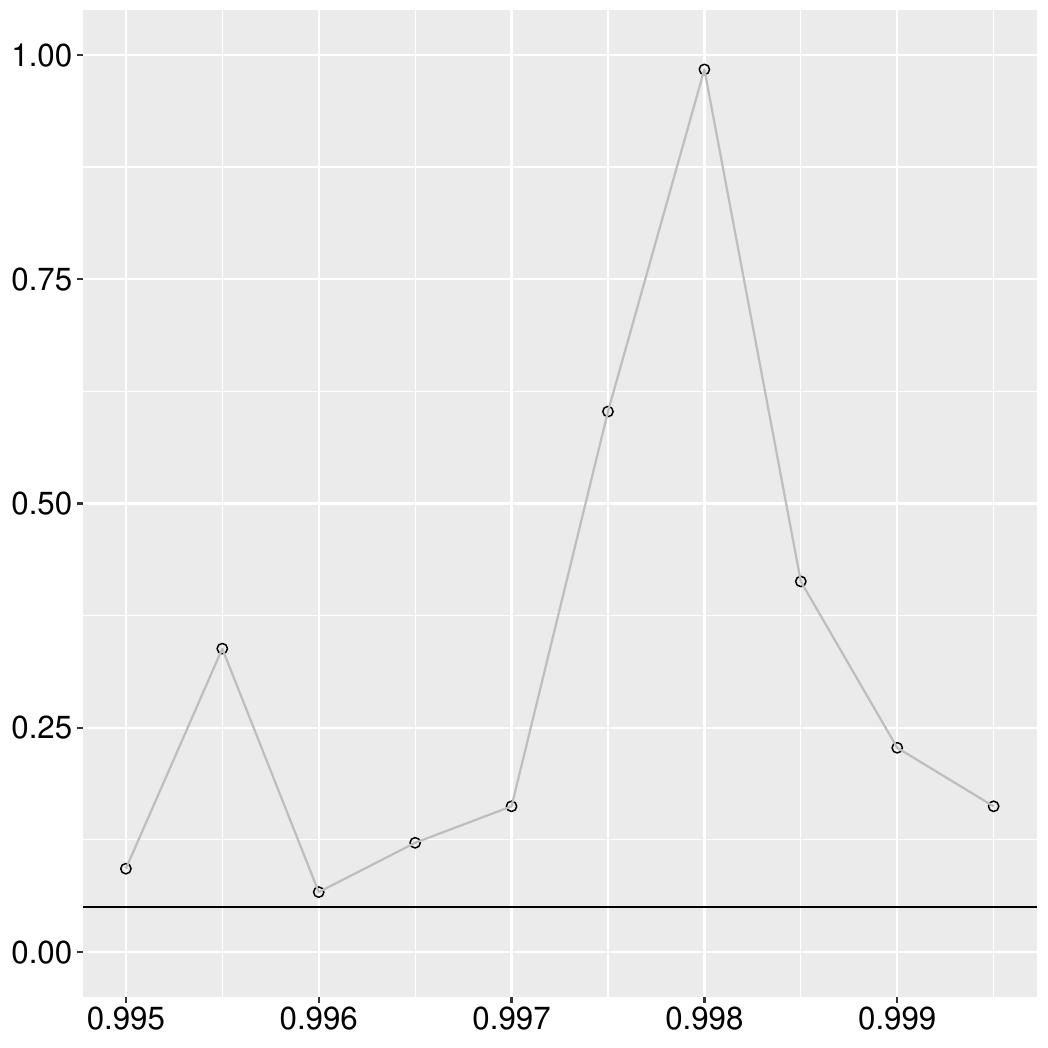}
      } \put(120,2){$\alpha$} \put(80,40){\rotatebox{90}{$p$}}
      \put(35, 36){\scalebox{.8}{\Cref{a:ar-B}}}
      \put(35, 56){\scalebox{.8}{\Cref{a:ell-P}}}
    \end{picture}
  \end{center}
\end{figure}

\begin{algorithm}[t]
  \caption{Modified \Cref{a:fp}} \label{a:fp-modified}
  \begin{algorithmic}[1]
    \State Sample $\Theta\sim\Dunif(0,\pi)$, $\xi\sim\Dexp(1)$,
    $U$ \iid$\sim \Dunif(0,1)$.
    \State $z\gets \xi/\HAT[\Theta]$, $s\gets \alpha
    (\delta/z)^{\delta/\alpha}$, $t\gets B^{-1}(s)$, $y\gets0$,
    $\theta\gets 0$.
    \If{$U\ge-b'(t)/[-b'(t) + \alpha^{-1} t^{-1} b(t)]$}
    \If{$z\ge1$}
    \State Use \Cref{a:ar-A} to sample $(y,\theta)\sim$ normalized
    $\chi_{\alpha,z}$.
    \ElsIf{$\alpha\le0.9$ or $z\ge (1-\alpha)\times 10^{-30}$}
    \State Use \Cref{a:ar-B} to sample $(y,\theta)\sim$ normalized
    $\chi_{\alpha,z}$.
    \Else
    \State Use \Cref{a:ell-P} to sample $(y,\theta)\sim$ normalized
    $\chi_{\alpha,z}$ with $\Delta=0.5$, $\alpha_0=2/3$,
    \State and $\theta_0=6\pi/7$ in the subroutine \Cref{a:qalz}.
    \EndIf
    \EndIf
    \State \Return $(z, t,y, \theta)$.
  \end{algorithmic}
\end{algorithm}

\subsection{Sampling of the first passage}
Based on the above comparisons, we modified \Cref{a:fp} so
that for different $(\alpha,z)$, a relatively efficient one among
\Cref{a:ar-A,a:ar-B,a:ell-P} was used to sample from the normalized
$\chi\alz(y,\theta)$.  The result is \Cref{a:fp-modified}.  Note that,
unlike \Cref{a:fp}, it does not return $x=(1+y)^{-\delta/\alpha}$ and
$(1-x)  V^{-1/\alpha}$, where $V\sim \Dunif(0,1)$.  These two are the
sample values of the undershoot and jump at the first passage.  For
$\alpha$ close to 1, as the sampled $y$ is often extremely close to 0,
to maintain numerical precision, appropriate transforms of the
undershoot and jump should be calculated instead.  This issue is not
directly related to the core issue of the paper, so for simplicity it
was not addressed in the experiments.

\Cref{a:fp-modified} was tested for two barrier functions: 1)
$b(t)\equiv10$, giving $b'(t)\equiv0$ and $B^{-1}(s) =
(10/s)^\alpha$, and 2) $b(t) = (100 - t^{1/\alpha})_+$, giving $b'(t)
= -\alpha^{-1} t^{\delta/\alpha} \cf{t<100^\alpha}$ and $B^{-1}(s) =
[100/(s+1)]^\alpha$.  For each barrier function, all the following
values of $\alpha$ were considered: $0.005 + 0.03 i\in [0.005,
0.995]$, $0.996 + 0.001j \in [0.996, 0.999]$, and $0.9999$.  For each
value, the algorithm was run 100 times with $10^4$ values sampled per
run.  Then the average and SD of the running time were calculated.
\Cref{f:fp} displays the results.  For $\alpha$ not too close to 1,
the average running time per run was well below 1 second with very
small SD.  On the other hand, for $\alpha\ge0.995$, the running time
was substantially longer.  The first half of \Cref{t:fp} shows the
actual values of the mean and SD for $\alpha\in\{0.996, .997, .998,
.999, .9999\}$.  To see how the algorithm scaled up, we then ran a
second experiment for the same set of $\alpha$.  In the experiment,
\Cref{a:fp-modified} was again run 100 times.  However, in each run,
$10^5$ values were sampled.  The second half of \Cref{t:fp} shows the
results.  The experiments indicate that by incorporating
\Cref{a:ar-A,a:ar-B,a:ell-P} appropriately, \Cref{a:fp-modified}
sampled the first passage efficiently.

\begin{figure}[p]
  \caption{Mean and SD of running time (in s) of \Cref{a:fp-modified}
    over 100 repetitions, with $10^4$ values sampled per repetition.
    Left: $b(t)\equiv 10$.  Right: $b(t)
    = (100 - t^{1/\alpha})_+$} \label{f:fp}
  \begin{center}
    \setlength{\unitlength}{1mm}
    \begin{picture}(152,75)(2,2)
      \put(4,8){
        \includegraphics[height=2.8in] {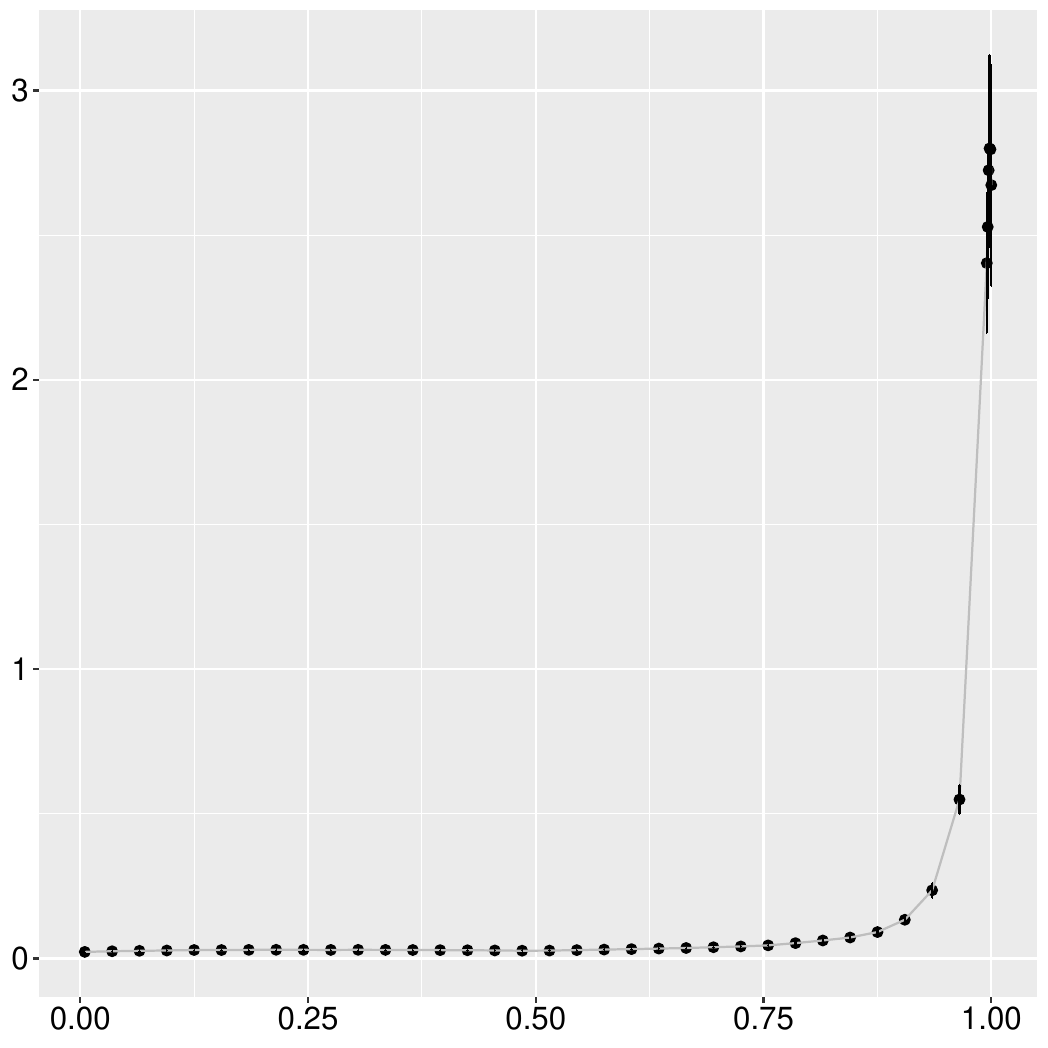}
      }
      \put(41,5){$\alpha$} \put(0,35){\rotatebox{90}{time (s)}}
      \put(83,8){
        \includegraphics[height=2.8in] {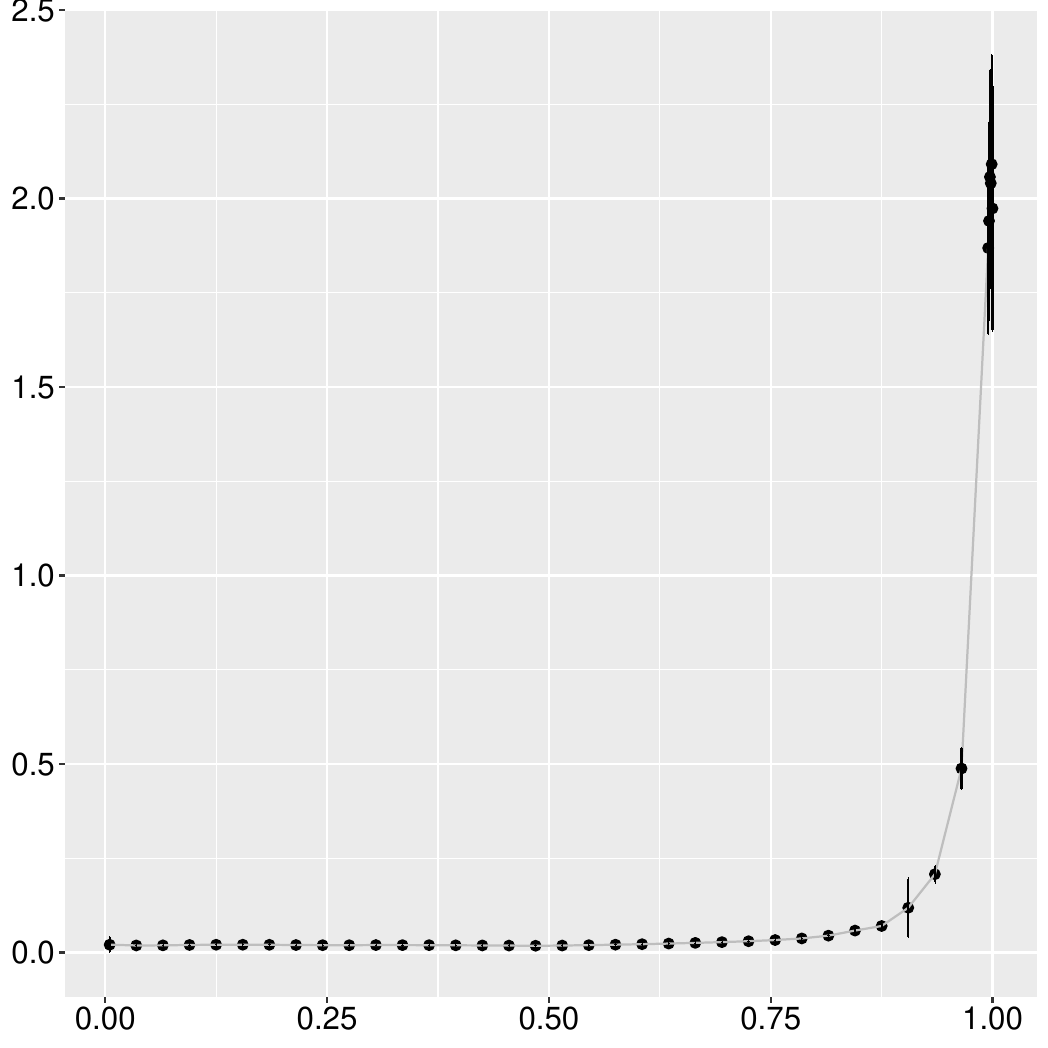}
      }
      \put(122,5){$\alpha$} \put(0,35){\rotatebox{90}{time (s)}}
    \end{picture}
  \end{center}
\end{figure}

\begin{table}[t]
  \caption{Mean and SD of running time (in s) of \Cref{a:fp-modified}
    over 100 repetitions, with $n=10^4$ (top) or
    $n=10^5$ (bottom) values sampled per repetition.  The results for
    $n=10^4$ are also plotted in \Cref{f:fp}.} \label{t:fp}
  \begin{center}
    \small
    \begin{tabular*}{.98\textwidth}{@{\extracolsep{\fill}}|c||c|c|c|c|c|c||c|c|c|c|c|c|}\hline
      & \multicolumn{6}{c||}{$b(t)\equiv10$} &
      \multicolumn{6}{c|}{$b(t)=(100-t^{1/\alpha})_+$}
      \\\hline
      $\alpha$ & .995 & .996 & .997 & .998 & .999 & .9999
      & .995 & .996 & .997 & .998 & .999 & .9999
      \\\hline
      \multicolumn{1}{|c}{} &\multicolumn{12}{c|}{$n=10^4$}
      \\\hline
      Mean & 2.403 & 2.528 & 2.724 & 2.800 & 2.796 & 2.673
      & 1.869 & 1.940 & 2.057 & 2.041 & 2.091 & 1.974
      \\\hline
      SD & 0.241 & 0.251 & 0.267 & 0.323 & 0.294 & 0.349
      & 0.228 & 0.262 & 0.283 & 0.277 & 0.290 & 0.324
      \\\hline
      &\multicolumn{12}{c|}{$n=10^5$}
      \\\hline
      Mean & 7.613 & 8.059 & 8.330 & 8.912 & 8.848 & 8.650
      & 5.792 & 6.006 & 6.269 & 6.308 & 6.310 & 5.869
      \\\hline
      SD & 0.312 & 0.347 &  0.315 & 0.403 & 0.356 &0.388
      & 0.324 & 0.321 & 0.332 & 0.339 & 0.338 & 0.371
      \\\hline
    \end{tabular*}
  \end{center}
\end{table}


\appendix
\section{Proofs} \label{s:appendix}
\begin{proof}[Proof of \Cref{p:HATH1}]\phlabel{p:p:HATH1}
  Recall
  that $\sinc(z)$ is an analytic function and for $|z|<\pi$,
  \begin{align} \label{e:sinc-zeta}
    \ln[\sinc(z)]
    = -\sumoi n \frac{\zeta(2n)}{n} (z/\pi)^{2n};
  \end{align}
  see \cite{NIST:10}, 25.8.8.  
  Then for $c\in (0,1)$ and $|x|<1$, 
  \[
    \frac{\dd}{\dd x}\Sbr{\ln\frac{\sinc(c\pi x)}{\sinc(\pi x)}}
    =
    \frac{\dd}{\dd x} \Sbr{\sumoi n \frac{\zeta(2n)}{n} (1-
      c^{2n}) x^{2n}}
    =
    2\sumoi n \zeta(2n) (1-c^{2n}) x^{2n-1}.
  \]
  Then from \eqref{e:HAT}, it follows that 
  \begin{align*}
    \frac{\dd}{\dd x}[\ln \HAT[\pi x]]
    &=
    2\sumoi n \zeta(2n) (1-\delta^{2n}) x^{2n-1}
    +
    \frac{2\alpha}\delta\sumoi n \zeta(2n) (1-\alpha^{2n}) x^{2n-1}
    \\
    &=
    2\alpha \sumoi n \zeta(2n)\sum^{2n-1}_{k=0} (\alpha^k + \delta^k)
    x^{2n-1}.
  \end{align*}
  From $\ln \HAT[0]=0$, the equality in \eqref{e:lHAT} follows.  Treat
  $\alpha^k + \delta^k$ as a function of $\alpha$.  If
  $k=0$, then $\alpha^k + \delta^k\equiv 2$ on $[\nth2,1]$.  On the
  other hand, if $k\ge1$, then $\alpha^k + \delta^k$ is increasing on
  $[\nth2, 1]$.  As a result, given $\theta$, $\alpha^{-1} \ln\HAT$ is
  less than the limit as $\alpha\to1$, which is $\ln\HAT[\theta][1]$.
  This also yields $\ln \HAT[\theta][1]= \sumoi n \zeta(2n)(2+1/n)
  (\theta/\pi)^{2n} = \theta^2/2 + V_1(\theta)$, as claimed.
  Similarly, given $0<\theta_1 <  \theta_2<\pi$, $\alpha^{-1} [\ln
  \HAT[\theta_2] - \ln \HAT[\theta_1]]$ as a function of $\alpha$ is
  increasing on $[1/2,1]$.  Then it is easy to see that
  $\HAT[\theta_2]/\HAT[\theta_1]$ is increasing on $[1/2,1]$. 
\end{proof} 

\begin{proof}[Proof of \Cref{p:H-sand}]\phlabel{p:p:H-sand}
  Denote $u=\pi-\theta$.  By $\pi - \alpha\theta=\delta\pi + \alpha u$, 
  \begin{align*}
    \hAT
    =
    \frac{\sin(\delta\theta)[\sin(\alpha\theta)]^{\alpha/\delta}}
    {[\sin(\theta)]^{1/\delta}}
    &=
    \frac{\sin(\delta\theta)}{\sin(u)}
    \Sbr{\frac{\sin(\delta\pi + \alpha u)}{\sin(u)}}^{\alpha/\delta}
    \\
    &\hspace{-2cm}=
    \XiAT
    \frac{\delta\theta}{u}
    \Grp{\frac{\delta\pi+\alpha u}u}^{\alpha/\delta}
    =    
    \delta \alpha^{\alpha/\delta} \XiAT 
    \frac\theta u
    \Grp{1+ \frac{\delta\pi}{\alpha u}}^{\alpha/\delta}.
  \end{align*}
  Since $\hAT[0]=\delta \alpha^{\alpha/\delta}$ and $\HAT =
  \hAT/\hAT[0]$, then \eqref{e:HAT-factor} follows.

  From \eqref{e:sinc-zeta} and $u=\pi-\theta$,
  \begin{align*}
    [\delta\ln\XiAT]'
    &=
    \sumoi n \frac{\zeta(2n)}{n} \pi^{-2n}
    \frac{\dd}{\dd\theta}[(\pi-\theta)^{2n} -
    \delta (\delta\theta)^{2n} - \alpha(\pi-\alpha\theta)^{2n}]
    \\
    &=
    2\sumoi n \zeta(2n) \pi^{-2n}[-(\pi-\theta)^{2n-1}-
    \delta^2(\delta\theta)^{2n-1} + \alpha^2(\pi-\alpha\theta)^{2n-1}].
  \end{align*}
  Let $\alpha\in (1/2,1)$.  Then $I:=[(\nth3 + \nth{3\alpha})\pi,
  \pi)\ne\emptyset$.  To show that $\XiAT$ is increasing on $I$, it 
  suffices to show that for each $n\ge1$, $\alpha^2(\pi -
  \alpha\theta)^n \ge (\pi-\theta)^n + \delta^2(\delta\theta)^n$ on
  $I$, or equivalently $\alpha^2 \ge [(\pi-\theta)/(\pi -
  \alpha\theta)]^n + \delta^2[\delta\theta/(\pi - \alpha\theta)]^n$.
  By $\delta=1-\alpha$, it is easy to check that the inequality holds
  for $n=1$.  Then by $\pi-\alpha\theta \ge \max(\pi-\theta, \delta
  \theta) \ge0$, the inequality holds for all $n\ge1$.  The claim on
  $\XIAT$ then easily follows.  Finally, from the above display
  \begin{align*}
    [\ln\XiAT]'
    &\le
    \frac2\delta \sumoi n \zeta(2n) \pi^{-2n}
    [(\pi-\alpha\theta)^{2n-1}-(\pi-\theta)^{2n-1}]
    \\
    &\le
    2\theta\sumoi n \zeta(2n) \pi^{-2n}
    (2 n-1)(\pi-\alpha\theta)^{2n-2}.
  \end{align*}
  By \eqref{e:sinc-zeta}, the \rhs is $\theta[\ln
  \nth{\sinc(t)}]''|_{t=\pi - \alpha\theta}$.  Since
  $[\nth{\sinc(t)}]'' = \nth{(\sin t)^2} - \nth{t^2}$, the inequality
  for $[\ln\XiAT]'$ follows.  The last equality follows by using
  \eqref{e:HAT-factor} and direct calculation.
\end{proof} 

\begin{proof}[Proof of \Cref{p:fpus-squeeze}]\phlabel{p:p:fpus-squeeze}
  Let $k_\alpha(y)=(y^{-\alpha}+1)[1- (y+1)^{-\delta/\alpha}]^\alpha$,
  $y>0$.  Then \eqref{e:fpus-squeeze} holds $\Iff c_{1,\alpha}\le \nth
  {\sup k_\alpha}$ and $c_{2,\alpha}\ge \nth{\inf k_\alpha}$.  Since
  on $(0,\infty)$, $f(y):= (y+1)^{-\delta/\alpha}$ is decreasing and
  convex and takes values in $(0,1)$,
  \begin{align*}
    \sup k_\alpha
    &\le \sup_{y>0} y^{-\alpha}[1- f(y)]^\alpha +
    \sup_{y>0} [1- f(y)]^\alpha
    \\
    &\le \sup_{y>0} \Abs{\frac{f(y)-1}y}^\alpha + 1
    \le |f'(0)|^\alpha + 1 = (\delta/\alpha)^\alpha + 1 =
    \nth{c_\alpha}+1.
  \end{align*}
  Then the first inequality in \eqref{e:fpus-squeeze} holds.  Since
  $c_\alpha$ is a continuous and nonnegative function of $\alpha\in
  (0,1]$ and tends to 1 as $\alpha\to0$, then $\inf_{\alpha\in (0,1)}
  c_{1,\alpha}>0$.  To show the second inequality in
  \eqref{e:fpus-squeeze}, if $\alpha\le1/2$, then $\delta/\alpha\ge1$,
  giving
  \[
    k_\alpha(y)
    \ge
    (y^{-\alpha}+1)[1- (y+1)^{-1}]^\alpha = t^\alpha + (1-t)^\alpha,
    \quad
    \text{with~} t = 1/(y+1).
  \]
  The infimum of the \rhs is attained at $t=0,1$, so $\inf k_\alpha\ge
  1$ and the second inequality in \eqref{e:fpus-squeeze} holds
  if $c_{2,\alpha} = 1$.  On the other hand, if
  $\alpha>1/2$, then by $|f'(y)|\le |f'(0)| =\delta/\alpha < 1$,
  $0<[1-f(y)]/y<1$.  Then 
  \[
    k_\alpha(y)
    =[1-f(y)]^\alpha + \Sbr{\frac{1-f(y)}y}^\alpha
    \ge 1-f(y) + \frac{1-f(y)}y
    =
    \frac{(y+1) [1-f(y)]}{y}.
  \]
  Since $(y+1)[1-f(y)] = (y+1) - (y+1)^{1-\delta/\alpha}$ is convex in
  $y\ge0$ and equal to 0 at $y=0$, the infimum of the \rhs is attained
  at $y=0$ with value $\delta/\alpha$.  Then $\inf k_\alpha\ge
  \delta/\alpha$ and the second inequality \eqref{e:fpus-squeeze}
  holds.
\end{proof}

\begin{proof}[Proof of \Cref{l:icgamma}] \phlabel{p:l:icgamma}
  If $b<a+1$, then
  \[
    e^{-b} \int^b_a x^{c-1}\,\dd x \le \int g_{a,b,c} =
    \int^b_a x^{c-1} e^{-x}\,\dd x \le e^{-a} \int^b_a x^{c-1}\,\dd x.
  \]
  Since $e^{-a} \int^b_a x^{c-1}\,\dd x = G(a,b,c)$ and $e^{-b} >
  e^{-a-1}$, then \eqref{e:icgamma} holds.  If $a+1\le b<a+2$, then
  from $g_{a,b,c} = g_{a,a+1,c} + g_{a+1,b,c}$ and the above proof,
  \eqref{e:icgamma} holds.
  
  Let $b>a+2$.  Put $d=a+2$.  First, suppose $c\in [1,2)$.  Then 
  \begin{align*}
    e^{-1} [G(a,a+1,c) &+ G(a+1,a+2,c)] + \int g_{d,b,c}
    \\
    &\le
    \int g_{a,b,c} \le G(a,a+1,c) + G(a+1,a+2,c) +\int g_{d,b,c}.
  \end{align*}
  From $\int g_{d,b,c} = \int^b_d [x^{c-1} - (c-1) x^{c-2}]
  e^{-x}\, \dd x + \int^b_d (c-1) x^{c-2} e^{-x}\,\dd x =
  d^{c-1} e^{-d} - b^{c-1} e^{-b} + \int^b_d (c-1) x^{c-2}
  e^{-x}\,\dd x$ and 
  \[
    \frac{\int^b_d [x^{c-1} - (c-1) x^{c-2}) e^{-x}\, \dd x}{
      \int^b_d (c-1) x^{c-2} e^{-x}\,\dd x}
    \ge
    \inf_{x\ge d} \frac{[x^{c-1} - (c-1) x^{c-2}] e^{-x}}{
      (c-1) x^{c-2} e^{-x}} \ge1,
  \]
  it follows that $d^{c-1} e^{-d} - b^{c-1} e^{-b} \le \int g_{d,b,c}
  \le 2[d^{c-1} e^{-d} - b^{c-1} e^{-b}]$.  Then \eqref{e:icgamma}
  holds.

  Now let $c\in (0,1)$.  As in the previous case, it suffices to show
  $e^{-1} d^{c-1} (e^{-d} - e^{-b})\le \int g_{d,b,c} \le d^{c-1}
  (e^{-d} - e^{-b})$.  Since $c<1$, $\int g_{d,b,c} \le \int^b_d
  d^{c-1}e^{-x}\,\dd x$, yielding the second inequality.  If $b\le
  d+1$, then by $1-c<1$, $\int g_{d,b,c} = d^{c-1}\int^b_d
  (\frac dx)^{1-c} e^{-x}\,\dd x \ge d^{c-1} \int^b_d(\frac{d}{d+1})
  e^{-x}\,\dd x \ge \frac23 d^{c-1} \int^b_d e^{-x}\,\dd x$.  Since
  $\frac23>1/e$, the first inequality follows.  If $b>d+1$, then by
  $g_{d,b,c} > g_{d,d+1, c}$, $\int g_{d,b,c}\ge (2/3) d^{c-1}
  \int^{d+1}_d e^{-x}\,\dd x \ge e^{-1}d^{c-1} e^{-d}\ge
  e^{-1}d^{c-1}(e^{-d}  - e^{-b})$.
\end{proof} 

\begin{proof}[Proof of \Cref{p:chi-G}]\phlabel{p:p:chi-G}
  Since $f(s)=1-e^{-s}$ is concave, for $s>0$, $f'(s) s \le f(s) \le
  f'(0) s$, implying $e^{-1}(s\wedge1) \le 1-e^{-s}\le s\wedge1$.
  Letting $s = (\delta/\alpha) \ell(y)$,
  \begin{gather*}
    e^{-1} \le \frac{1-(y+1)^{-\delta/\alpha}}{[(\delta/\alpha)
      \ell(y)]\wedge1}\le1\implies\\
    \nth{[(\delta/\alpha)\ell(y)]^\alpha \wedge1}
    \le \nth{[1-(y+1)^{-\delta/\alpha}]^\alpha}
    \le
    \frac{e^\alpha}{[(\delta/\alpha)\ell(y)]^\alpha \wedge1}.
  \end{gather*}
  Since $\nth2(1/a + 1/b) \le(a\wedge b)^{-1}\le 1/a + 1/b$ for
  $a,b>0$,
  \begin{align} \label{e:fpus-sandwich}
    \nth2\Cbr{\frac{c_\alpha}{[\ell(y)]^\alpha}+1}
    \le \nth{[1-(y+1)^{-\delta/\alpha}]^\alpha}
    \le
    e^\alpha\Cbr{\frac{c_\alpha}{[\ell(y)]^\alpha}+1}.
  \end{align}
  Put
  \[
    F^*_\alpha(y,t) = \frac{\cf{0<y<1/t} + \cf{y\ge1/t}
      e^{-ty}}{[\ell(y)]^\alpha} +  \frac{\cf{y>0} e^{-ty}}{c_\alpha}.
  \]
  It is easy to see that for any $t>0$ and $y>0$,
  \[
    \frac{c_\alpha}{[\ell(y)]^\alpha} +  1
    \le
    c_\alpha F^*_\alpha(y,t) e^{ty} \le e
    \Cbr{\frac{c_\alpha}{[\ell(y)]^\alpha} +  1}.
  \]
  Let $G^*\alz(y,\theta) =e^\alpha c_\alpha \HAT e^{-z\HAT}
  F^*_\alpha(y,z\HAT)$.  Then from and \eqref{e:fpus-sandwich} and 
  \eqref{e:fpus-joint},
  \begin{align} \label{e:chi-G*}
    \frac{\chi\alz(y,\theta)}{G^*\alz(y,\theta)}
    =
    \frac{[1-(y+1)^{-\delta/\alpha}]^{-\alpha}}{
       e^\alpha c_\alpha F^*_\alpha(y, z\HAT) e^{z\HAT y}
    } \in [e^{-\alpha-1}/2, 1].
  \end{align}

  Let $r^*_{t,1}(y) = \cf{y\ge1/t}[\ell(y)]^{-\alpha} e^{-ty}$.  By
  \eqref{e:m-r},
  \begin{align} \label{e:F*-F}
    F^*_\alpha(y,t) = m_t(y) + r^*_{t,1}(y) +
    r_{t,2}(y) \le m_t(y) + r_{t,1}(y) + r_{t,2}(y) = F_\alpha(y,t).
  \end{align}
  Then from \eqref {e:F-G} and \eqref{e:chi-G*}, the first
  inequality in \eqref {e:chi-G} follows.  Next, from
  \eqref{e:chi-G*},
  \begin{align*}
    \frac{\iint \chi\alz(y,\theta)\,\dd y\,\dd
      \theta}{\iint G\alz(y,\theta)\,\dd y\,\dd\theta}
    &=
    \frac{\iint \chi\alz(y,\theta)\,\dd y\,\dd
      \theta}{\iint G^*\alz(y,\theta)\,\dd y\,\dd\theta}
    \frac{\iint G^*\alz(y,\theta)\,\dd y\,\dd
      \theta}{\iint G\alz(y,\theta)\,\dd y\,\dd\theta}
    \\
    &
    \ge
    \nth{2e^{\alpha+1}}
    \frac{\int \HAT e^{-z\HAT} [\int F^*_\alpha(y,z\HAT)\,\dd y]
      \dd \theta}
    {\int \HAT e^{-z\HAT} [\int F_\alpha(y,z\HAT)\,\dd y]\dd\theta}
    \\
    &\ge
    \nth{2e^{\alpha+1}}
    \inf_{t>0}
    \frac{\int F^*_\alpha(y,t)\,\dd y}{\int F_\alpha(y,t)\,\dd y}.
  \end{align*}
  For each $t>0$, by \eqref{e:F*-F},
  \[
    \frac{\int F^*_\alpha(y,t)\,\dd y}{\int F_\alpha(y,t)\,\dd y}
    \ge
    \frac{\int r^*_{t,1}(y)\,\dd y}{\int r_{t,1}(y)\,\dd y}
    =
    \frac{\int^\infty_{1/t} e^{-ty}[\ell(y)]^{-\alpha}\,\dd y}
    {\int^\infty_{1/t} e^{-ty}[\ell(1/t)]^{-\alpha}\,\dd y}
    =
    \int^\infty_1 \Sbr{\frac{\ell(1/t)}{\ell(y/t)}}^\alpha
    e^{-y+1}\,\dd y.
  \]
  Since $\ell$ is concave with $\ell(0)=0$ and $y\ge1$, $y\ell(1/t)\ge
  \ell(y/t)$.  Therefore, the \rhs is at least $\int^\infty_1
  y^{-\alpha} e^{-y+1}\,\dd y$.  This combined with the previous
  display then finishes the proof.
\end{proof}

\begin{proof}[Proof of \Cref{p:nodes}]\phlabel{p:p:nodes}
  Since $\alpha\ge\alpha_0$, by \eqref{e:pars}, $\theta_0 > \pi/3 +
  \pi/(3\alpha)$.  From \Cref{p:H-sand}, $\XIAT$ is increasing on
  $[\theta_0, \pi)$, so every ratio in \eqref{e:K-nodes} is at least
  1.  Next, from \eqref{e:KAT}, 
  \[
    \frac{\XIAT[\theta']}{\XIAT}=
    \frac{\theta'/(\pi-\alpha\theta')}{\theta/(\pi-\alpha\theta)}
    \frac{\XiAT[\theta']}{\XiAT}, \quad
    \theta_0 \le\theta\le\theta'\le\pi.
  \]
  By \Cref{p:H-sand}, $0<\ln\frac{\XiAT[\theta']}{\XiAT} \le \theta^*
  [\ln\nth{\sinc(x)}]''|_{x = \pi - \alpha\theta^*} (\theta' - 
  \theta)$ for some $\theta^*\in (\theta, \theta')$.  From \eqref
  {e:sinc-zeta}, $[\ln\nth{\sinc(x)}]''$ is increasing on $(0,\pi)$.
  Let $C$ be as in \Cref{p:nodes}.  Then $\ln\frac{\XiAT[\theta']}
  {\XiAT}\le C(\theta' - \theta)$.  Then
  $\XIAT[f_\alpha(\theta)]/\XIAT \le (1+\Delta/2) \exp[C 
  (f(\theta) - \theta)] \le 1+\Delta$.  Furthermore, $\theta\le
  f_\alpha(\theta) \le \pi$ with either equality holding
  $\Iff\theta=\pi$.  Next show that the number of $\theta_i$ in $(0,
  \pi)$ is $O(\ln(1/\delta))$.  Put
  \[
    \rx = \nth C\ln\frac{1+\Delta}{1+\Delta/2}.
  \]
  Then
  \begin{align} \label{e:falpha}
    f_\alpha(\theta) = \min(\pi, g_\alpha(\theta), \theta+\rx),
    \text{~where~}
    g_\alpha(\theta) = \frac{(1+\Delta/2)\theta}
    {1+\alpha(\Delta/2)\theta/\pi}.
  \end{align}
  From
  \[
    g'_\alpha(\theta) = \frac{1+\Delta/2}
    {[1+\alpha(\Delta/2)\theta/\pi]^2},
  \]
  $g_\alpha(\theta)$ is strictly increasing and concave.  From the
  choice of $\alpha_0$ and $\theta_0$ in \eqref{e:pars},
  $\alpha\theta/\pi \ge \alpha_0\theta_0/\pi>1/2$.  Then
  \[
    g'_\alpha(\theta)\le g'_\alpha(\theta_0)
    <\frac{1+\Delta/2}{1+\alpha\Delta\theta_0/\pi} <1.
  \]
  Then $g_\alpha(\theta) = \theta+\rx$ has at most one solution in
  $(\theta_0, \infty)$.  Suppose there is a solution $\theta_*$.  Then
  $g_\alpha(\theta)\ge \theta+\rx$ on $[\theta_0,\theta_*]$ and
  $g_\alpha(\theta) \le \theta+\rx$ on $[\theta_*,\infty)$.  For every
  $\theta_i \le \theta_*$, $\theta_{i+1} =  \theta_i + \rx$.  As a
  result, the number of $\theta_i$'s in $[\theta_0, \theta_*]$ is
  $O(1/\rx) = O(1)$.  On the other hand, if $\theta_i$ and
  $\theta_{i+1}$ are in $[\theta_*, \pi)$, then $\theta_{i+1} =
  g_\alpha(\theta_i)$, giving $\frac{\theta_{i+1}}{\pi -
    \alpha\theta_{i+1}} = (1+\frac\Delta 2) \frac{\theta_i}{\pi -
    \alpha\theta_i}$.  By induction, if $\theta_{i+k}\in
  [\theta_*,\pi)$, then $\frac{\theta_{i+k}}{\pi - \alpha\theta_{i+k}}
  = (1+\frac\Delta 2)^k \frac{\theta_i}{\pi - \alpha\theta_i}\ge
  (1+\frac\Delta 2)^k \frac{\theta_0}{\pi - \alpha\theta_0}$.
  However, when $\frac\theta{\pi - \alpha\theta}$ is increasing on
  $[\theta_0, \pi]$, and when $\theta=\pi$, $\frac\theta{\pi -
    \alpha\theta} = \nth\delta$.  As a result, $\nth\delta
  \ge(1+\frac\Delta 2)^k \frac{\theta_0}{\pi -\alpha\theta_0}$.  Then
  the number of $\theta_i$ in $[\theta_*, \pi)$ is $O(\ln(1/\delta))$.
  Therefore the number of $\theta_i$'s in $[\theta_0, \pi)$ is
  $O(\ln(1/\delta))$.  The case where $g_\alpha(\theta) = \theta+\rx$
  has no solution on $(\theta_0,\infty)$ can be similarly dealt with.

  Finally, since $\theta_\nal$ is the last $\theta_i$ in $[\theta_0,
  \pi)$, then $g_\alpha(\theta_\nal)\ge\pi$ and $\theta_\nal+\rx
  \ge\pi$.  Then it is easy to get $\XIAT[\pi]/\XIAT\le1+\Delta$.  The
  proof is complete.
\end{proof} 

To prove \Cref{p:N-J-sup}, we need the following two lemmas.
\begin{lemma} \label{l:convex}
  Let $c\ge0$ be a constant and $F(x)>0$ be a continuous function on
  an interval $I$.  Let $I_c = \{x\in I: F(x)\ge c\}$.
  \begin{subenum}
  \item If $F$ is increasing on $I$, then $F^c e^{-F}$ is decreasing
    on $I_c$.
  \item If $F$ is convex, then $F^c e^{-F}$ is log-concave on any
    interval in $I_c$. 
  \end{subenum}
\end{lemma}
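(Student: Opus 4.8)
The plan is to reduce both parts to two elementary properties of the single-variable function $t\mapsto t^c e^{-t}$, thereby avoiding any appeal to differentiability of $F$ (which is only assumed continuous, together with the relevant monotonicity or convexity). Set $\psi(t)=t^c e^{-t}$ and $\phi(t)=\ln\psi(t)=c\ln t-t$ for $t>0$. First I would record: (i) $\psi$ is non-increasing on $[c,\infty)$ — immediate from $\psi'(t)=t^{c-1}e^{-t}(c-t)\le0$ there, equivalently from $\phi'(t)=c/t-1\le0$; and (ii) $\phi$ is concave on $(0,\infty)$, being the sum of the concave function $c\ln t$ and the linear function $-t$ (equivalently $\phi''(t)=-c/t^2\le0$).

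For part (a), note that $F^c e^{-F}=\psi\circ F$. Since $F$ is increasing and continuous, the superlevel set $I_c=\{x\in I: F(x)\ge c\}$ is a subinterval of $I$ (possibly empty or a single point), and by construction $F$ maps $I_c$ into $[c,\infty)$. The composition of the increasing map $F\colon I_c\to[c,\infty)$ with the non-increasing map $\psi\colon[c,\infty)\to\Reals$ is non-increasing: for $x<x'$ in $I_c$ one has $c\le F(x)\le F(x')$, hence $\psi(F(x))\ge\psi(F(x'))$, which is the assertion.

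For part (b), fix an interval $J\subseteq I_c$ and let $x,y\in J$, $\lambda\in[0,1]$, and $w=\lambda x+(1-\lambda)y\in J$. Convexity of $F$ gives $F(w)\le m:=\lambda F(x)+(1-\lambda)F(y)$. Both $F(w)\ge c$ (since $w\in J\subseteq I_c$) and $m\ge c$ (since $F(x),F(y)\ge c$), so both lie in $[c,\infty)$, where by (i) $\phi$ is non-increasing; hence $\phi(F(w))\ge\phi(m)$. By (ii) $\phi$ is concave, so $\phi(m)\ge\lambda\phi(F(x))+(1-\lambda)\phi(F(y))$. Chaining the two inequalities gives $\phi(F(w))\ge\lambda\phi(F(x))+(1-\lambda)\phi(F(y))$, i.e.\ $\ln(F^c e^{-F})=\phi\circ F$ is concave on $J$, which is the claim. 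The boundary case $c=0$, where $I_c=I$, $\psi=e^{-F}$, and $\phi=-F$, is covered verbatim by the same argument.

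There is essentially no hard step here; the only thing to be careful about is to run everything through the ``monotone composed with convex/concave'' rules rather than through second derivatives of $F$, since $F$ is not assumed smooth — this is also what makes it clear that ``decreasing'' is to be read in the weak sense when $F$ is only weakly increasing.
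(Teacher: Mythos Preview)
Your proof is correct and follows essentially the same approach as the paper: both introduce $\phi(t)=c\ln t-t$ (the paper calls it $f$), note that it is concave on $(0,\infty)$ and non-increasing on $[c,\infty)$, and then chain the inequality $\phi(F(w))\ge\phi(m)$ (from monotonicity, using $c\le F(w)\le m$) with $\phi(m)\ge\lambda\phi(F(x))+(1-\lambda)\phi(F(y))$ (from concavity). Your added remarks on why $I_c$ is an interval and on the $c=0$ boundary case are nice touches but do not constitute a different route.
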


\begin{proof}
  Let $f(v) = c\ln v - v$.  Then $f$ is concave and decreasing on
  $[c,\infty)$.  Since $\ln (F^c e^{-F}) = f\circ F$, 1) is plain.
  For 2), fix any
  $[x,y]\subset I_c$ and $a\in (0,1)$.  Let $z=(1-a)x + a y$.  Then
  $c\le F(z) \le (1-a) F(x) + a F(y)$, giving $f(F(z)) \ge
  f((1-a) F(x) + a F(y)) \ge (1-a) f(F(x)) + a f(F(y))$.  As a result,
  $f\circ F$ is concave.  Then $F^c e^{-F}$ is log-concave.
\end{proof} 

\begin{lemma} \label{l:integral}
  Let $f>0$ be a function on an interval $I$ with $0<\int_I
  f<\infty$, and $g$ a strictly increasing and differentiable
  convex function on an interval $J$ with $g(J)=I$.  Then
  \[
    \frac{\int_{J\cap (-\infty,x]} f\circ g}{\int_{I\cap (-\infty,
        g(x)]} f}
    \ge
    \nth{g'(x)} \ge
    \frac{\int_{J\cap [x,\infty)} f\circ g}{\int_{I\cap [g(x),
        \infty)} f} 
  \]
  for any non-boundary point $x\in J$.
\end{lemma}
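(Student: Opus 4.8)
The plan is to convert both bounds into statements about the monotonicity of $g'$ via the change of variables $s=g(t)$. First I would record that at a non-boundary point $x$ the interval $J\cap(-\infty,x)$ is non-degenerate, and that convexity of $g$ makes $g'$ non-negative and non-decreasing on $J$; if $g'(x)$ were $0$, then $g'$ would vanish on all of $J\cap(-\infty,x]$, forcing $g$ to be constant there and contradicting strict monotonicity, so in fact $0<g'(x)<\infty$ and $\nth{g'(x)}$ is well defined.

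Next I would use that $g$ is a strictly increasing differentiable bijection of $J$ onto $I$, hence carries $J\cap(-\infty,x]$ onto $I\cap(-\infty,g(x)]$ and $J\cap[x,\infty)$ onto $I\cap[g(x),\infty)$. Since a convex function is locally Lipschitz, hence absolutely continuous, the change of variables applies and gives
\[
  \int_{I\cap(-\infty,g(x)]} f = \int_{J\cap(-\infty,x]} (f\circ g)\,g',
  \qquad
  \int_{I\cap[g(x),\infty)} f = \int_{J\cap[x,\infty)} (f\circ g)\,g'.
\]
Each right-hand integral equals an integral of $f$ over a subset of $I$, hence is finite, and it is positive because $f>0$ on the relevant non-degenerate interval, so the two ratios in the statement are well defined.

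It then remains to compare $(f\circ g)\,g'$ with $g'(x)\,(f\circ g)$. For the first inequality, $g'(t)\le g'(x)$ for $t\le x$ gives $\int_{J\cap(-\infty,x]}(f\circ g)\,g'\le g'(x)\int_{J\cap(-\infty,x]}(f\circ g)$; combined with the first identity and divided by $g'(x)\int_{I\cap(-\infty,g(x)]}f$ this is exactly the asserted lower bound $\nth{g'(x)}$ (trivially true if the numerator is infinite). For the second inequality, $g'(t)\ge g'(x)$ for $t\ge x$ gives $\int_{J\cap[x,\infty)}(f\circ g)\,g'\ge g'(x)\int_{J\cap[x,\infty)}(f\circ g)$; the left side is finite by the second identity, so the numerator $\int_{J\cap[x,\infty)}(f\circ g)$ is finite too, and rearranging yields the upper bound $\nth{g'(x)}$. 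I do not expect a genuine obstacle; the only point needing care is the validity of the change of variables when $g$ is merely convex and differentiable rather than $C^1$, which is exactly the absolute continuity of convex functions.
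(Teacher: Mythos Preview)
Your argument is correct and follows essentially the same route as the paper: apply the change of variables $s=g(t)$ to rewrite each denominator as $\int (f\circ g)\,g'$ over the corresponding subinterval of $J$, then use the monotonicity of $g'$ (from convexity) to bound this by $g'(x)\int (f\circ g)$ and rearrange. You add welcome care about why $g'(x)>0$, why the denominators are positive and finite, and why the change of variables is licit; the paper's proof is terser but substantively identical.
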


\begin{proof}
  Since $x$ is in the inner part of $J$, then $g(x)$ is in the inner
  part of $I$, so $\int_{I\cap (-\infty, g(x)]} f>0$ and $\int_{I\cap
    [g(x), \infty)} f>0$.  Then $\int_{I\cap (-\infty, g(x)]} f =
  \int_{J\cap (-\infty, x]} (f\circ g) g' \le g'(x) \int_{J\cap
    (-\infty, x]} f\circ g$ and $\int_{I\cap [g(x), \infty} f =
  \int_{J\cap [x, \infty)} (f\circ g) g'\ge g'(x) \int_{J\cap
    [x,\infty)} f\circ g$, yielding the inequalities.
\end{proof} 

\begin{proof}[Proof of \Cref{p:N-J-sup}] \phlabel{p:p:N-J-sup}
  Let $\theta\in [\vartheta_z,\pi)$.  As in \eqref{e:calpha}, let
  $\tau = z\HAT$.  Since $z\JAT\ge1$, by \eqref{e:JH}, $\tau\ge1$.
  Then $(1/\tau)\ln2\le \ln(1+1/\tau) \le1/\tau$, so from the
  expression of $\psi_\alpha$ in \eqref{e:kappa-psi-alpha},
  $\kappa_{\alpha,4}\le \lfrac{\psi_\alpha(\tau)}{\tau^\alpha}
  \le\kappa_{\alpha,3}$.  Define $D\alz(\theta) =
  \kappa_{\alpha,3}\tau^\alpha e^{-\tau}$.  Then from \eqref{e:Dpsi2},
  \begin{align} \label{e:K-D-ratio}
    \frac{\kappa_{\alpha,4}}
    {\kappa_{\alpha,3}}
    \le \frac{Q\alz(\theta)}{D\alz(\theta)}
    = \frac{\psi_\alpha(\tau)}{\kappa_{\alpha,3} \tau^\alpha}
    \le1,
    \quad
    \theta\in [\vartheta_z, \pi),
  \end{align}
  in particular, $Q\alz(\theta) \le D\alz(\theta)$.  From \cite
  {gonzalez:23:arxiva}, $z\HAT$ is strictly increasing and 
  convex.  Then by \Cref{l:convex}, $D\alz(\theta)$ is decreasing and
  log-concave on $[\vartheta_z, \pi)$.  Therefore, by
  \eqref{e:log-concave}, for any $A\ge \int^\pi_{\vartheta_z} D\alz$,
  \begin{align} \label{e:K-D-log-concave}
    D\alz(\theta) \le 2 A \Vrx_s(\theta - \vartheta_z)
    \quad\text{with~} s = D\alz(\vartheta_z)/A.
  \end{align}
  From \Cref{l:integral},
  \begin{align*}
    \int^\pi_{\vartheta_z} D\alz
    \le
    \frac{\kappa_{\alpha,3}}{z\dHAT[\vartheta_z]}
    \int^\infty_{z\HAT[\vartheta_z]} s^\alpha e^{-s}\,\dd s
    &\le
    \frac{\kappa_{\alpha,3}}{z\dHAT[\vartheta_z]}
    \intzi s^\alpha e^{-s}\,\dd s
    \\
    &=
    \frac{\kappa_{\alpha,3}\Gamma(1+\alpha)}{z\dHAT[\vartheta_z]}
    \le \frac{\kappa_{\alpha,3}}{z\dHAT[\vartheta_z]}.
  \end{align*}
  Since $z\dHAT[\vartheta_z] = z\HAT[\vartheta_z] (\ln
  H_\alpha)'(\vartheta_z) = \tau_z \varrho_z$, then 
  letting $A=\kappa_{\alpha,3}/[z\dHAT[\vartheta_z]]$ in \eqref
  {e:K-D-log-concave} gives \eqref{e:z-star}.  Since $z\HAT[\vartheta_z]
  \ge z\JAT[\vartheta_z]\ge1$, then from \Cref{l:integral} and 
  \eqref{e:K-D-ratio}
  \[
    \nth{z\dHAT[\vartheta_z]}
    \le
    \frac{\int^{\vartheta_z}_0 D\alz}
    {\kappa_{\alpha,3} \int^{z\HAT[\vartheta_z]}_{z
        \HAT[0]} s^\alpha e^{-s}\,\dd s}
    \le \frac{\int^{\vartheta_z}_0 Q\alz}{\kappa_{\alpha,4}\int^1_z
      s^\alpha e^{-s}\,\dd s}. 
  \]
  Since $s^\alpha>s$ for $s\in (z,1)$, then \eqref{e:K-sup} follows.

  Finally, for $\alpha\ge\alpha_0$, since $\alpha_0>1/2$, then
  $2\alpha-1\ge0$, so
  \begin{align*}
    \frac{\kappa_{\alpha,3}}{\kappa_{\alpha,4}}
    &=
    \frac
    {(2\alpha-1)/(1-\alpha) + (\ln2)^{-\alpha}(2+e^{-1}) +
      [(1-\alpha)/\alpha]^\alpha}
    {(\ln2)^{1-\alpha}(2\alpha-1)/(1-\alpha) + 2+e^{-1}}
    \\
    &\le
    \frac
    {(2\alpha-1)/(1-\alpha) + (\ln2)^{-\alpha}(2+e^{-1}) +
      [(1-\alpha)/\alpha]^\alpha}
    {(\ln2)^{1-\alpha}(2\alpha-1)/(1-\alpha)}
    \\
    &=
    (\ln 2)^{\alpha-1} + \frac{2+e^{-1}}{\ln
      2}\frac{1-\alpha}{2\alpha-1} +
    \Grp{\frac{1-\alpha}\alpha}^\alpha
    \frac{1-\alpha}{2\alpha-1}.  \tag*{\qedhere}
  \end{align*}
\end{proof}

For the proof of \Cref{p:N-J-sandwich}, we need the following.
\begin{lemma} \label{l:psi-alpha}
  The function  $\psi_\alpha(s)=\kappa_{\alpha,1}[\ell(1/s)]^\delta s +
  \kappa_{\alpha,2}[\ell(1/s)]^{-\alpha} + 1$ in
  \eqref{e:kappa-psi-alpha} is increasing on $(0,\infty)$ and for
  $a\ge1$, $0<\psi_\alpha(as)\le a\psi_\alpha(s)$.
\end{lemma}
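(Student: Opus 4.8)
The plan is to decompose $\psi_\alpha$ into its three summands and treat each separately for both assertions. Write $\psi_\alpha(s)=\kappa_{\alpha,1}\,g(s)+\kappa_{\alpha,2}\,h(s)+1$ with $g(s)=s\,[\ell(1/s)]^\delta$ and $h(s)=[\ell(1/s)]^{-\alpha}$, and recall from \eqref{e:kappa-psi-alpha} that $\kappa_{\alpha,1},\kappa_{\alpha,2}$ are nonnegative in the range $\alpha>1/2$ where the lemma is invoked. It then suffices to show (i) $g$ and $h$ are each increasing on $(0,\infty)$, and (ii) $g(as)\le a\,g(s)$ and $h(as)\le a\,h(s)$ for every $a\ge1$; the constant summand is trivially monotone and satisfies $1\le a\cdot1$, so adding the pieces gives both claims, with strict positivity of $\psi_\alpha$ immediate from the $+1$.

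For the monotonicity in (i), $h$ is immediate: $\ell$ is increasing, so $s\mapsto\ell(1/s)$ is decreasing, hence its $(-\alpha)$-th power is increasing. For $g$ I would compute the logarithmic derivative. Using $\ell'(y)=1/(1+y)$ one gets
\[
  (\ln g)'(s)=\frac1s-\frac{\delta}{s(s+1)\,\ell(1/s)}
  =\frac1s\left[1-\frac{\delta}{(s+1)\,\ln(1+1/s)}\right],
\]
so monotonicity of $g$ reduces to $(s+1)\ln(1+1/s)\ge1>\delta$, which follows from the elementary bound $\ln(1+x)\ge x/(1+x)$ applied at $x=1/s$.

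For the scaling estimate (ii) with $a\ge1$ fixed, the $g$-term is easy: since $as\ge s$ we have $\ell(1/(as))\le\ell(1/s)$, whence $g(as)=a\,s\,[\ell(1/(as))]^\delta\le a\,g(s)$. The genuinely delicate term is $h$, because $h$ is increasing and so $h(as)\ge h(s)$ — here the multiplicative slack $a$ on the right must absorb the discrepancy. Writing $h(as)/h(s)=\big(\ell(1/s)/\ell(1/(as))\big)^\alpha$, I would prove the base is at most $a$, i.e.\ $\ell(1/s)\le a\,\ell(1/(as))$; since $a\,\ell(1/(as))=\ln\big((1+1/(as))^a\big)$, this is precisely Bernoulli's inequality $(1+1/(as))^a\ge 1+a\cdot(1/(as))=1+1/s$. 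Finally $a^\alpha\le a$ for $a\ge1$, $0<\alpha\le1$, so $h(as)\le a^\alpha h(s)\le a\,h(s)$. Summing the three bounds yields $\psi_\alpha(as)\le a\,\psi_\alpha(s)$.

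The one nontrivial point — and the step I expect to be the main obstacle — is exactly this control of the middle term $[\ell(1/s)]^{-\alpha}$ under $s\mapsto as$: a naive termwise comparison fails because that term \emph{increases} in $s$, so one must exploit the multiplicative slack $a\ge1$ via Bernoulli's inequality in the form $a\ln(1+1/(as))\ge\ln(1+1/s)$, together with $a^\alpha\le a$. Everything else reduces to a short derivative computation and the standard bound $\ln(1+x)\ge x/(1+x)$.
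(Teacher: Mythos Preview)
Your proof is correct and follows the same termwise decomposition as the paper. The paper handles the two nontrivial steps slightly differently, and its shortcuts are worth knowing. For the monotonicity of $g(s)=s[\ell(1/s)]^\delta$, rather than differentiating, the paper observes that concavity of $\ell$ with $\ell(0)=0$ makes $\ell(t)/t$ decreasing, hence $s\ell(1/s)$ increasing, and then factors $g(s)=[s\ell(1/s)]^\delta s^\alpha$ as a product of two increasing functions. For the scaling bound on $h(s)=[\ell(1/s)]^{-\alpha}$, the paper uses the same concavity fact: since $\ell(t)/t$ is decreasing and $1/(as)\le 1/s$, one gets $\ell(1/(as))\ge(1/a)\ell(1/s)$ directly, whence $h(as)\le a^\alpha h(s)<a\,h(s)$. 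Your Bernoulli argument proves the identical inequality $a\,\ell(1/(as))\ge\ell(1/s)$, so the two routes coincide at that point; the paper's version just unifies both steps under the single observation that $\ell(t)/t$ is decreasing. Your explicit note that $\kappa_{\alpha,1}\ge0$ requires $\alpha\ge1/2$ is a welcome clarification the paper leaves implicit.
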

\begin{proof}
  Since $\ell(t) = \ln(1+t)$ is concave in $t>0$, $\ell(t)/t =
  [\ell(t)-\ln(0)]/t$ is decreasing.  Letting $t=1/s$, $s\ell(1/s)$ is
  increasing, so $[\ell(1/s)]^\delta s = [s\ell(1/s)]^\delta s^\alpha$
  is increasing.  Then it is easy to see that $\psi_\alpha(s)$ is
  increasing.  Let $a\ge1$.  Clearly, $[\ell(1/(as))]^\delta(as)\le
  [\ell(1/s)]^\delta (as)$.  By concavity, $\ell(1/(as))\ge
  (1/a)\ell(1/s)$, so $[\ell(1/(as))]^{-\alpha}
  \le[(1/a)\ell(1/s)]^{-\alpha} < a [\ell(1/s)]^{-\alpha}$.  Then it
  follows that $\psi_\alpha(as)\le a\psi_\alpha(s)$.
\end{proof} 

\begin{proof}[Proof of \Cref{p:N-J-sandwich}]
  \phlabel{p:p:N-J-sandwich}
  From \eqref{e:Dpsi2}, \eqref{e:JH}, and \Cref{l:psi-alpha}, 
  \[
    Q\alz(\theta) = \psi_\alpha(z\HAT) e^{-z\HAT} \le
    \psi_\alpha(z\HAT) \le (1+\Delta)\psi_\alpha(z\JAT).
  \]
  On the other hand,  since $z\JAT<1$, $z\HAT<1+\Delta$, yielding
  \[
    Q\alz(\theta) \ge 
    e^{-1-\Delta} \psi_\alpha(z\HAT) \ge e^{-1-\Delta} 
    \psi_\alpha(z\JAT).  \eqno{\qedhere}
  \]
\end{proof}

\begin{proof}[Proof of \Cref{l:N-J-mid}]\phlabel{p:l:N-J-mid}
  Put $\phi(\theta) =\psi_\alpha(z\JAT)$, where again $\psi_\alpha$ is
  given in \eqref{e:Dpsi2}.  We will show that for each $n$, provided
  $I_n = [\theta_n\wedge\vartheta_z, \theta_{n+1}\wedge \vartheta_z)\ne
  \emptyset$,
  \begin{align} \label{e:cP}
    v_n P_n(t) \le \laz\#\phi(t)\le P_n(t),  \quad t\in \laz(I_n) =
    (c_n, d_n],
  \end{align}
  where $c_n$ and $d_n$ are as in \eqref{e:lambda-I}.  Once
  \eqref{e:cP} is established, it yields
  \[
    v_n \laz^{-1}\# P_n(\theta) \le \phi(\theta) \le \laz^{-1}\#
    P_n(\theta),
    \quad \theta\in I_n,
  \]
  which combined with \eqref{e:psi-phi} gives \eqref{e:Q-P}.

  To start, on the interval $[\theta_0, \vartheta_z)$, by $\laz(\theta)=
  \ell\Grp{\nth{z\JAT}}$,  
  \[
    \phi(\theta)
    = \kappa_{\alpha,1}[\laz(\theta)]^\delta
    (e^{\laz(\theta)}-1)^{-1} +
    \kappa_{\alpha,2}[\laz(\theta)]^{-\alpha} + 1.
  \]
  Since $\laz(\theta)$ is strictly decreasing and piecewise
  differentiable, for $t\in\laz([\theta_0, 
  \vartheta_z))$,
  \begin{align} \label{e:lambda-pw-phi}
    \laz\#\phi(t)
    = \phi(\laz^{-1}(t)) |(\laz^{-1})'(t)|
    = [\kappa_{\alpha,1} t^\delta(e^t-1)^{-1} + \kappa_{\alpha,2}
    t^{-\alpha} + 1]\times |(\laz^{-1})'(t)|.
  \end{align}
  We have  $\laz^{-1}(t) = \IJAT[z^{-1}(e^t-1)^{-1}]$.  Put
  \begin{align} \label{e:t-theta-s}
    \theta = \laz^{-1}(t), \quad s=[z(e^t-1)]^{-1}.
  \end{align}
  Then $s = \JAT$, $t=\ell(1/(zs))$, and by the chain rule, 
  \begin{align} \label{e:lambda}
    |(\laz^{-1})'(t)| = \dIJAT[s]\times \frac{e^t}{z(e^t-1)^2}.
  \end{align}
  Let $t\in \laz(I_n) = (c_n,
  d_n]$.  Then $\theta\in [\theta_n \wedge \vartheta_z,
  \theta_{n+1}\wedge\vartheta_z) = [\theta_n, \theta_{n+1})\cap [0,
  \vartheta_z)\ne\emptyset$.  By \eqref{e:JAT-upper},
  \begin{align} \label{e:JAT-inv}
    s = \XIAT[\theta_n]\Sbr{1+\frac{\delta\pi}{\alpha (\pi -
        \theta)}}^{1/\delta}, \quad
    \theta= \IJAT[s] = \pi-\frac{\delta\pi/\alpha}
    {[\frac s{\XIAT[\theta_n]}]^\delta-1}.
  \end{align}
  As a result,
  \begin{align} \label{e:JAT-inv-diff}
    \dIJAT[s]
    &= \nth{[\XIAT[\theta_n]]^\delta}
    \frac{(\delta\pi/\alpha) \delta s^{\delta-1}}
    {([\frac s{\XIAT[\theta_n]}]^\delta-1)^2}
    \\ \label{e:JAT-inv-diff2}
    &= \frac{(\alpha/\pi) s^{-\alpha}}{[\XIAT[\theta_n]]^\delta}(\pi
    - \theta)^2.
  \end{align}
  On the other hand, from \eqref{e:mid-range-consts}, $z s =
  z \JAT < z\JAT[(\theta_{n+1}\wedge \vartheta_z)-] = a_n$.   Then
  \begin{align} \label{e:expt}
    e^t - 1 = e^t(1 - e^{-t})  = e^t(1+zs)^{-1} \ge e^t/(1 + a_n).
  \end{align}
  Then from \eqref{e:t-theta-s} and \eqref{e:lambda},
  \begin{align} \label{e:inverse-diff}
    \begin{split}
      z^{-1}[\dIJAT[s]] e^{-t}<
      &
      |(\laz^{-1})'(t)|< z^{-1}(1+a_n)^2[\dIJAT[s]] e^{-t}, \\
      z^{-1} e^{-t} <
      &s < (1+a_n) z^{-1} e^{-t}.
    \end{split}
  \end{align}

  First, suppose $\theta_n\le (1-\delta/\alpha)\pi$.  Clearly,
  $\pi-\theta \le \pi-\theta_n$.  On the other hand, by
  \Cref{p:nodes}, $\theta_{n+1} = f_\alpha(\theta_n)$, so from
  \eqref{e:falpha},
  \[
    \pi-\theta \ge \pi - (\theta_{n+1}\wedge \vartheta_z)
    \ge\pi - g_\alpha(\theta_n)
    =\pi - \frac{(1+\Delta/2)\theta_n}{1 + \alpha
      (\Delta/2)\theta_n/\pi}
    = \frac{\pi - \theta_n - \delta(\Delta/2)\theta_n}{1+ \alpha
      (\Delta/2) \theta_n/\pi}.
  \]
  From $\delta(\Delta/2) \theta_n \le (\alpha-\delta)(\Delta/2)(\pi -
  \theta_n)$ and $\alpha(\Delta/2) \theta_n/\pi\le (\Delta/2)(\alpha -
  \delta)$, the \rhs is at least
  \[
    \frac{(\pi - \theta_n)[1- (\alpha - \delta)\Delta/2]}{
      1+ (\alpha - \delta)\Delta/2}
    \ge
    \frac{(\pi - \theta_n)(1- \Delta/2)}{1+ \Delta/2}.
  \]
  Then by \eqref{e:JAT-inv-diff2} and $[(1-\Delta/2)/(1+\Delta/2)]^2
  >(1-\Delta)/(1+\Delta)$,
  \[
    \Grp{\frac{1-\Delta}{1+\Delta}}
    \frac{(\alpha/\pi) s^{-\alpha}}{[\XIAT[\theta_n]]^\delta}
    (\pi - \theta_n)^2
    \le
    \dIJAT[s]\le \frac{(\alpha/\pi) s^{-\alpha}}
    {[\XIAT[\theta_n]]^\delta} (\pi - \theta_n)^2.
  \]
  This combined with \eqref{e:lambda-pw-phi}, \eqref{e:t-theta-s},
  \eqref{e:expt}, and \eqref{e:inverse-diff} yields two inequalities.
  First,
  \begin{align*}
    \laz\#\phi(t)
    &\le
    [\kappa_{\alpha,1} t^\delta(e^t-1)^{-1} +\kappa_{\alpha,2}
    t^{-\alpha} + 1] 
    \times \frac{(1+a_n)^2(\alpha/\pi) s^{-\alpha}}{z
      [\XIAT[\theta_n]]^\delta} (\pi - \theta_n)^2 e^{-t}
    \\
    &\le
    [(1+a_n)\kappa_{\alpha,1} t^\delta e^{-t}+ \kappa_{\alpha,2}
    t^{-\alpha} + 1] \times
    \frac{(1+a_n)^2(\alpha/\pi) z^\alpha e^{\alpha t}} 
    {z [\XIAT[\theta_n]]^\delta}(\pi - \theta_n)^2 e^{-t}
    \\
    &=
    \frac{(1+a_n)^2(\alpha/\pi)} {[z\XIAT[\theta_n]]^\delta}
    (\pi - \theta_n)^2 
    \times [(1+a_n)\kappa_{\alpha,1} t^\delta e^{-t}  +
    \kappa_{\alpha,2} t^{-\alpha} + 1] e^{-\delta t}.
  \end{align*}
  By \eqref{e:mid-range-P1}, the \rhs is $P_n(t)$.  On the other hand,
  \begin{align*}
    \laz\#\phi(t)
    &\ge
    [\kappa_{\alpha,1} t^\delta(e^t-1)^{-1} +\kappa_{\alpha,2}
    t^{-\alpha} + 1] 
    \times \Grp{\frac{1-\Delta}{1+\Delta}}
    \frac{(\alpha/\pi) s^{-\alpha}}{z[\XIAT[\theta_n]]^\delta}
    (\pi - \theta_n)^2 e^{-t}
    \\
    &\ge
    (\kappa_{\alpha,1} t^\delta e^{-t}+\kappa_{\alpha,2} t^{-\alpha}+
    1) \times\Grp{\frac{1-\Delta}{1+\Delta}}
    \frac{(\alpha/\pi) z^\alpha e^{\alpha t}} 
    {(1+a_n)^\alpha z[\XIAT[\theta_n]]^\delta}(\pi - \theta_n)^2
    e^{-t}
    \\
    &=\Grp{\frac{1-\Delta}{1+\Delta}}
    \frac{(\alpha/\pi)} {(1+a_n)^\alpha [z\XIAT[\theta_n]]^\delta}
    (\pi - \theta_n)^2 \times (\kappa_{\alpha,1} t^\delta e^{-t}  +
    \kappa_{\alpha,2} t^{-\alpha} + 1) e^{-\delta t}.
  \end{align*}
  The \rhs is at least $\Grp{\frac{1 - \Delta}{1 +\Delta}} (1+
  a_n)^{-\alpha-3} P_n(t) = v_n P_n(t)$, where $v_n$ is defined as in
  \Cref{l:N-J-mid}.  Then \eqref{e:cP} holds in the case where
  $\theta_n \le (1-\delta/\alpha)\pi$.

  Next, suppose $\theta_n>(1-\delta/\alpha)\pi$.  By
  \eqref{e:JAT-inv},
  \[
    \Sbr{\frac s{\XIAT[\theta_n]}}^\delta
    =1 + \frac{\delta\pi}{\alpha(\pi-\theta)}
    \ge1 + \frac{\delta\pi}{\alpha(\pi-\theta_n)}
    = \frac{\pi - \alpha\theta_n}{\alpha(\pi-\theta_n)},
  \]
  and so
  \[
    \Sbr{\frac s{\XIAT[\theta_n]}}^\delta -1
    \ge\Sbr{\frac s{\XIAT[\theta_n]}}^\delta
    \Cbr{1-\frac{\alpha(\pi-\theta_n)}{\pi - \alpha\theta_n}}
    = b_n \Sbr{\frac s{\XIAT[\theta_n]}}^\delta,
  \]
  where $b_n$ is given in \eqref{e:mid-range-consts}; note that
  $b_n>1/2$.  Then by \eqref{e:JAT-inv-diff},
  \[
    \nth{[\XIAT[\theta_n]]^\delta}
    \frac{(\delta\pi/\alpha) \delta s^{\delta-1}}
    {[\frac s{\XIAT[\theta_n]}]^{2\delta}}
    \le
    \dIJAT[s]
    \le
    \nth{[\XIAT[\theta_n]]^\delta}
    \frac{(\delta\pi/\alpha) \delta s^{\delta-1}}
    {b^2_n [\frac s{\XIAT[\theta_n]}]^{2\delta}},
  \]
  which combined with \eqref{e:t-theta-s} gives
  \[
    (\delta^2\pi/\alpha)[\XIAT[\theta_n]]^\delta
    [z(e^t-1)]^{\delta+1}
    \le
    \dIJAT[s]
    \le
    \nth{b^2_n}
    (\delta^2\pi/\alpha)[\XIAT[\theta_n]]^\delta
    [z(e^t-1)]^{\delta+1}.
  \]
  This combined with \eqref{e:lambda-pw-phi} and
  \eqref{e:inverse-diff} yields two inequalities.  First,
  \begin{align*}
    \laz\#\phi(t)
    &\le
    [\kappa_{\alpha,1} t^\delta(e^t-1)^{-1} +\kappa_{\alpha,2}
    t^{-\alpha} + 1] 
    \times
    \frac{(1+a_n)^2}{z b^2_n}(\delta^2\pi/\alpha)
    [\XIAT[\theta_n]]^\delta
    [z(e^t-1)]^{\delta+1} e^{-t}
    \\
    &=
    \frac{(1+a_n)^2}{b^2_n}(\delta^2\pi/\alpha)
    [z\XIAT[\theta_n]]^\delta
    [\kappa_{\alpha,1} t^\delta(e^t-1)^\delta +(\kappa_{\alpha,2}
    t^{-\alpha} + 1) (e^t-1)^{\delta+1}] e^{-t}
    \\
    &\le
    \frac{(1+a_n)^2}{b^2_n}(\delta^2\pi/\alpha)
    [z\XIAT[\theta_n]]^\delta
    [\kappa_{\alpha,1} t^\delta e^{-\alpha t} +(\kappa_{\alpha,2}
    t^{-\alpha} + 1) e^{\delta t}].
  \end{align*}
  From \eqref{e:mid-range-P2}, the \rhs is $P_n(t)$.  On the other
  hand,
  \begin{align*}
    \laz\#\phi(t)
    &\ge
    [\kappa_{\alpha,1} t^\delta(e^t-1)^{-1} +\kappa_{\alpha,2}
    t^{-\alpha} + 1] 
    \times
    z^{-1}(\delta^2\pi/\alpha)[\XIAT[\theta_n]]^\delta
    [z(e^t-1)]^{\delta+1} e^{-t}
    \\
    &=
    (\delta^2\pi/\alpha)[z\XIAT[\theta_n]]^\delta
    [\kappa_{\alpha,1} t^\delta(e^t-1)^\delta +(\kappa_{\alpha,2}
    t^{-\alpha} + 1) (e^t-1)^{\delta+1}] e^{-t}.
  \end{align*}
  Since $\theta<\vartheta_z$, $z\JAT<1$.  Then by \eqref{e:t-theta-s},
  $e^t = 1+\nth{z\JAT}\ge2$, giving $e^t-1\ge e^t/2$.  Then
  \begin{align*}
    \laz\#\phi(t)
    &\ge
    (\delta^2\pi/\alpha)[z\XIAT[\theta_n]]^\delta
    [\kappa_{\alpha,1} (t/2)^\delta e^{\delta t} +(\kappa_{\alpha,2}
    t^{-\alpha} + 1) (e^t/2)^{\delta+1}] e^{-t}
    \\
    &=
    2^{-\delta}
    (\delta^2\pi/\alpha)[z\XIAT[\theta_n]]^\delta
    [\kappa_{\alpha,1} t^\delta e^{-\alpha t} +(\kappa_{\alpha,2}
    t^{-\alpha} + 1) e^{\delta t}/2].
  \end{align*}
  The \rhs is at least $\frac{b^2_n}{2^{1+\delta} (1+a_n)^2} P_n(t) =
  v_n P_n(t)$.  Then \eqref{e:cP} again holds. 
\end{proof}

\begin{proof}[Proof of \Cref{p:N-J-mid2}] \phlabel{p:p:N-J-mid2}
  If
  $\theta_n\le (1-\delta/\alpha)\pi$, then by \eqref{e:mid-range-P1},
  for $t\in (c_n, d_n]$,
  \[
    P_n(t)
    =
    \pi_n
    [(1+a_n)\kappa_{\alpha,1} t^\delta e^{-(1+\delta) t}
    + \kappa_{\alpha,2} t^{-\alpha} e^{-\delta t} +  e^{-\delta t}]
    \cf{c_n<t\le d_n}
  \]
  From the definition of $g_{a,b,c}$ in \eqref{e:gabc},
  \begin{align*}
    &
    t^\delta e^{-(1+\delta) t}\cf{c_n < t\le d_n}
    \\
    &=(1+\delta)^{-\delta} [(1+\delta) t]^\delta
    e^{-(1+\delta)t}\cf{(1+\delta)c_n < (1+\delta) t\le (1+\delta)
      d_n} \\
    &=(1+\delta)^{-\delta} g_{(1+\delta)c_n,
      (1+\delta) d_n, 1+\delta}((1+\delta) t)
  \end{align*}
  and
  \[
    t^{-\alpha} e^{-\delta t}\cf{c_n < t\le d_n}
    = \delta^\alpha (\delta t)^{-\alpha} e^{-\delta t} \cf{\delta
      c_n < \delta t \le \delta d_n}
    = \delta^\alpha g_{\delta c_n, \delta
      d_n,\delta}(\delta t).
  \]
  Then by \eqref{e:icgamma-env}, $P_n(t)\cf{c_n < t\le d_n} \le
  P^*_n(t)$ and
  \[
    \nth{2e} \int 
    \laz^{-1}\#P^*_n
    =
    \nth{2e} \int 
    P^*_n
    \le
    \int P_n(t)\cf{c_n < t\le d_n}
    = \int_{I_n}\laz^{-1}\#P.
  \]
  This combined with \eqref{e:Q-P} yields the proof of
  \eqref{e:mid-range-P-env} when $\theta_n\le (1-\delta/\alpha)\pi$.

  On the other hand, if $\theta_n> (1-\delta/\alpha)\pi$, then by
  \eqref{e:mid-range-P2}, for $t\in (c_n, d_n]$,
  \[
    P_n(t)=\pi_n (\kappa_{\alpha,1} t^\delta e^{-\alpha t} +
    \kappa_{\alpha,2} t^{-\alpha}e^{\delta t} + e^{\delta t})
    \cf{c_n<t\le d_n}.
  \]
  Similar to the above argument, for $c_n<t\le d_n$, $t^\delta
  e^{-\alpha t} =\alpha^{-\delta} g_{\alpha c_n, \alpha d_n,
    1+\delta}(\alpha t)$ and $t^{-\alpha} e^{\delta t} =\delta^\alpha
  \varphi_{\delta c_n, \delta d_n,\delta}(\delta t)$, where
  $\varphi_{a,b,c}$ is defined in \eqref{e:varphi-def}.   Then the
  proof of \eqref{e:mid-range-P-env} in this case follows from 
  \eqref{e:varphi*}, \eqref{e:icgamma-env}, and
  \eqref{e:mid-range-P2}.
\end{proof}

\begin{proof}[Proof of \Cref{l:lambda-inv}] \phlabel{p:l:lambda-inv}
  The first expression in \eqref{e:lambda-inv} is a repeat of
  \eqref{e:JAT-inv}, and the second one follows from
  \eqref{e:JAT-upper} and direct calculation.
\end{proof}
\end{document}